\documentclass[12pt,reqno,oneside,british,english]{amsart}
\ifdefined\pdfoutput \pdfoutput=1 \fi
\usepackage{amsmath,amssymb,amsthm}
\usepackage[left=1.0in,right=1.0in,top=1.0in,bottom=1.0in]{geometry}
\usepackage[mathscr]{eucal}
\usepackage{mathrsfs}
\usepackage{bm}
\usepackage[round]{natbib}
\usepackage{babel}
\usepackage[colorlinks=true,linkcolor=blue,citecolor=blue]{hyperref}
\usepackage{float}
\usepackage{multirow}
\usepackage{indentfirst}
\usepackage{mathtools}
\usepackage{graphicx}
\usepackage{pdflscape}
\usepackage{subcaption}
\usepackage{rotating}
\usepackage{calc,xspace}
\usepackage{dcolumn}
\usepackage{mwe}
\usepackage{caption}
\usepackage{natbib}
\usepackage{color}
\usepackage{url}
\usepackage{epstopdf}
\usepackage[shortlabels]{enumitem}
\usepackage[flushleft]{threeparttable}
\usepackage{tikz}
\usetikzlibrary{shapes,arrows,calc,positioning,arrows,decorations.pathreplacing}
\allowdisplaybreaks[3]
\newcolumntype{d}[1]{D{.}{.}{#1}}
\theoremstyle{plain}
\newtheorem{proposition}{Proposition}
\newtheorem{lemma}{Lemma}
\newtheorem{corollary}{Corollary}

\newtheorem{theorem}{Theorem}

\theoremstyle{definition}

\newtheorem{assumption}{Assumption}

\theoremstyle{remark}

\DeclareMathOperator{\diag}{diag}

\DeclareMathOperator{\Var}{Var}

\DeclareMathOperator{\sgn}{sgn}
\DeclareMathOperator{\rank}{rank}

\newcommand{\divby}[2]{#1 \mathord{\left/ \vphantom{#1 #2} \right.}
 \kern-\nulldelimiterspace #2}

\newcommand{\Real}{\mathbb R}

\newcommand{\eps}{\varepsilon}

\newcommand{\norm}[1]{\left\Vert#1\right\Vert}

\renewcommand{\thesection}{\Roman{section}}

\newcounter{year}
\newcommand{\runinhead}[1]{\medskip\noindent\emph{#1.}\ }

\newcommand{\cA}{\mathscr{A}}

\newcommand{\cS}{\mathscr{S}}
\newcommand{\cL}{\mathscr{L}}
\newcommand{\cI}{\mathscr{I}}
\newcommand{\cN}{\mathscr{N}}

\newcommand{\cQ}{\mathscr{Q}}

\makeatletter
\newcommand{\vastsmall}{\bBigg@{3}}
\newcommand{\vast}{\bBigg@{4}}
\newcommand{\Vast}{\bBigg@{5}}
\makeatother

\usepackage{booktabs}
\usepackage{dcolumn}
\begin{document}

\title[Algorithm-Driven SVARs]{Algorithm-Driven SVARs:\\ Navigating the Wilderness of Big Data}
\author{Yucheng Yang and Tao Zha}
\thanks{\textsl{JEL classification}: C55, C53, E00, E52, E44, C32\\
\indent Yang: University of Zurich and SFI; Zha: Federal Reserve Bank of Atlanta, Emory University, and NBER. We thank Lukas Hack, Tom Sargent, and participants at various seminars and conferences for helpful discussions, as well as Tong Xu for his initial involvement in this project. Hongyi Fu provided outstanding research assistance. The views expressed herein are those of the authors and do not necessarily reflect those of the Federal Reserve Bank of Atlanta, the Federal Reserve System, or the National Bureau of Economic Research.}
\keywords{Information set, model construction, out-of-sample selection, external instruments, household credit, housing production, expected default risk, monetary policy transmission}
\date{4th August 2026}

\begin{abstract}
Every SVAR result is conditional on two choices: the restrictions that identify the shock and the variables on which they operate. The literature disciplines the first; the second is chosen by hand. We develop a Bayesian methodology that constructs information sets, uses an out-of-sample criterion, and retains the largest system it admits. Under recursive identification, output rises with housing production rather than household credit alone. For monetary policy, an anchor-free joint Bayesian proxy SVAR with multiple instruments strengthens the credit spread channel. A core system augmented with the selected corporate spread identifies expected default risk as a potent transmission margin.
\end{abstract}
\maketitle

\thispagestyle{empty}
\newpage \pagenumbering{arabic}

\section{Introduction}\label{sec:intro}

Structural vector autoregressions (SVARs) produce empirical evidence, under identifying restrictions, about the dynamic effects of economic shocks. That evidence often becomes the object that theory aims to explain. Many models are built to match impulse responses estimated from SVARs \citep{CEE99,sSimstZha06}. Every SVAR result is conditional on two choices: the restrictions that identify the shock and the variables on which those restrictions operate. The first choice is formalized, defended, debated, and, most important, disciplined. The second is usually left to the researcher. The information set behind the impulse responses is therefore not preliminary housekeeping. It is part of the evidence.

Under standard practice, a researcher begins with an economic question, chooses a set of variables by hand, imposes identifying restrictions on the resulting system, and studies the impulse responses. This practice was understandable when data were limited and computation was costly. It is no longer compelling when researchers observe hundreds of macroeconomic, financial, credit, housing, labor market, price, and interest rate series. A hand-built information set is not a neutral starting point. It is an undisciplined model selection decision. Omitting relevant variables can alter the innovations to which identification is applied and the impulse responses that theory is subsequently asked to explain. In the wilderness of big data, hand selection is less credible across researchers, while estimation of a system containing every possible candidate is impractical.

This paper develops an algorithm-driven Bayesian methodology for tackling this issue head-on by constructing the information set of an SVAR. The methodology rests on a clear division of labor. The researcher chooses the economic question, the core variables that define it, and the identifying restrictions. The algorithm constructs the information set. Bayesian out-of-sample (OOS) evidence selects its complexity. The method is not an unsupervised search that asks the data to choose the question or the identification. A study of household credit must begin with household credit and the outcomes the question is meant to address. A study of monetary policy must begin with the policy and transmission variables needed for that question. The data discipline the surrounding information set only after the economic question and its objects have been specified.

The methodology consists of two components. First, the model construction procedure searches over variables and factors outside the core that contain predictive information for the composite disturbances of the current system. It is an iterative procedure. At each iteration, the current SVAR is estimated at its posterior mode and its fitted composite disturbances are computed. An auxiliary optimization step determines whether the contemporaneous or lagged values of variables outside the current system contain information about those disturbances for each value of a model complexity parameter. Selected variables enter the system, the enlarged SVAR is reestimated, and the disturbances are updated. The procedure stops when no remaining variable enters and then delivers a terminal information set for that value of the model complexity parameter. Across the prespecified values of model complexity, it produces terminal systems of different dimensions and does not favor a small or a large model in advance.

Second, the Bayesian OOS criterion determines which complexity to retain. A validation sample that plays no role in model construction is used to evaluate forecasts of the core variables. For each loss, the rule first retains terminal systems that are credibly better than the reference. If none exists, it retains the reference and systems whose loss differences are not credibly distinguishable from zero. It then selects the largest system in the union of the squared and absolute loss comparison sets. The objective is information retention, not mechanical sparsity. The output is an interpretable SVAR with the selected information set for a specified economic question. The methodology produces an algorithm that is reproducible and fast.

We provide the theoretical foundation for this methodology. For every value of the model complexity parameter, the construction path is monotone, terminates after finitely many updates, and produces a unique terminal system. We derive an exact stopping characterization, establish that admission does not depend on measurement units, and show that sufficiently small numerical perturbations do not alter the terminal system (computationally stable). The Bayesian OOS step returns a unique model complexity parameter and hence a unique selected system. Together, these results turn specification search from hidden researcher choices into an open, reproducible algorithm and bring the construction of the information set under the same discipline long applied to identification.

This evolution matters not merely in theory but in practice. We apply the methodology to two of the most studied questions in the SVAR literature: the effects of household credit shocks and the effects of monetary policy shocks. Together, the two applications demonstrate its generality across two classic identification methods: recursive identification and proxy identification.

The first application relates to \citet[MSV hereafter]{MianSufiVerner2017}, who find that output rises and later declines after a household credit shock.  \citet[BPSS hereafter]{BrunnermeierPaliaSastrySims2021} replicate the same initial output boom in a larger monthly system and show that its interpretation depends on the information set. Both systems, however, are chosen by the researchers, and neither contains direct measures of housing production despite the central role of residential construction emphasized by \citet{Leamer2007,Leamer2015}. Using MSV's three variables---output, business credit, and household credit---as the core, our procedure selects a 13-variable system that contains four measures of housing production. No particular housing block is imposed in advance; each housing series competes with every other candidate on the same terms in model construction.

The selected information set changes the central economic conclusion, not merely its precision. A household credit shock raises household credit, but the posterior median of output shows no meaningful initial boom and all four measures of housing production decline. A separate housing production shock raises permits, starts, industrial production, and household credit together. Within the selected system, output rises when housing production expands and does not rise when household credit expands while housing production contracts. This pattern holds under both the recursive identification of MSV and the heteroskedasticity identification of BPSS. It distinguishes household credit from housing production without assigning a primitive cause to the change from the smaller systems.

The second application develops an independent Bayesian method for proxy SVARs. A common two-stage implementation chooses one reduced-form innovation as an anchor and measures every response relative to it. \citet[GK hereafter]{GertlerKaradi2015}, for example, use the innovation in the one-year Treasury rate and one external instrument. With one instrument, admissible anchors recover the same normalized impact vector. With several instruments, however, the estimated responses can depend on the anchor in finite samples, and no economic principle selects one reduced-form innovation over another. The three instruments of \citet{Swanson2021} capture distinct dimensions of monetary policy surprises---the federal funds rate, forward guidance, and large-scale asset purchases. Their joint deployment therefore provides no economic basis for privileging the reduced-form innovation of any single policy indicator as the anchor.

We establish that a vector of external instruments identifies one shock and its equation in a joint SVAR if and only if at least one instrument is correlated with that shock. The method allows any number of instruments for one shock, extends to several shocks and their corresponding instrument vectors, delivers explicit formulas for the identified equation and impulse responses, and estimates the proxy relation and the SVAR under one posterior distribution without a reduced-form innovation anchor. The model construction procedure and the joint proxy SVAR share a common principle. The first replaces a hand-chosen expansion of the information set; the second replaces a hand-chosen anchor. Both turn a consequential modeling choice into a transparent Bayesian procedure.

The empirical part of the second application studies monetary policy transmission. We begin with GK's six core variables and use the three high-frequency instruments of \citet{Swanson2021} jointly in our anchor-free proxy SVAR. The algorithm selects a 19-variable system. Rather than a dense ladder of closely related Treasury rates, it selects distinct labor market, housing, external, commodity, equity valuation, volatility, credit, and liquidity margins. 

In the core system augmented with the selected GZ spread, the difference between the corporate bond spread and the excess bond premium of \citet[GZ hereafter]{GilchristZakrajsek2012} rises, with both the 68\% and 90\% credible bands above zero over the early horizons. Monetary policy tightening therefore affects not only the excess bond premium but also the expected default risk in this reporting system. Joint estimation with multiple instruments and no anchor strengthens GK's central credit spread channel; the selected corporate spread extends the evidence to expected default risk as an integral part of monetary policy transmission.

The remainder of the paper is organized as follows. Section~\ref{sec:literature} relates our contributions to the literature. Sections~\ref{sec:modelconstruction}--\ref{sec:oos} develop the model construction procedure, establish its theoretical foundations, and prove the existence and uniqueness of the Bayesian OOS-selected system. Section~\ref{sec:household} presents the household credit application. Section~\ref{sec:mp} establishes a necessary and sufficient identification theorem for proxy SVARs with multiple external instruments and applies the joint Bayesian method to monetary policy transmission. Section~\ref{sec:conclusion} concludes. 

\section{Related Literature}\label{sec:literature}

\citet{Leamer1978} provides the conceptual foundation for treating model selection and specification search as part of econometric inference rather than as hidden preliminary choices. \citet{GuKellyXiu2020} provide a modern implementation of that principle by using separate training and validation samples to select model complexity in high dimensional predictive models. Our methodology brings this discipline into Bayesian SVAR analysis. The construction sample generates terminal information sets across values of model complexity, while a separate validation sample selects among them using posterior distributions of OOS forecast loss differences. We follow the posterior reporting convention of \citet{SimsZha1999} and the sample splitting logic of \citet{GuKellyXiu2020}, rather than the expanding-window design of a real-time forecasting exercise such as \citet{GiannoneLenzaPrimiceri2021}.

The paper is about model selection, not model averaging. Most SVAR evidence used by economic theory comes from one selected system. An SVAR requires observed economic variables, a concrete information set, and identifying restrictions that deliver clear economic interpretations. Model averaging can be useful, but it answers a different question and can obscure the economic content of the system. Our objective is to select one SVAR for a given economic question. The procedure is designed for repeated empirical use rather than as a single computational exercise. Tractability is therefore part of the methodology, not a secondary convenience. A useful construction method must be fast, reproducible, and feasible for other researchers to replicate, extend, and apply to new economic questions.

\citet{GiannoneReichlin2006} show that omitted information can make shocks recovered from an SVAR nonfundamental. \citet{ForniGambetti2014} build on this insight by using estimated factors to test whether a VAR is informationally sufficient and to amend it when deficiency is detected. \citet{JarocinskiMackowiak2017} use posterior probabilities of Granger causal priority to rank and select variables for a Bayesian VAR. This strand of literature establishes that the information set matters. Big data poses a distinct model construction problem: how to build a question-specific SVAR from hundreds of observed candidate variables. Existing approaches have not tackled this joint construction and selection problem.

Factor methods provide a different response to big data. A factor augmented VAR compresses a large panel into a small number of latent factors and appends them to a VAR \citep{BernankeBoivinEliasz2005}. In unpublished Walras--Bowley Lecture slides, \citet{Reichlin2026} places factor-augmented vector autoregressions (FAVARs) and large Bayesian VARs within the same dense regularization principle. A FAVAR imposes a discrete cutoff at the selected factors, whereas a large Bayesian VAR continuously shrinks all directions and retains information beyond the factor space. The retrospective analysis highlights the empirical success of Bayesian VARs estimated in levels under Minnesota priors and the central role of OOS evaluation in selecting model complexity.

Our methodology follows the Bayesian VAR route but extends it from estimation to model construction. We retain the core variables that define the original economic question, use Bayesian shrinkage to estimate every expanded candidate system, and use OOS evidence to select its complexity. Unlike a FAVAR, the selected object remains a system of observed economic variables whose impulse responses have direct economic meaning. Unlike a full large Bayesian VAR, the method does not require every candidate variable to enter one enormous system. It constructs an information set tailored to the economic question, including persistent and nonstationary series with the data in log levels. Constructed factors in the FAVAR literature may enter the candidate set, but the methodology does not require the information set to be represented by a handful of latent factors.

Our anchor-free joint Bayesian proxy SVAR contributes to the external instrument literature. \citet{MertensRavn2013} formulate covariance restrictions that use proxies correlated with the shocks of interest and orthogonal to other shocks. \citet{StockWatson2018} unify the SVAR and local projection approaches with instrumental variables (IVs) and clarify their identification conditions. GK implement the SVAR-IV approach by using one instrument and the reduced-form innovation in one policy indicator as an anchor. \citet{AriasRubioRamirezWaggoner2021} develop Bayesian proxy SVARs in a framework that pairs a $k$-dimensional proxy vector with a $k$-dimensional shock vector through a nonsingular $k\times k$ proxy--shock covariance matrix and thus does not accommodate multiple proxies for one shock. Their proxy equations may contain contemporaneous and lagged endogenous variables, which makes the exogeneity restrictions nonlinear functions of the structural parameters. Their importance sampler is not designed for the repeated posterior mode calculations required at every iteration and every value of the model complexity parameter. We instead keep the multiple instruments of \citet{Swanson2021}, constructed from high-frequency surprises, outside the SVAR system, consistent with GK and the earlier literature, and impose exogeneity directly on the covariance matrix.

\citet{CaldaraHerbst2019} develop the closest Bayesian predecessor to our joint approach. Their published paper develops a model with one proxy for one shock. In that scalar case, their unrestricted measurement error variance is equivalent to our specification. Their published paper does not develop several proxies for one shock. Their online appendix extends the model to two proxies but assumes independent measurement errors. Applied to our three-instrument setting, the same assumption would force every pair of measurement errors to be uncorrelated. Identification does not require this restriction. Our formulation allows an arbitrary number of instruments and leaves their covariance unrestricted, subject only to the proxy conditions required for identification. It therefore nests both their published one-proxy model and their independent-error extension.

Our contribution establishes the identification theory for this joint approach and generalizes it. We establish a necessary and sufficient population condition under which an instrument vector of arbitrary dimension identifies one shock and its equation, derive explicit formulas for the identified equation and impulse responses, and prove that the remaining equations need not be identified. We further establish population equivalence with the GK two-stage procedure, characterize their finite-sample difference with several instruments, and extend the result to several shocks with corresponding instrument vectors. Moreover, our augmented structural representation permits direct application of the efficient Gibbs sampler of \citet{WZ03b}, which makes the repeated high-dimensional proxy-SVAR estimation required by model construction computationally feasible.

The household credit application connects our results to the literature on housing production, credit allocation, and the business cycle. \citet{Leamer2007,Leamer2015} emphasize the central role of housing production in the business cycle. \citet{DavisHeathcote2005} document that residential investment leads GDP and is more volatile than nonresidential investment. \citet{RognlieShleiferSimsek2018} show how overbuilding durable capital, especially housing, can create an investment hangover and a recession. \citet{MullerVerner2024} show that credit booms disproportionately finance households, construction, real estate, and other nontradable sectors, and that nontradable credit expansions predict subsequent growth slowdowns. Together, these studies establish the economic importance of distinguishing housing production from household credit when interpreting the output boom. Our OOS-selected SVAR makes this distinction: output rises when housing production expands, not when household credit rises alone.

The monetary policy application connects the expected default response to theory and firm-level evidence. In \citet{GomesJermannSchmid2016}, a monetary policy tightening lowers inflation, raises the real burden of long maturity nominal debt, and increases corporate leverage and default. \citet{PalazzoYamarthy2022} find that contractionary monetary surprises raise both the expected loss and risk premium components of firm credit default swap spreads, with larger responses among riskier firms. Our aggregate evidence complements these results. In the core system augmented with the selected GZ spread, expected default risk rises after a monetary policy tightening, with both the 68\% and 90\% credible bands above zero over the early horizons.

\section{Model Construction}\label{sec:modelconstruction}

Let $y_t$ be an $n\times1$ vector of observed variables for $t=1,\ldots,T$. The data generating process is an SVAR with $\cL$ lags,
\begin{equation}
\label{eq:svar_full}
  y_t'A_0
  =
  c'
  +\sum_{\ell=1}^{\cL}y_{t-\ell}'A_\ell
  +\eps_t',
\end{equation}
where $A_0$ is an invertible $n\times n$ matrix of contemporaneous coefficients, $A_\ell$ is an $n\times n$ lag matrix for $\ell=1,\ldots,\cL$, $c$ is an $n\times1$ vector of intercepts, and $\eps_t\sim\mathcal N(0,I_n)$ is serially independent. The normalization $\Var(\eps_t)=I_n$ fixes the scale of the shocks. The identifying restrictions are imposed on $A_0$ and, when specified, on the lag matrices.

The vector $y_t$ contains the full set of observed candidate variables; thus, its dimension $n$ is large. For a given application, the analysis begins with three elements: a specific economic question, a set of core variables needed to address that question, and a set of identifying restrictions. The unresolved problem is which additional variables from the large candidate set should enter the SVAR information set. Selecting a small system by hand risks excluding relevant information, whereas estimating the entire candidate system is infeasible at the dimensions of big data. The obstacle is computational rather than conceptual.\footnote{A VAR with $n$ variables and $\cL$ lags contains $O(n^2\cL)$ lag coefficients; posterior simulation for a large system is costly even in reduced form. Under a nonconjugate shrinkage prior of the Minnesota type, the key matrix factorization can expand from a $K\times K$ system to a full $Kn\times Kn$ system, where $K$ is the number of predictors per equation. The resulting cost per MCMC draw is $O((Kn)^3)$.} We use the prior of \citet[the Sims--Zha prior hereafter]{SZ98a} to regularize the lag dynamics of each fixed system, but the prior does not determine which candidate variables belong in the model's information set.

We formulate the choice of additional variables through the core equations. These equations address the original economic question. Let $y_t^\dagger$ contain the core variables and a subset of additional variables, and let $\tilde y_t$ contain the remaining candidate variables. To formalize the role of the remaining variables, partition
\[
  y_t
  =
  \bigl((y_t^\dagger)',\tilde y_t'\bigr)',
  \qquad
  c
  =
  \bigl(c_1',\tilde c'\bigr)',
  \qquad
  \eps_t
  =
  \bigl((\eps_t^\dagger)',\tilde\eps_t'\bigr)'.
\]
In this partition, $\eps_t^\dagger$ is the first block of the full-system shock vector $\eps_t$. Conformably partitioning each coefficient matrix, Equation~\eqref{eq:svar_full} becomes
\begin{equation}
\label{eq:svar_block}
\begin{split}
  &\begin{bmatrix}(y_t^\dagger)'&\tilde y_t'\end{bmatrix}
  \begin{bmatrix}
    A_{0,11}&A_{0,12}\\
    A_{0,21}&A_{0,22}
  \end{bmatrix}\\
  &\qquad=
  \begin{bmatrix}c_1'&\tilde c'\end{bmatrix}
  +\sum_{\ell=1}^{\cL}
  \begin{bmatrix}(y_{t-\ell}^\dagger)'&\tilde y_{t-\ell}'\end{bmatrix}
  \begin{bmatrix}
    A_{\ell,11}&A_{\ell,12}\\
    A_{\ell,21}&A_{\ell,22}
  \end{bmatrix}
  +\begin{bmatrix}(\eps_t^\dagger)'&\tilde\eps_t'\end{bmatrix}.
\end{split}
\end{equation}
The first block of \eqref{eq:svar_block} can be written as
\begin{equation}
\label{eq:composite_disturbance}
  (y_t^\dagger)'A_{0,11}
  =
  c_1'
  +\sum_{\ell=1}^{\cL}(y_{t-\ell}^\dagger)'A_{\ell,11}
  +(u_t^\dagger)',
\end{equation}
where
\begin{equation*}
  u_t^\dagger
  =
  \eps_t^\dagger+r_t
\end{equation*}
and
\begin{equation*}
  r_t'
  =
  -\tilde y_t'A_{0,21}
  +\sum_{\ell=1}^{\cL}\tilde y_{t-\ell}'A_{\ell,21}.
\end{equation*}
The composite disturbance vector $u_t^\dagger$ combines the first block of the full-system shocks with the contemporaneous and lagged contributions of variables outside the current system. This block representation provides the basis for model construction. The objective is to construct a system for $y_t^\dagger$ in which the remaining variables do not enter its equations:
\begin{equation}
\label{eq:block_restriction}
  A_{0,21}=0
  \qquad\text{and}\qquad
  A_{\ell,21}=0
  \quad\text{for every }\ell=1,\ldots,\cL.
\end{equation}
When \eqref{eq:block_restriction} does not hold, contemporaneous or lagged values of variables outside the current system enter $u_t^\dagger$ through $r_t$. When it holds, $r_t=0$, and hence $u_t^\dagger=\eps_t^\dagger$. The decomposition yields a model construction criterion: the algorithmic procedure evaluates whether contemporaneous or lagged values of the remaining candidates contain predictive information for the fitted composite disturbance vector within the system of core variables.

The algorithm applies this criterion iteratively to construct the model. At each iteration, the current system is estimated and the fitted composite disturbance vector within the system of core variables is constructed. The model construction step evaluates whether contemporaneous or lagged values of the remaining candidates contain predictive information for at least one component of this vector. It admits the candidates chosen by the criterion and reestimates the enlarged system. Reestimation is essential because adding variables alters the fitted equations and the composite disturbance vector used at the next iteration.

Let
\[
  \cN=\{1,\ldots,n\}
\]
index the full candidate set. We begin with an initial set $\cI^{(0)}\subset\cN$ of $n^{(0)}$ core variables chosen by researchers to address a particular economic question and taken as given by the model construction procedure. Let $y_t^{(0)}$ denote the corresponding subvector and set $k=0$. At iteration $k$, let $\cI^{(k)}$ denote the current index set, let $y_t^{(k)}$ denote the corresponding subvector, and define
\[
  n^{(k)}=|\cI^{(k)}|.
\]
The complement
\[
  \widetilde{\cI}^{(k)}
  =
  \cN\setminus\cI^{(k)}
\]
indexes the remaining candidates. Write $\tilde y_t^{(k)}$ for the corresponding $\tilde n^{(k)}\times1$ vector, where $\tilde n^{(k)}=n-n^{(k)}$. For $j=1,\ldots,\tilde n^{(k)}$, let $\iota_j^{(k)}\in\cN$ denote the index in $y_t$ of the $j$th variable in $\tilde y_t^{(k)}$, such that
\[
  \tilde y_{j,t}^{(k)}
  =
  y_{\iota_j^{(k)},t}.
\]
The distinction between $j$ and $\iota_j^{(k)}$ is needed because the set of remaining candidates changes after each update.

At iteration $k$, reorder $y_t$ and the corresponding rows and columns of each coefficient matrix so that the variables indexed by $\cI^{(k)}$ appear first. For $\ell=0,\ldots,\cL$, let
\[
  A_\ell
  =
  \begin{bmatrix}
    A_{\ell,11}^{(k)}&A_{\ell,12}^{(k)}\\
    A_{\ell,21}^{(k)}&A_{\ell,22}^{(k)}
  \end{bmatrix}
\]
denote the corresponding partition, and let $c_1^{(k)}$ denote the first block of the reordered intercept vector. Define $A_\ell^{(k)}=A_{\ell,11}^{(k)}$, $c^{(k)}=c_1^{(k)}$, and
\begin{equation}
\label{eq:outside_component_k}
  (r_t^{(k)})'
  =
  -(\tilde y_t^{(k)})'A_{0,21}^{(k)}
  +\sum_{\ell=1}^{\cL}(\tilde y_{t-\ell}^{(k)})'A_{\ell,21}^{(k)}.
\end{equation}
Let $\eps_{\cI^{(k)},t}$ denote the subvector of $\eps_t$ ordered in the same manner as $y_t^{(k)}$. The composite disturbance associated with the current system is
\begin{equation}
\label{eq:composite_disturbance_k}
  u_t^{(k)}
  =
  \eps_{\cI^{(k)},t}+r_t^{(k)}.
\end{equation}
The current system is therefore written as
\begin{equation}
\label{eq:svar_k}
  (y_t^{(k)})'A_0^{(k)}
  =
  (c^{(k)})'
  +\sum_{\ell=1}^{\cL}(y_{t-\ell}^{(k)})'A_\ell^{(k)}
  +(u_t^{(k)})',
\end{equation}
subject to the maintained identifying restrictions. Equation~\eqref{eq:svar_k} provides the exact block representation underlying the model-construction procedure. The algorithm operationalizes this representation by estimating the current finite-order SVAR and screening the remaining candidates against its fitted composite disturbances.

Collect the parameters of the current system in
\[
  \theta^{(k)}
  =
  \bigl(c^{(k)},A_0^{(k)},A_1^{(k)},\ldots,A_{\cL}^{(k)}\bigr),
\]
let $\Theta_k$ be the restricted parameter space, and write $Y^{(k)}=\{y_t^{(k)}\}_{t=1}^T$. The posterior mode is
\begin{equation}
\label{eq:mode}
  \widehat\theta^{(k)}
  \in
  \arg\max_{\theta\in\Theta_k}
  \left\{
    \log p\bigl(Y^{(k)}\mid\theta\bigr)
    +\log\pi(\theta)
  \right\},
\end{equation}
where $p\bigl(Y^{(k)}\mid\theta\bigr)$ denotes the likelihood for $\theta$ based on $Y^{(k)}$, and $\pi(\theta)$ denotes the prior density appropriate for the dimension of the current system. If the maximizer is not unique, a prespecified deterministic rule selects one element of the argmax set.

Evaluating the fitted equations at the posterior mode gives
\begin{equation}
\label{eq:fitted_composite_disturbances}
  (\widehat u_t^{(k)})'
  =
  (y_t^{(k)})'\widehat A_0^{(k)}
  -(\widehat c^{(k)})'
  -\sum_{\ell=1}^{\cL}(y_{t-\ell}^{(k)})'\widehat A_\ell^{(k)},
  \qquad
  t=\cL+1,\ldots,T.
\end{equation}
Evaluating the current-system equations at the posterior mode yields the fitted composite disturbance vector $\widehat u_t^{(k)}$. By \eqref{eq:composite_disturbance_k}, the underlying composite disturbance combines the shocks associated with the current equations and the contribution $r_t^{(k)}$ from variables outside the current system. The model construction step screens the remaining candidates against these fitted composite disturbances.

For the model construction step, each variable in $y_t$ is standardized before the first iteration using moments computed from the full data set.\footnote{Section~\ref{sec:data} describes the transformations applied to the raw economic variables before standardization. Each series $y_{i,t}$ is measured either in log levels or in percentage points.} For $i=1,\ldots,n$, define
\[
  \bar y_i
  =
  \frac{1}{T}\sum_{t=1}^T y_{i,t},
  \qquad
  \widehat\sigma_i
  =
  \left[
    \frac{1}{T}\sum_{t=1}^T(y_{i,t}-\bar y_i)^2
  \right]^{1/2},
\]
and
\[
  y_{i,t}^{\mathrm{std}}
  =
  \frac{y_{i,t}-\bar y_i}{\widehat\sigma_i}.
\]
We maintain $\widehat\sigma_i>0$ for every candidate variable. The values $\bar y_i$ and $\widehat\sigma_i$ are computed once and held fixed throughout the model construction procedure. For $j=1,\ldots,\tilde n^{(k)}$, define the standardized remaining candidates by
\[
  \tilde y_{j,t,\mathrm{std}}^{(k)}
  =
  y_{\iota_j^{(k)},t}^{\mathrm{std}},
\]
and stack their contemporaneous values and lags as
\[
  x_{t,\mathrm{std}}^{(k)}
  =
  \bigl(
    (\tilde y_{t,\mathrm{std}}^{(k)})',
    (\tilde y_{t-1,\mathrm{std}}^{(k)})',
    \ldots,
    (\tilde y_{t-\cL,\mathrm{std}}^{(k)})'
  \bigr)'.
\]
This standardization applies only to the variables used in the auxiliary optimization and does not alter the variables used to estimate the current system.

The optimization procedure targets a fixed subset $\cQ=\{q_1,\ldots,q_{n^f}\}\subseteq\{1,\ldots,n^{(0)}\}$ of the core equations, where $n^f=|\cQ|$. Let $\widehat u_{\cQ,t}^{(k)}=(\widehat u_{q_1,t}^{(k)},\ldots,\widehat u_{q_{n^f},t}^{(k)})'$ denote the corresponding subvector of fitted composite disturbances. To evaluate the remaining candidates, relate these disturbances to their standardized contemporaneous and lagged values:
\begin{equation}
\label{eq:screen_reg}
  (\widehat u_{\cQ,t}^{(k)})'
  =
  (d^{(k)})'
  +(\tilde y_{t,\mathrm{std}}^{(k)})'D_0^{(k)}
  +\sum_{\ell=1}^{\cL}(\tilde y_{t-\ell,\mathrm{std}}^{(k)})'D_\ell^{(k)}
  +(\eta_t^{(k)})',
  \qquad
  t=\cL+1,\ldots,T,
\end{equation}
where $d^{(k)}$ and $\eta_t^{(k)}$ are $n^f\times1$ vectors, and each $D_\ell^{(k)}$ is a $\tilde n^{(k)}\times n^f$ matrix. The vector $\eta_t^{(k)}$ is the residual in the auxiliary regression. Rows of $D_\ell^{(k)}$ index the remaining candidates, and columns index the equations in $\cQ$. Equation~\eqref{eq:screen_reg} is used only for model construction. Its coefficients measure whether the remaining candidates have predictive content for the fitted composite disturbances.

Let $T_{\cL}=T-\cL$. For equation $q \in \cQ$, let $d_q\in\mathbb R$ and let
\[
  \beta_q
  =
  \bigl(\beta_{0,q}',\beta_{1,q}',\ldots,\beta_{\cL,q}'\bigr)'
  \in
  \mathbb R^{(\cL+1)\tilde n^{(k)}},
  \qquad
  \beta_{\ell,q}\in\mathbb R^{\tilde n^{(k)}}.
\]
Solve
\begin{equation}
\label{eq:l1_obj}
  \min_{d_q,\beta_q}
  \left\{
    \frac{1}{2T_{\cL}}
    \sum_{t=\cL+1}^{T}
    \left[
      \widehat u_{q,t}^{(k)}
      -d_q
      -(x_{t,\mathrm{std}}^{(k)})'\beta_q
    \right]^2
    +\lambda\norm{\beta_q}_1
  \right\},
\end{equation}
where the intercept $d_q$ is not included in the $\ell_1$ term and $\lambda>0$ is the model complexity parameter selected by the out of sample (OOS) procedure described in Section~\ref{sec:oos}.\footnote{The model-construction procedure incorporates machine learning tools used in modern empirical work, including the Lasso, into an auxiliary screening step. For each value of $\lambda$, the iterative procedure generates a terminal candidate information set. The dependent variables in the auxiliary optimization are the fitted composite disturbances evaluated at the posterior mode, and Bayesian posterior simulation evaluates the resulting terminal systems through OOS forecast loss distributions and selects the model complexity parameter $\lambda$.}

Let $(\widehat d_q^{(k)}(\lambda),\widehat\beta_q^{(k)}(\lambda))$ denote a minimizer of \eqref{eq:l1_obj}, and partition
\[
  \widehat\beta_q^{(k)}(\lambda)
  =
  \bigl(
    (\widehat\beta_{0,q}^{(k)}(\lambda))',
    (\widehat\beta_{1,q}^{(k)}(\lambda))',
    \ldots,
    (\widehat\beta_{\cL,q}^{(k)}(\lambda))'
  \bigr)'.
\]
For $\ell=0,\ldots,\cL$, define
\begin{equation}
\label{eq:coefficient_matrices}
  \widehat D_\ell^{(k)}(\lambda)
  =
  \begin{bmatrix}
    \widehat\beta_{\ell,q_1}^{(k)}(\lambda)&
    \widehat\beta_{\ell,q_2}^{(k)}(\lambda)&
    \cdots&
    \widehat\beta_{\ell,q_{n^f}}^{(k)}(\lambda)
  \end{bmatrix}.
\end{equation}
Thus, $\widehat D_\ell^{(k)}(\lambda)$ is a $\tilde n^{(k)}\times n^f$ matrix whose $j$th column contains the coefficients for equation $q_j$ at lag $\ell$, for $j=1,\ldots,n^f$.

The optimization in \eqref{eq:l1_obj} is used only to construct the information set. For $j=1,\ldots,\tilde n^{(k)}$, collect the coefficients associated with the $j$th variable in $\tilde y_t^{(k)}$ across all equations in $\cQ$ and across its contemporaneous and lagged values:
\begin{equation}
\label{eq:coef_block}
  \widehat G_j^{(k)}(\lambda)
  =
  \bigl[
    \widehat D_{0,j\cdot}^{(k)}(\lambda),
    \widehat D_{1,j\cdot}^{(k)}(\lambda),
    \ldots,
    \widehat D_{\cL,j\cdot}^{(k)}(\lambda)
  \bigr].
\end{equation}
Candidate $\iota_j^{(k)}$ is admitted when at least one coefficient in $\widehat G_j^{(k)}(\lambda)$ is nonzero. Thus,
\begin{equation}
\label{eq:selected_set}
  \widetilde{\cS}^{(k)}(\lambda)
  =
  \left\{
    \iota_j^{(k)}:
    j=1,\ldots,\tilde n^{(k)},\
    \norm{\widehat G_j^{(k)}(\lambda)}_F>0
  \right\}.
\end{equation}
The criterion in \eqref{eq:selected_set} admits a remaining candidate when its contemporaneous value or any of its lags has predictive content for at least one component of the fitted composite disturbance. Once admitted, the variable remains in the system at all subsequent iterations.

After solving \eqref{eq:l1_obj}, update the current index set according to
\begin{equation}
\label{eq:update}
  \cI^{(k+1)}
  =
  \cI^{(k)}
  \cup
  \widetilde{\cS}^{(k)}(\lambda).
\end{equation}
If $\widetilde{\cS}^{(k)}(\lambda)\neq\emptyset$, form the enlarged subvector $y_t^{(k+1)}$, estimate the enlarged system, and compute $\widehat u_t^{(k+1)}$ from \eqref{eq:fitted_composite_disturbances}. Reestimation is essential because admitting a variable alters both the fitted dynamics and the fitted composite disturbances against which the remaining candidates are evaluated.

For a fixed model complexity parameter $\lambda$, the construction terminates at iteration $\bar k(\lambda)$ when
\[
  \widetilde{\cS}^{(\bar k(\lambda))}(\lambda)=\emptyset.
\]
No remaining candidate is admitted for this value of $\lambda$. Denote the terminal index set, variable vector, fitted composite disturbance vector, and dimension by
\[
  \cI^\dagger(\lambda)
  =
  \cI^{(\bar k(\lambda))},
  \quad
  y_t^\dagger(\lambda)
  =
  y_t^{(\bar k(\lambda))},
  \quad
  \widehat u_t^\dagger(\lambda)
  =
  \widehat u_t^{(\bar k(\lambda))},
  \quad
  n^\dagger(\lambda)
  =
  |\cI^\dagger(\lambda)|.
\]
The variables in $y_t^\dagger(\lambda)$ comprise the core variables and the additional variables selected at the model complexity parameter $\lambda$, and together form the terminal information set associated with $\lambda$. Section~\ref{sec:theory} establishes finite termination and uniqueness of the mapping $\lambda\mapsto y_t^\dagger(\lambda)$.

To facilitate implementation, the complete procedure is summarized as follows:
\begin{enumerate}[(1)]
  \item \textbf{Initialize.} Begin with the core index set $\cI^{(0)}$ associated with the economic application, fix a model complexity parameter $\lambda$ from the prespecified grid, and set $k=0$.
  \item \textbf{Estimate.} Estimate the current system \eqref{eq:svar_k} under the prior $\pi$ and compute the posterior mode $\widehat\theta^{(k)}$ in \eqref{eq:mode}.
  \item \textbf{Construct the fitted composite disturbances.} Compute $\widehat u_t^{(k)}$ from \eqref{eq:fitted_composite_disturbances}.
  \item \textbf{Evaluate the remaining candidates.} Solve the auxiliary optimization problem \eqref{eq:l1_obj} for the current system and construct $\widetilde{\cS}^{(k)}(\lambda)$ from \eqref{eq:selected_set}.
  \item \textbf{Update.} Apply \eqref{eq:update}. If $\widetilde{\cS}^{(k)}(\lambda)\neq\emptyset$, increment $k$ and return to step (2). Otherwise, set $\bar k(\lambda)=k$ and retain the terminal variable vector $y_t^\dagger(\lambda)$ and the terminal fitted composite disturbance vector $\widehat u_t^\dagger(\lambda)=\widehat u_t^{(\bar k(\lambda))}$. 
  \item \textbf{Compare terminal systems.} Repeat steps (1)--(5) for each value of the model complexity parameter $\lambda$ in the prespecified grid. Select $\widehat\lambda_\alpha$ using the OOS forecast loss criterion in Section~\ref{sec:oos}, where $\alpha$ is the posterior probability level used in the OOS comparison.
  \item \textbf{Selected variable vector.} The final vector $y_t^\dagger(\widehat\lambda_\alpha)$ contains the core variables and the additional variables selected from the large candidate set for the economic application.
\end{enumerate}

During model construction, $\widehat u_t^{(k)}$ denotes the fitted composite disturbance vector for the system at iteration $k$. For each $\lambda$, the terminal fitted composite disturbance vector is
\[
\widehat u_t^\dagger(\lambda)
=
\widehat u_t^{(\bar k(\lambda))}.
\]
After OOS selection, we write $y_t^\dagger\equiv y_t^\dagger(\widehat\lambda_\alpha)$ and refer to the resulting SVAR as the selected system.

The framework separates three components that are often conflated. The economic question, core variables, and identifying restrictions define the initial economic analysis, whose information set may be incomplete. For each value of the model complexity parameter, the iterative procedure constructs a terminal information set. The OOS forecast loss criterion then selects among the resulting terminal systems. The economic question, model construction, and OOS selection therefore play distinct roles.

\section{Theoretical Properties}\label{sec:theory}

Section~\ref{sec:modelconstruction} defines a sequence of current index sets, posterior mode estimates, fitted composite disturbances, auxiliary optimization problems, and index set updates. In this section, we establish that this construction is well defined, terminates after finitely many updates, and produces a unique terminal variable vector. We also derive the exact stopping condition and establish invariance to changes in the units of the remaining candidates and stability with respect to small changes in the fitted composite disturbances. Throughout this section, $\bar k$ denotes the terminal iteration for a fixed value of the model complexity parameter $\lambda$. All results are conditional on the observed data under the prior $\pi(\theta)$.

\begin{assumption} 
\label{ass:well_defined}
The standard deviations $\widehat\sigma_i$ used to standardize the variables are strictly positive for all $i=1,\ldots,n$. For every index set reached by the procedure, the argmax set in \eqref{eq:mode} is nonempty, and the auxiliary optimization problem in \eqref{eq:l1_obj} has a unique minimizer for every $q\in\cQ$.
\end{assumption}

The uniqueness requirement for the auxiliary minimizer concerns the coefficient vector rather than only its fitted values. This distinction is necessary because the update in \eqref{eq:selected_set} depends on which coefficients are zero. The nonempty argmax set and the prespecified deterministic rule stated after \eqref{eq:mode} determine one posterior mode and hence one fitted composite disturbance vector at every iteration. Under Assumption~\ref{ass:well_defined}, the following two propositions establish convergence and uniqueness.

\begin{proposition} 
\label{prop:convergence}
For each fixed $\lambda$, the update rule generates
\[
  \cI^{(0)}\subseteq\cI^{(1)}\subseteq\cdots\subseteq\cN.
\]
Once a variable enters the current system, it remains in every subsequent system. The terminal iteration satisfies $\bar k\leq n-n^{(0)}$, and the procedure estimates at most $n-n^{(0)}+1$ current systems.
\end{proposition}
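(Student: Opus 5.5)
The argument is a direct monotonicity-and-counting argument built on the update rule \eqref{eq:update} and the termination criterion. The first step is to use Assumption~\ref{ass:well_defined} to confirm that every step of the iteration is well defined. For each index set reached, the argmax set in \eqref{eq:mode} is nonempty, and the prespecified deterministic rule picks a single posterior mode, so the fitted composite disturbances $\widehat u_t^{(k)}$ in \eqref{eq:fitted_composite_disturbances} are uniquely determined. Because $\widehat\sigma_i>0$, the standardized regressors $x_{t,\mathrm{std}}^{(k)}$ exist. Because the minimizer of \eqref{eq:l1_obj} is unique for each $q\in\cQ$, the coefficient blocks $\widehat G_j^{(k)}(\lambda)$, and hence the set $\widetilde{\cS}^{(k)}(\lambda)$ in \eqref{eq:selected_set}, are single-valued functions of $\cI^{(k)}$. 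The sequence $\cI^{(k)}$ is therefore a well-defined deterministic sequence for fixed $\lambda$.

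Monotonicity follows immediately from \eqref{eq:update}: $\cI^{(k+1)}=\cI^{(k)}\cup\widetilde{\cS}^{(k)}(\lambda)\supseteq\cI^{(k)}$. By construction, $\widetilde{\cS}^{(k)}(\lambda)\subseteq\{\iota_j^{(k)}\}_j=\widetilde{\cI}^{(k)}\subseteq\cN$, so every $\cI^{(k)}\subseteq\cN$. Induction on $k$ then gives the chain $\cI^{(0)}\subseteq\cI^{(1)}\subseteq\cdots\subseteq\cN$. The statement that an admitted variable remains in every later system is a restatement of this nesting, since no step of the procedure removes indices.

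For the bound on $\bar k$, observe that the procedure continues past iteration $k$ only when $\widetilde{\cS}^{(k)}(\lambda)\neq\emptyset$. In that case, because $\widetilde{\cS}^{(k)}(\lambda)\subseteq\widetilde{\cI}^{(k)}$ is disjoint from $\cI^{(k)}$, the update gives $n^{(k+1)}\ge n^{(k)}+1$. By induction, $n^{(k)}\ge n^{(0)}+k$ for every iteration $k$ that is reached. Since $n^{(k)}\le n$, any iteration that is reached satisfies $k\le n-n^{(0)}$. If every iteration up to $k=n-n^{(0)}$ is reached, then $\cI^{(n-n^{(0)})}=\cN$, so $\widetilde{\cI}^{(k)}=\emptyset$ and necessarily $\widetilde{\cS}^{(k)}=\emptyset$. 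Hence termination occurs at some $\bar k\le n-n^{(0)}$. The systems estimated are those at iterations $k=0,\ldots,\bar k$, which number at most $n-n^{(0)}+1$.

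The only delicate step is the edge case $\widetilde{\cI}^{(k)}=\emptyset$. There the auxiliary regression \eqref{eq:l1_obj} has no penalized coefficients, and it must be read as yielding the empty selected set. I would state this convention explicitly so that termination is well defined even when all candidates have been absorbed. Everything else in the argument is bookkeeping.
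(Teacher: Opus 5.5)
Your proof is correct and matches the paper's argument: $\widetilde{\cS}^{(k)}(\lambda)\subseteq\cN\setminus\cI^{(k)}$ is disjoint from $\cI^{(k)}$, which yields both the nesting and a strict increase in size at every nonterminal update, so there are at most $n-n^{(0)}$ updates and hence at most $n-n^{(0)}+1$ estimated systems. The edge case you flag needs no separate convention, since \eqref{eq:selected_set} ranges over $j=1,\ldots,\tilde n^{(k)}$ and is therefore empty by definition when $\tilde n^{(k)}=0$.
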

\begin{proof}
   See Appendix~\ref{sec:proofs}.
\end{proof}

\begin{proposition} 
\label{prop:uniqueness}
Each fixed value of $\lambda$ determines a unique sequence $\{\cI^{(k)}\}_{k=0}^{\bar k}$, a unique terminal index set $\cI^\dagger(\lambda)$, and a unique terminal variable vector $y_t^\dagger(\lambda)$.
\end{proposition}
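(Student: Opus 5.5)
The plan is to prove Proposition~\ref{prop:uniqueness} by induction on the iteration index $k$, showing that the construction is a deterministic map from the current index set to the next one. Concretely, I would define, for fixed $\lambda$, a single-valued update operator $\Phi_\lambda$ on subsets of $\cN$ containing $\cI^{(0)}$. It takes an index set $\cI$, forms the subvector of $y_t$, computes the posterior mode, then the fitted composite disturbances, then the auxiliary minimizers, and finally the admitted set. It returns $\cI\cup\widetilde{\cS}(\lambda)$. Once $\Phi_\lambda$ is shown to be a well-defined function, the sequence $\cI^{(k+1)}=\Phi_\lambda(\cI^{(k)})$ with the given $\cI^{(0)}$ is uniquely determined.

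The key steps, in order, verify that each stage of $\Phi_\lambda$ is single-valued given $\cI$. (i) The subvector $y_t^{(k)}$ and the data $Y^{(k)}$ are determined by $\cI^{(k)}$, with the variables ordered as in $y_t$. The restricted parameter space $\Theta_k$ and the prior density are likewise fixed by $\cI^{(k)}$. (ii) By Assumption~\ref{ass:well_defined}, the argmax set in \eqref{eq:mode} is nonempty. The prespecified deterministic rule stated after \eqref{eq:mode} then selects a unique $\widehat\theta^{(k)}$. Hence \eqref{eq:fitted_composite_disturbances} yields a unique $\widehat u_t^{(k)}$ and a unique subvector $\widehat u_{\cQ,t}^{(k)}$, since $\cQ$ is fixed. (iii) The standardized regressors $x_{t,\mathrm{std}}^{(k)}$ are determined by the complement $\widetilde\cI^{(k)}$ together with the fixed moments $\bar y_i$ and $\widehat\sigma_i>0$. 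Thus each problem \eqref{eq:l1_obj} is a fully specified problem. By Assumption~\ref{ass:well_defined}, it has a unique minimizer $(\widehat d_q^{(k)}(\lambda),\widehat\beta_q^{(k)}(\lambda))$. (iv) The matrices in \eqref{eq:coefficient_matrices} and the blocks in \eqref{eq:coef_block} are therefore unique. The zero/nonzero pattern of the Frobenius norms fixes $\widetilde{\cS}^{(k)}(\lambda)$ in \eqref{eq:selected_set}, and \eqref{eq:update} fixes $\cI^{(k+1)}$.

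Next, I would treat termination. The stopping event $\widetilde{\cS}^{(k)}(\lambda)=\emptyset$ is a deterministic function of $\cI^{(k)}$. Hence the terminal index $\bar k(\lambda)$ is the first $k$ at which this event occurs, and it is unique. By Proposition~\ref{prop:convergence}, this index exists and satisfies $\bar k\le n-n^{(0)}$. It follows that $\cI^\dagger(\lambda)=\cI^{(\bar k)}$ is unique. The vector $y_t^\dagger(\lambda)$ is the subvector indexed by this set, taken in the fixed ordering of $y_t$, so it is unique as well.

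The main obstacle is step (ii)--(iii): uniqueness would fail if either the posterior mode or the Lasso coefficient vector were set-valued. The Lasso case matters most, because nonunique minimizers can share fitted values yet have different supports, and that would change the admitted set. I would handle this by invoking Assumption~\ref{ass:well_defined} directly, which imposes uniqueness at the coefficient level, together with the deterministic tie-breaking rule for the mode. I would also stress that these must hold at every index set actually reached. The induction only ever visits sets generated by $\Phi_\lambda$, so the assumption applies at each step.
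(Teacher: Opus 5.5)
Your proposal is correct and follows essentially the same route as the paper. The paper argues by induction on $k$: Assumption~\ref{ass:well_defined} together with the prespecified tie-breaking rule pins down one posterior mode and hence one $\widehat u_t^{(k)}$, and the unique auxiliary minimizers then fix $\widetilde{\cS}^{(k)}(\lambda)$ and $\cI^{(k+1)}$, after which Proposition~\ref{prop:convergence} gives finite termination. Recasting the inductive step as a single-valued update map $\Phi_\lambda$, defined on the index sets actually reached, is only a repackaging of that same argument.
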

\begin{proof}
   See Appendix~\ref{sec:proofs}.
\end{proof}

Proposition~\ref{prop:uniqueness} establishes that the mapping $\lambda\mapsto y_t^\dagger(\lambda)$ is well defined for the OOS comparison in Section~\ref{sec:oos}. It does not imply that terminal vectors associated with different model complexity parameters are nested. Each update changes the fitted composite disturbances used in the next auxiliary optimization, and the construction paths may therefore differ across values of $\lambda$.

For the remaining results, let $T_{\cL}=T-\cL$ and define the stacked fitted composite disturbances and the standardized regressor matrix by
\[
  \widehat U^{(k)}
  =
  \begin{bmatrix}
    (\widehat u_{\cL+1}^{(k)})'\\
    \vdots\\
    (\widehat u_T^{(k)})'
  \end{bmatrix},
  \qquad
  X^{(k)}
  =
  \begin{bmatrix}
    (x_{\cL+1,\mathrm{std}}^{(k)})'\\
    \vdots\\
    (x_{T,\mathrm{std}}^{(k)})'
  \end{bmatrix}.
\]
Thus, $\widehat U^{(k)}\in\mathbb R^{T_{\cL}\times n^{(k)}}$ and $X^{(k)}\in\mathbb R^{T_{\cL}\times p_k}$, where $p_k=(\cL+1)\tilde n^{(k)}$. Let $\widehat{\bm u}_q^{(k)}$ denote column $q$ of $\widehat U^{(k)}$, let $\bm 1$ be the $T_{\cL}\times1$ vector of ones, and define
\[
  M
  =
  I_{T_{\cL}}-\frac{1}{T_{\cL}}\bm 1\bm 1'.
\]
The matrix $M$ centers the dependent variable and regressors. The intercept is part of the original regression. After centering, the auxiliary optimization can be written without the intercept, with no change in the estimated coefficients on the regressors.

\begin{lemma} 
\label{lem:centering}
For a fixed equation $q$ and coefficient vector $\beta_q$, the minimizing intercept in \eqref{eq:l1_obj} is
\[
  d_q(\beta_q)
  =
  \frac{1}{T_{\cL}}\bm 1'
  \left(\widehat{\bm u}_q^{(k)}-X^{(k)}\beta_q\right).
\]
Substitution of this intercept into \eqref{eq:l1_obj} yields the equivalent problem
\begin{equation}
\label{eq:centered_auxiliary}
  \min_{\beta_q}
  \left\{
    \frac{1}{2T_{\cL}}
    \left\|
      M\widehat{\bm u}_q^{(k)}-MX^{(k)}\beta_q
    \right\|_2^2
    +\lambda\|\beta_q\|_1
  \right\}.
\end{equation}
\end{lemma}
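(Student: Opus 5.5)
The plan is a direct partial minimization over the intercept, followed by a projection identity. Fix $q$ and $\beta_q$, and write the residual vector $e=\widehat{\bm u}_q^{(k)}-X^{(k)}\beta_q\in\mathbb R^{T_{\cL}}$. In matrix form the objective in \eqref{eq:l1_obj} is
\[
  \frac{1}{2T_{\cL}}\|e-d_q\bm 1\|_2^2+\lambda\|\beta_q\|_1 .
\]
The $\ell_1$ term does not involve $d_q$, so minimizing over $d_q$ is a strictly convex scalar quadratic problem. Its first-order condition is $-\bm 1'(e-d_q\bm 1)=0$, which gives $d_q=\bm 1'e/T_{\cL}$. Since the second derivative is $\bm 1'\bm 1/T_{\cL}=1>0$, this is the unique minimizer, and it is exactly the stated formula for $d_q(\beta_q)$.

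Next I substitute. I will note that $e-d_q(\beta_q)\bm 1=(I_{T_{\cL}}-T_{\cL}^{-1}\bm 1\bm 1')e=Me$. By linearity, $Me=M\widehat{\bm u}_q^{(k)}-MX^{(k)}\beta_q$. The profiled objective is therefore exactly the objective in \eqref{eq:centered_auxiliary}.

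To show the two problems are equivalent, I will use the standard fact that profiling does not change the minimizers. For any pair $(d_q,\beta_q)$, the original objective is at least its value at $(d_q(\beta_q),\beta_q)$, and that value equals the centered objective at $\beta_q$. Two consequences follow.
\begin{enumerate}[(i)]
  \item If $\beta^\ast$ minimizes \eqref{eq:centered_auxiliary}, then $(d_q(\beta^\ast),\beta^\ast)$ minimizes \eqref{eq:l1_obj}.
  \item If $(d^\ast,\beta^\ast)$ minimizes \eqref{eq:l1_obj}, then $d^\ast=d_q(\beta^\ast)$ by uniqueness of the intercept minimizer, and $\beta^\ast$ minimizes the centered problem.
\end{enumerate}
Hence the two problems have the same minimum value and the same set of minimizing coefficient vectors. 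In particular, the estimated regressor coefficients do not change.

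There is no substantive obstacle here. The only point needing care is the equivalence argument: it must be stated for minimizer sets and not just for optimal values. This matters because the selection rule \eqref{eq:selected_set} depends on which coefficients are zero, so Assumption~\ref{ass:well_defined} uniqueness has to transfer between the two formulations. Step (ii) above handles this.
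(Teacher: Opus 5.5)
Your proposal is correct and follows the same route as the paper's proof: profile out the intercept via the first-order condition of the strictly convex scalar quadratic, observe that the substituted residual is $M\bigl(\widehat{\bm u}_q^{(k)}-X^{(k)}\beta_q\bigr)$, and note that the $\ell_1$ term does not involve $d_q$. Your explicit argument that the two problems have the same set of minimizing coefficient vectors, not just the same optimal value, makes precise what the paper states in one line ("the minimizing coefficient vector is unchanged"). This is the form needed to carry the uniqueness in Assumption~\ref{ass:well_defined} over to the centered problem.
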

\begin{proof}
   See Appendix~\ref{sec:proofs}.
\end{proof}

Lemma~\ref{lem:centering} allows the remaining results to use the centered auxiliary optimization without altering the estimated coefficients that determine which candidates are admitted. By \eqref{eq:selected_set}, no remaining candidate is admitted at iteration $k$ when all coefficient blocks are zero. The next proposition states an equivalent condition based on the standardized remaining candidates and the fitted composite disturbances.

\begin{proposition} 
\label{prop:stopping_condition}
Under Assumption~\ref{ass:well_defined}, no remaining candidate is admitted at iteration $k$, that is,
\[
  \widetilde{\cS}^{(k)}(\lambda)=\emptyset,
\]
if and only if
\begin{equation}
\label{eq:stopping_score}
  \max_{q\in\cQ}.
  \left\|
    \frac{1}{T_{\cL}}(X^{(k)})'M\widehat{\bm u}_q^{(k)}
  \right\|_\infty
  \leq
  \lambda.
\end{equation}
When $\tilde n^{(k)}=0$, the left-hand side of \eqref{eq:stopping_score} is defined as zero.
\end{proposition}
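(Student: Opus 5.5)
By Lemma~\ref{lem:centering}, the candidates admitted at iteration $k$ are determined by the minimizer of the centered problem \eqref{eq:centered_auxiliary}, one problem for each $q\in\cQ$. By \eqref{eq:selected_set}, $\widetilde{\cS}^{(k)}(\lambda)=\emptyset$ holds exactly when $\widehat\beta_q^{(k)}(\lambda)=0$ for every $q\in\cQ$. The reason is that each block $\widehat G_j^{(k)}(\lambda)$ collects the coordinates of the $\widehat\beta_q$ belonging to candidate $j$, and these blocks partition all coordinates of all $\widehat\beta_q$. The proposition therefore reduces to a statement about a single Lasso problem: for a fixed $q$, the unique minimizer of \eqref{eq:centered_auxiliary} equals zero if and only if $\|T_{\cL}^{-1}(X^{(k)})'M\widehat{\bm u}_q^{(k)}\|_\infty\le\lambda$. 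Taking the maximum over $q$ then gives \eqref{eq:stopping_score}.

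For the single-equation statement I will use the KKT (subgradient) conditions. The objective in \eqref{eq:centered_auxiliary} is convex, so $\beta$ is a minimizer if and only if
\[
  0 \in -\tfrac{1}{T_{\cL}}(MX^{(k)})'\bigl(M\widehat{\bm u}_q^{(k)}-MX^{(k)}\beta\bigr)+\lambda\,\partial\|\beta\|_1 .
\]
Since $M$ is symmetric and idempotent, $(MX^{(k)})'M=(X^{(k)})'M$.

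The two directions go as follows.
\begin{itemize}
  \item[(i)] \emph{If $\beta=0$ is the minimizer.} The subdifferential $\partial\|0\|_1$ is the unit $\ell_\infty$ ball. The optimality condition at $\beta=0$ is then exactly $\|T_{\cL}^{-1}(X^{(k)})'M\widehat{\bm u}_q^{(k)}\|_\infty\le\lambda$.
  \item[(ii)] \emph{If the score inequality holds.} Then $\beta=0$ satisfies the KKT conditions, so it is a minimizer. By the uniqueness part of Assumption~\ref{ass:well_defined}, it is the minimizer. Hence $\widehat\beta_q^{(k)}=0$.
\end{itemize}
Uniqueness is needed only in direction (ii). Without it, zero could be one minimizer among several nonzero ones, and the deterministic selection could return a nonzero coefficient vector.

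Two points need care. First, I must check that centering leaves the coefficient estimates unchanged. Lemma~\ref{lem:centering} supplies this, including the identity $(MX)'M\bm u=X'M\bm u$. Second, I must state the degenerate case. When $\tilde n^{(k)}=0$, no candidate remains, so $\widetilde{\cS}^{(k)}(\lambda)=\emptyset$ trivially, and the convention that the left-hand side of \eqref{eq:stopping_score} is zero makes the inequality hold because $\lambda>0$.

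The main obstacle is only bookkeeping. The equivalence "all $\widehat G_j^{(k)}(\lambda)$ are zero if and only if all $\widehat\beta_q$ are zero" has to be justified through the layout in \eqref{eq:coefficient_matrices}--\eqref{eq:coef_block}. Each row $j$ of each $\widehat D_\ell^{(k)}$ is a distinct coordinate of $\widehat\beta_{\ell,q}$, so the union over $j$ of the entries of $\widehat G_j^{(k)}(\lambda)$ is exactly the set of all entries of all $\widehat\beta_q$.
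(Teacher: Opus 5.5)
Your proof is correct and follows essentially the same route as the paper's proof. You reduce via Lemma~\ref{lem:centering} to one centered Lasso problem per $q\in\cQ$, check the subgradient condition at $\beta_q=0$ using that the subdifferential of $\|\cdot\|_1$ at zero is the unit $\ell_\infty$ ball, invoke the uniqueness in Assumption~\ref{ass:well_defined} to identify that zero minimizer with $\widehat\beta_q^{(k)}$, and treat $\tilde n^{(k)}=0$ by convention.

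One aside is inaccurate, though it does not affect the argument: for $\lambda>0$, zero cannot be one minimizer among several nonzero ones. All Lasso minimizers share the same fitted values and therefore the same $\ell_1$ norm, so a zero minimizer is automatically the unique minimizer.
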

\begin{proof}
   See Appendix~\ref{sec:proofs}.
\end{proof}

Proposition~\ref{prop:stopping_condition} gives economic meaning to the zero coefficients returned by the auxiliary optimization. The left-hand side of \eqref{eq:stopping_score} is the largest absolute average cross-product between any standardized remaining candidate, contemporaneously or at any included lag, and any fitted composite disturbance indexed by $\cQ$. When this quantity does not exceed $\lambda$, no remaining candidate contains enough information under the auxiliary criterion to expand the current system. At the terminal iteration, \eqref{eq:stopping_score} holds because termination is defined by $\widetilde{\cS}^{(\bar k)}(\lambda)=\emptyset$. The procedure has therefore exhausted the information among the remaining candidates that is detectable by this criterion at the given value of $\lambda$. This result connects the optimization stopping rule to the economic purpose of constructing the SVAR information set.

The preceding results establish finite termination, uniqueness, and an exact characterization of stopping. The next two propositions establish that the update rule is invariant to the units of the remaining candidates and stable under sufficiently small changes in the fitted composite disturbances.

\begin{proposition} 
\label{prop:unit_invariance}
Under Assumption~\ref{ass:well_defined}, fix an iteration $k$, the current system, and its fitted composite disturbances. Replace each remaining candidate $y_{i,t}$ by $\alpha_i+d_i y_{i,t}$, where $d_i\neq0$, and recompute its standardization moments. The selected set $\widetilde{\cS}^{(k)}(\lambda)$ is unchanged.
\end{proposition}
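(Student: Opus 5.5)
The plan is to show that the affine change of units leaves each standardized regressor unchanged up to a sign. The auxiliary problem \eqref{eq:l1_obj} then changes only by a sign flip of some coordinates of $\beta_q$, and such a flip preserves both the $\ell_1$ penalty and the zero pattern.

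\textbf{Step 1: the standardized candidate changes at most by a sign.} Write $z_{i,t}=\alpha_i+d_iy_{i,t}$. Then the recomputed mean is $\bar z_i=\alpha_i+d_i\bar y_i$. The recomputed standard deviation is $\widehat\sigma_i^z=|d_i|\widehat\sigma_i$, which is strictly positive because $d_i\neq0$ and $\widehat\sigma_i>0$ under Assumption~\ref{ass:well_defined}. Hence
\[
  z_{i,t}^{\mathrm{std}}=\frac{d_i(y_{i,t}-\bar y_i)}{|d_i|\widehat\sigma_i}=s_i\,y_{i,t}^{\mathrm{std}},\qquad s_i=\sgn(d_i)\in\{-1,1\},
\]
for every $t$, including lagged values. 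Stacking the contemporaneous value and the $\cL$ lags gives the transformed regressor matrix $X_{\mathrm{new}}^{(k)}=X^{(k)}S$. Here $S$ is the $p_k\times p_k$ diagonal matrix that places $s_{\iota_j^{(k)}}$ on every column associated with candidate $j$ at every lag, so $S$ is orthogonal and $S^2=I$.

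\textbf{Step 2: the dependent variables are unchanged.} The current system, its posterior mode and the fitted disturbances $\widehat{\bm u}_q^{(k)}$ are held fixed by hypothesis. They involve only variables in $\cI^{(k)}$, so they are unaffected by rescaling the remaining candidates.

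\textbf{Step 3: the minimizers correspond under a sign flip.} By Lemma~\ref{lem:centering}, it suffices to compare the centered problems \eqref{eq:centered_auxiliary}. For any $\beta$, set $\gamma=S\beta$. Then
\[
  MX_{\mathrm{new}}^{(k)}\beta=MX^{(k)}S\beta=MX^{(k)}\gamma,
  \qquad
  \|\beta\|_1=\|\gamma\|_1 .
\]
So the new objective evaluated at $\beta$ equals the old objective evaluated at $S\beta$. Since $\beta\mapsto S\beta$ is a bijection, the minimizers correspond. The old problem has a unique minimizer $\widehat\beta_q$ by Assumption~\ref{ass:well_defined}, so the new problem has the unique minimizer $S\widehat\beta_q$. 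This holds for each $q\in\cQ$, with the intercept recovered by Lemma~\ref{lem:centering}.

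\textbf{Step 4: the selected set is the same.} The entries of $S\widehat\beta_q$ differ from those of $\widehat\beta_q$ only in sign. Hence each block $\widehat G_j^{(k)}(\lambda)$ has the same zero pattern as before, and in fact the same Frobenius norm. By \eqref{eq:selected_set}, $\widetilde{\cS}^{(k)}(\lambda)$ is therefore unchanged.

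I expect the main point needing care to be Step 3, specifically transferring uniqueness of the minimizer to the transformed problem. Uniqueness is what allows the zero pattern to be read off unambiguously, and it follows directly from the bijection $\beta\mapsto S\beta$. Alternatively, I could use Proposition~\ref{prop:stopping_condition}-style score vectors, since $(X_{\mathrm{new}}^{(k)})'M\widehat{\bm u}_q=S(X^{(k)})'M\widehat{\bm u}_q$ has the same sup-norm. That argument gives only the emptiness of the selected set, however, so the bijection argument is the one to use.
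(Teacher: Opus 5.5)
Your proposal is correct and follows essentially the same argument as the paper. The affine change makes each standardized regressor equal to $\sgn(d_i)$ times the original, so $X_{\mathrm{new}}^{(k)}=X^{(k)}H$ for a diagonal sign matrix $H$, and the $\ell_1$-preserving bijection $\beta\mapsto H\beta$ together with uniqueness gives $\widehat\beta_{q,\mathrm{new}}=H\widehat\beta_q$ with an unchanged zero pattern in every block. The only cosmetic difference is that you pass through the centered problem of Lemma~\ref{lem:centering}, whereas the paper compares the original criteria directly using $X_{\mathrm{new}}^{(k)}\beta_{q,\mathrm{new}}=X^{(k)}\beta_q$.
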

\begin{proof}
   See Appendix~\ref{sec:proofs}.
\end{proof}

Proposition~\ref{prop:unit_invariance} ensures that, for a given iteration and fixed fitted composite disturbances, the variables admitted by the procedure do not depend on arbitrary choices of measurement units. The next proposition turns from changes in measurement units to changes in the fitted composite disturbances. It gives a finite sample condition under which sufficiently small changes in these disturbances leave the selected set $\widetilde{\cS}^{(k)}(\lambda)$ unchanged for a given iteration.

For equation $q$, let $\widehat\beta_q^{(k)}$ be the unique solution of \eqref{eq:centered_auxiliary}, and define its nonzero coefficient index set by
\[
  \cA_q^{(k)}
  =
  \left\{h:\widehat\beta_{q,h}^{(k)}\neq0\right\}.
\]
Write $\widetilde X^{(k)}=MX^{(k)}$, $\widetilde{\bm u}_q^{(k)}=M\widehat{\bm u}_q^{(k)}$, and
\[
  \widehat{\bm v}_q^{(k)}
  =
  \widetilde{\bm u}_q^{(k)}
  -
  \widetilde X^{(k)}\widehat\beta_q^{(k)}.
\]

\begin{assumption} 
\label{ass:selection_margins}
For every equation $q$, the matrix $T_{\cL}^{-1}(\widetilde X_{\cA_q^{(k)}}^{(k)})'\widetilde X_{\cA_q^{(k)}}^{(k)}$ is positive definite when $\cA_q^{(k)}$ is nonempty. There exist constants $b_q>0$ and $\delta_q>0$ such that
\[
  \min_{h\in\cA_q^{(k)}}
  |\widehat\beta_{q,h}^{(k)}|
  \geq b_q
\]
when $\cA_q^{(k)}$ is nonempty, and
\[
  \max_{h\notin\cA_q^{(k)}}
  \left|
    \frac{1}{T_{\cL}}
    (\widetilde X_h^{(k)})'\widehat{\bm v}_q^{(k)}
  \right|
  \leq
  \lambda-\delta_q
\]
when the complement of $\cA_q^{(k)}$ is nonempty.
\end{assumption}

The constants $b_q$ and $\delta_q$ define the selection margins for equation $q$. When nonzero estimated coefficients are present, the first margin is a positive lower bound on their absolute values and measures how far the selected coefficients are from zero. When unselected regressors are present, the second margin is a positive lower bound on the gap between $\lambda$ and the largest absolute average cross-product between an unselected regressor and the residual $\widehat{\bm v}_q^{(k)}$. It measures how far an unselected regressor is from meeting the threshold for a nonzero coefficient. The following proposition and corollary establish stability when both relevant selection margins are positive.

\begin{proposition} 
\label{prop:update_stability}
Under Assumptions~\ref{ass:well_defined} and~\ref{ass:selection_margins}, there exists $\epsilon_k>0$ such that replacing each fitted composite disturbance vector $\widehat{\bm u}_q^{(k)}$ by $\widehat{\bm u}_q^{(k)}+h_q$ with $\|h_q\|_2<\epsilon_k$ leaves every nonzero coefficient index set $\cA_q^{(k)}$ unchanged. Hence, the selected variable set $\widetilde{\cS}^{(k)}(\lambda)$ and the next index set $\cI^{(k+1)}$ are unchanged.
\end{proposition}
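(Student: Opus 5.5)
The argument is a standard Lasso support-stability argument via the KKT conditions, carried out separately for each equation $q\in\cQ$. By Lemma~\ref{lem:centering}, the perturbed problem is \eqref{eq:centered_auxiliary} with $\widetilde{\bm u}_q^{(k)}$ replaced by $\widetilde{\bm u}_q^{(k)}+Mh_q$. Write $\cA=\cA_q^{(k)}$ and $s=\sgn(\widehat\beta_{q,\cA}^{(k)})$. The KKT conditions for the unperturbed solution are
\[
T_{\cL}^{-1}(\widetilde X_{\cA}^{(k)})'\widehat{\bm v}_q^{(k)}=\lambda s,
\qquad
\bigl|T_{\cL}^{-1}(\widetilde X_h^{(k)})'\widehat{\bm v}_q^{(k)}\bigr|\le\lambda-\delta_q\quad\text{for } h\notin\cA.
\]
I will construct a candidate for the perturbed problem that keeps the same support and sign pattern, and check that it satisfies the KKT conditions.

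\textbf{Construction.} Let $\Sigma_{\cA}=T_{\cL}^{-1}(\widetilde X_{\cA}^{(k)})'\widetilde X_{\cA}^{(k)}$, which is invertible by Assumption~\ref{ass:selection_margins}. Define the candidate by
\[
\bar\beta_{\cA}=\widehat\beta_{q,\cA}^{(k)}+\Sigma_{\cA}^{-1}T_{\cL}^{-1}(\widetilde X_{\cA}^{(k)})'Mh_q,
\qquad
\bar\beta_{\cA^c}=0.
\]
This is the unique vector supported on $\cA$ solving the restricted stationarity equations with the fixed sign vector $s$. Because $M$ is a projection, $\|Mh_q\|_2\le\|h_q\|_2$. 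Hence
\[
\|\bar\beta_{\cA}-\widehat\beta_{q,\cA}^{(k)}\|_\infty\le C_1\|h_q\|_2,
\qquad
C_1=\|\Sigma_{\cA}^{-1}\|_2\,T_{\cL}^{-1}\|\widetilde X_{\cA}^{(k)}\|_2.
\]
If $\|h_q\|_2<b_q/C_1$, every coordinate of $\bar\beta_{\cA}$ keeps its sign. The stationarity equation on $\cA$ then holds with subgradient $s$.

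\textbf{Inactive coordinates.} The new residual is
\[
\bar{\bm v}=\widehat{\bm v}_q^{(k)}+(I-\widetilde X_{\cA}^{(k)}\Sigma_{\cA}^{-1}T_{\cL}^{-1}(\widetilde X_{\cA}^{(k)})')Mh_q .
\]
The operator in parentheses is an orthogonal projection, so it has norm at most one. Therefore, for each $h\notin\cA$,
\[
\bigl|T_{\cL}^{-1}(\widetilde X_h^{(k)})'\bar{\bm v}\bigr|
\le\lambda-\delta_q+T_{\cL}^{-1}\|\widetilde X_h^{(k)}\|_2\|h_q\|_2 ,
\]
and this is strictly below $\lambda$ once $\|h_q\|_2<\delta_q T_{\cL}/\max_h\|\widetilde X_h^{(k)}\|_2$.

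\textbf{Uniqueness of the perturbed solution.} This is the main obstacle: Assumption~\ref{ass:well_defined} gives uniqueness only for the original data. I would use the standard fact that if a KKT point has strict dual feasibility off its support and $\widetilde X_{\cA}$ has full column rank, it is the unique Lasso solution. The fitted values $\widetilde X\beta$ are common to all solutions, so any other solution $\beta'$ yields the same residual $\bar{\bm v}$. The subgradient conditions with strict inequality then force $\beta'_{\cA^c}=0$. Full column rank of $\widetilde X_{\cA}$ then forces $\beta'_{\cA}=\bar\beta_{\cA}$.

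Thus the nonzero index set of the perturbed solution equals $\cA_q^{(k)}$. Set $\epsilon_k$ to be the minimum of the two thresholds over $q\in\cQ$; this is positive because $\cQ$ is finite. The empty cases, where $\cA$ or its complement is empty, simply drop the corresponding condition. Since $\widetilde{\cS}^{(k)}(\lambda)$ in \eqref{eq:selected_set} depends only on which coefficients are nonzero, it is unchanged, and so is $\cI^{(k+1)}$ by \eqref{eq:update}.
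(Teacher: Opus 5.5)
Your proof is correct and follows essentially the same route as the paper. Both arguments build the same candidate supported on $\cA_q^{(k)}$ from the restricted stationarity equations with the sign vector held fixed, preserve signs using the margin $b_q$, and keep strict dual feasibility off the support using $\delta_q$. Both then get uniqueness from that strict dual feasibility together with full column rank of $\widetilde X_{\cA_q^{(k)}}^{(k)}$, and take $\epsilon_k$ as a minimum over the finite set $\cQ$.

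The differences are minor. You give explicit radii through the projection bounds, where the paper argues by continuity. You also invoke the common-fitted-values property of Lasso solutions, where the paper verifies uniqueness directly from a decomposition of the criterion difference.
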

\begin{proof}
   See Appendix~\ref{sec:proofs}.
\end{proof}

\begin{corollary} 
\label{cor:path_stability}
Suppose Assumptions~\ref{ass:well_defined} and~\ref{ass:selection_margins} hold at every iteration of the construction path for a given $\lambda$. For each iteration $k$ and equation $q$, replace $\widehat{\bm u}_q^{(k)}$ by
\[
  \widehat{\bm u}_q^{(k)}+h_q^{(k)},
  \qquad
  \|h_q^{(k)}\|_2<\epsilon_k,
\]
where $\epsilon_k$ is the stability radius in Proposition~\ref{prop:update_stability}. Then the entire sequence of index sets and the terminal variable vector $y_t^\dagger(\lambda)$ remain unchanged.
\end{corollary}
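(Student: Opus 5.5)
The corollary follows from Proposition~\ref{prop:update_stability} by induction on the iteration index $k$, with each perturbation applied to the perturbed path as it is generated. The claim to prove is that, for every $k\le\bar k$, the perturbed procedure reaches the same current index set $\cI^{(k)}$ as the unperturbed one. The base case is immediate: both procedures start from the core set $\cI^{(0)}$, which is fixed by the researcher and does not depend on the fitted disturbances.

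For the inductive step, suppose the perturbed path has reached the same $\cI^{(k)}$. Because the current system, the prior, the data $Y^{(k)}$ and the deterministic tie-breaking rule after \eqref{eq:mode} are then all identical, Assumption~\ref{ass:well_defined} delivers the same posterior mode $\widehat\theta^{(k)}$. It therefore also delivers the same unperturbed fitted disturbances $\widehat{\bm u}_q^{(k)}$, and hence the same regressor matrix $X^{(k)}$. The remaining-candidate set is identical, and the standardization moments are fixed once.

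The perturbed procedure replaces $\widehat{\bm u}_q^{(k)}$ by $\widehat{\bm u}_q^{(k)}+h_q^{(k)}$ with $\|h_q^{(k)}\|_2<\epsilon_k$. Since Assumption~\ref{ass:selection_margins} holds at this iteration and the radius $\epsilon_k$ is computed from these same objects, Proposition~\ref{prop:update_stability} applies directly. It gives that every $\cA_q^{(k)}$ is unchanged, so $\widetilde{\cS}^{(k)}(\lambda)$ is unchanged, and by \eqref{eq:update} so is $\cI^{(k+1)}$.

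Termination must also coincide. The perturbed procedure stops at $k$ exactly when its selected set is empty. By the inductive step, its selected set equals the unperturbed $\widetilde{\cS}^{(k)}(\lambda)$ at every $k$, so the perturbed procedure stops at the same $\bar k$. By Proposition~\ref{prop:convergence}, $\bar k$ is finite, so the induction covers only finitely many steps. Consequently the entire sequence $\{\cI^{(k)}\}_{k=0}^{\bar k}$, the terminal set $\cI^\dagger(\lambda)$, and the terminal vector $y_t^\dagger(\lambda)$ coincide with those of the unperturbed path, uniquely so by Proposition~\ref{prop:uniqueness}.

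The main subtlety is in the inductive step. The radius $\epsilon_k$ and the margins in Assumption~\ref{ass:selection_margins} are defined relative to the unperturbed iterate. They are valid for the perturbed path only because the induction hypothesis guarantees that the perturbed path has reached the same $\cI^{(k)}$. Perturbations at earlier iterations feed into the next iterate only through the index set, not through the disturbances. Once the index set is known to match, the next system is reestimated from scratch and its unperturbed disturbances are the original ones, so earlier perturbations cannot accumulate.
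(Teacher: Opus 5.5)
Your proof is correct and follows the paper's argument. It inducts on $k$ from the common core set $\cI^{(0)}$, applies Proposition~\ref{prop:update_stability} at each iteration to keep $\widetilde{\cS}^{(k)}(\lambda)$ and hence $\cI^{(k+1)}$ unchanged, and notes that identical selected sets force both paths to stop at the same finite $\bar k$. Your explicit remark that reestimation on the common index set reproduces the original $\widehat{\bm u}_q^{(k)}$, so earlier perturbations cannot accumulate, spells out a step the paper compresses into the statement that the current system is the same along the two paths.
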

\begin{proof}
   See Appendix~\ref{sec:proofs}.
\end{proof}

Under Assumption~\ref{ass:well_defined}, the posterior mode exists and the auxiliary minimizers are unique. The prespecified deterministic rule stated after \eqref{eq:mode} selects one posterior mode and hence determines one fitted composite disturbance vector at every iteration. Proposition~\ref{prop:convergence} establishes that the procedure terminates after finitely many iterations, and Proposition~\ref{prop:uniqueness} establishes that each value of $\lambda$ determines a unique construction path and terminal variable vector. Proposition~\ref{prop:stopping_condition} gives economic meaning to termination: no remaining candidate contains enough information under the auxiliary criterion to expand the current system. Proposition~\ref{prop:unit_invariance} establishes that the admission decision at a given iteration does not depend on the units of measurement of the remaining candidates.

Under Assumption~\ref{ass:selection_margins}, Proposition~\ref{prop:update_stability} and Corollary~\ref{cor:path_stability} have practical importance for implementing the algorithm. The fitted composite disturbances are computed from posterior mode estimates, which may vary slightly with numerical tolerances, rounding, or optimization routines. Proposition~\ref{prop:update_stability} establishes that such small changes do not alter the variables admitted at a given iteration when the selection margins hold.

Corollary~\ref{cor:path_stability} establishes the corresponding result for the full procedure. A small change at an early iteration could otherwise alter the next system, later fitted disturbances, and the final selected variables. The corollary rules out this propagation when the perturbations remain within the stability radius at each iteration. The complete construction path and the terminal variable vector then remain unchanged. These results establish the local reproducibility and practical reliability of the algorithm, an important property in applied work where numerical tolerances are unavoidable.

\section{OOS Selection of the Model}\label{sec:oos}

In this section, we use the OOS forecast performance of the core variables to select the model complexity parameter $\widehat\lambda_\alpha$ from the candidate values of $\lambda$. Given the terminal map $\lambda\mapsto y_t^\dagger(\lambda)$ established in Sections~\ref{sec:modelconstruction} and~\ref{sec:theory}, each value of $\lambda$ produces a unique terminal variable vector. The model complexity parameter governs admission in the auxiliary optimization; each admission alters the system and the fitted composite disturbances at later iterations. Terminal system size varies with $\lambda$. Our model-construction methodology therefore does not favor small or large systems \emph{a priori}.

\subsection{Terminal systems and the timing convention}
\label{subsec:oos_timing}

Let $\Lambda$ be the finite prespecified grid of model complexity parameter values. For each $\lambda\in\Lambda$, Sections~\ref{sec:modelconstruction} and~\ref{sec:theory} deliver the terminal index set $\cI^\dagger(\lambda)$, the terminal variable vector $y_t^\dagger(\lambda)$, and its dimension $n^\dagger(\lambda)$. Proposition~\ref{prop:uniqueness} ensures that these objects are uniquely determined. The terminal systems generated by different values of $\lambda$ need not be nested, and their sizes need not be monotone in $\lambda$. We compare them by the OOS forecast loss of the core variables.

Every terminal system is constructed from the common estimation sample through $T$. The validation period does not enter the construction of any terminal index set. \citet{GuKellyXiu2020} use forecast performance over a validation sample to tune their hyperparameters. We follow the same training--validation logic to select our model complexity parameter $\widehat\lambda_\alpha$ and hence the terminal system.\footnote{Unlike an expanding-window exercise that re-estimates the model as new data arrive \citep[for example]{GiannoneLenzaPrimiceri2021}, we obtain the posterior draws from data through $T$ and hold them fixed throughout the validation period. The forecasts are updated recursively as new observations become available. Our purpose is to compare terminal systems, not to mimic a real-time forecaster who re-estimates the model as new data arrive.}

Let $\theta_\lambda$ denote the parameters of the terminal system indexed by $\lambda$, and let $Y_T^\dagger(\lambda)$ denote its observations through $T$. The posterior distribution
\[
  p_\lambda\!\left(\theta_\lambda\mid Y_T^\dagger(\lambda)\right)
\]
is computed for each terminal system. Let $s=1,\ldots,S$ index posterior draws $\theta_\lambda^{(s,T)}$. For each posterior draw, we construct the OOS forecast path recursively using the data available in the validation period and compute the corresponding OOS forecast loss.

In our paper, ``out of sample (OOS)'' refers to the forecasts. The posterior distribution and the terminal system remain fixed after $T$. At each forecast origin, the forecast uses the data available at that time. Once the realization is observed, it is used to compute the forecast error and enters the information set for the next forecast. The validation period serves one purpose: the resulting OOS forecast losses select $\widehat\lambda_\alpha$. 

Let $T+1,\ldots,T+R$ be the validation period, let $H$ be the maximum forecast horizon, and let $\mathcal O\subseteq\{T,\ldots,T+R-H\}$ be a prespecified set of $F=|\mathcal O|$ forecast origins. At an origin $\tau\in\mathcal O$, forecasts condition on observations through $\tau$, but the posterior distribution remains based on the estimation sample ending at $T$. The validation observations update the conditioning values in the forecast recursion and provide the realizations used to compute forecast errors. They do not change the posterior distribution. Every terminal system is evaluated at the same forecast origins, horizons, and realizations.

\subsection{OOS forecast loss for the core variables}
\label{subsec:oos_loss}

Let $J=n^{(0)}$ be the number of core variables, placed first in $y_t$, and let $y_{j,t}$ denote core variable $j$ for $j=1,\ldots,J$. For a core variable represented in log levels, the target is its growth rate. For a core variable represented in levels, the target remains in levels. Let $\mathcal J_{\log}\subseteq\{1,\ldots,J\}$ index the core variables represented in log levels and define
\begin{equation}
\label{eq:oos_target}
  z_{j,t}
  =
  \begin{cases}
    y_{j,t}-y_{j,t-1}, & j\in\mathcal J_{\log},\\
    y_{j,t}, & j\notin\mathcal J_{\log}.
  \end{cases}
\end{equation}
The same transformation is applied to realized values and forecasts for every terminal system. Forecast errors are therefore evaluated in prespecified economic units: growth rates for variables represented in log levels and levels for all other variables, including percentage points for rates and spreads.

For posterior draw $s$, let $y^{(s,T)}_{j,\tau+h\mid\tau}(\lambda)$ denote the $h$ step ahead forecast of core variable $j$ from the terminal system indexed by $\lambda$. Set $y^{(s,T)}_{j,\tau\mid\tau}(\lambda)=y_{j,\tau}$ and define
\begin{equation*}
  z^{(s,T)}_{j,\tau+h\mid\tau}(\lambda)
  =
  \begin{cases}
    y^{(s,T)}_{j,\tau+h\mid\tau}(\lambda)-y^{(s,T)}_{j,\tau+h-1\mid\tau}(\lambda), & j\in\mathcal J_{\log},\\
    y^{(s,T)}_{j,\tau+h\mid\tau}(\lambda), & j\notin\mathcal J_{\log}.
  \end{cases}
\end{equation*}
Define the corresponding realization by
\begin{equation*}
  z_{j,\tau+h}
  =
  \begin{cases}
    y_{j,\tau+h}-y_{j,\tau+h-1}, & j\in\mathcal J_{\log},\\
    y_{j,\tau+h}, & j\notin\mathcal J_{\log}.
  \end{cases}
\end{equation*}
The squared forecast error is
\begin{equation*}
  \mathrm{SE}^{(s,T)}_{j,h,\tau}(\lambda)
  =
  \left[
    z^{(s,T)}_{j,\tau+h\mid\tau}(\lambda)-z_{j,\tau+h}
  \right]^2.
\end{equation*}
The absolute forecast error is
\begin{equation*}
  \mathrm{AE}^{(s,T)}_{j,h,\tau}(\lambda)
  =
  \left|
    z^{(s,T)}_{j,\tau+h\mid\tau}(\lambda)-z_{j,\tau+h}
  \right|.
\end{equation*}

The OOS forecast loss for posterior draw $s$ averages the squared forecast errors over core variables, horizons, and forecast origins:
\begin{equation}
\label{eq:oos_loss}
  \cL^{(s,T)}_{\mathrm{OOS}}(\lambda)
  =
  \frac{1}{JHF}
  \sum_{j=1}^{J}
  \sum_{h=1}^{H}
  \sum_{\tau\in\mathcal O}
  \mathrm{SE}^{(s,T)}_{j,h,\tau}(\lambda).
\end{equation}
For each $\lambda$, the draws $\{\cL^{(s,T)}_{\mathrm{OOS}}(\lambda)\}_{s=1}^{S}$ form the empirical posterior distribution of the OOS forecast loss. We also compute the OOS forecast loss using absolute forecast errors. In that case, we replace $\mathrm{SE}^{(s,T)}_{j,h,\tau}(\lambda)$ with $\mathrm{AE}^{(s,T)}_{j,h,\tau}(\lambda)$ in \eqref{eq:oos_loss}.

\subsection{The reference system and posterior OOS forecast loss differences}\label{subsec:oos_diff}

The comparison begins with a reference terminal system. For each $\lambda\in\Lambda$, define the posterior mean OOS forecast loss by
\begin{equation*}
  \overline{\cL}^{(T)}_{\mathrm{OOS}}(\lambda)
  =
  \frac{1}{S}
  \sum_{s=1}^{S}
  \cL^{(s,T)}_{\mathrm{OOS}}(\lambda).
\end{equation*}
The reference model complexity parameter satisfies
\begin{equation}
\label{eq:benchmark}
  \lambda^\ast
  \in
  \arg\min_{\lambda\in\Lambda}
  \overline{\cL}^{(T)}_{\mathrm{OOS}}(\lambda).
\end{equation}
A prespecified ordering of $\Lambda$ resolves ties in \eqref{eq:benchmark}. The posterior mean of the OOS forecast loss determines only the reference system.\footnote{The posterior mean is the Bayesian expected-loss criterion for choosing the reference system, provided that the posterior expectation exists. This criterion applies to OOS forecast losses based on either squared or absolute forecast errors.} The final choice of $\lambda$ uses the posterior distribution of OOS forecast loss differences as follows.

For $\lambda\neq\lambda^\ast$, let $s=1,\ldots,S$ index independent pairs of posterior draws, with one draw from the terminal system indexed by $\lambda$ and one draw from the reference system indexed by $\lambda^\ast$. The independent pairing defines a prespecified product-posterior comparison convention for models with different parameter spaces. It is not implied by a joint posterior over the models. Define
\begin{equation}
\label{eq:loss_diff}
  \Delta^{(s,T)}(\lambda)
  =
  \cL^{(s,T)}_{\mathrm{OOS}}(\lambda)
  -
  \cL^{(s,T)}_{\mathrm{OOS}}(\lambda^\ast),
  \qquad
  s=1,\ldots,S.
\end{equation}
The paired differences
\[
  \left\{
    \Delta^{(s,T)}(\lambda)
  \right\}_{s=1}^{S}
\]
form the empirical posterior distribution of the OOS forecast loss difference. For every $\lambda\in\Lambda$ and for each loss, let $\widehat F_\lambda$ be the empirical distribution function of the $S$ posterior OOS forecast loss differences. Define
\[
  q_{(1-\alpha)/2}(\lambda)
  =
  \inf\left\{
    x:
    \widehat F_\lambda(x)\geq\frac{1-\alpha}{2}
  \right\},
  \;
  q_{(1+\alpha)/2}(\lambda)
  =
  \inf\left\{
    x:
    \widehat F_\lambda(x)\geq\frac{1+\alpha}{2}
  \right\}
\]
for $\alpha\in(0,1)$. Write
\[
  \mathrm{PI}_\alpha[\Delta(\lambda)]
  =
  \left[
    q_{(1-\alpha)/2}(\lambda),
    q_{(1+\alpha)/2}(\lambda)
  \right]
\]
for the equal-tailed posterior probability interval of level $\alpha$. The final selection rule compares terminal systems through the posterior distribution of their OOS forecast loss differences relative to $\lambda^\ast$.

The resulting probability interval is conditional on the terminal construction paths, the loss-specific reference system, the realized validation outcomes, and the independent product coupling. We use it as an algorithmic posterior comparison device, not as a posterior probability over model space.

\subsection{Posterior comparison and the selection rule}
\label{subsec:oos_rule}

At probability level $\alpha$, define the sets of terminal systems that are credibly better than the reference system and credibly indistinguishable from it in forecast performance by
\[
  \Lambda_\alpha^{-}
  =
  \left\{
    \lambda\in\Lambda\setminus\{\lambda^\ast\}:
    q_{(1+\alpha)/2}(\lambda)<0
  \right\},
\]
\[
  \Lambda_\alpha^{0}
  =
  \{\lambda^\ast\}
  \cup
  \left\{
    \lambda\in\Lambda\setminus\{\lambda^\ast\}:
    q_{(1-\alpha)/2}(\lambda)
    \leq 0 \leq
    q_{(1+\alpha)/2}(\lambda)
  \right\}.
\]
A system in $\Lambda_\alpha^{-}$ is classified as credibly better than the reference system at probability level $\alpha$. A system in $\Lambda_\alpha^{0}$ is classified as credibly indistinguishable from the reference because its equal-tailed interval contains zero. A system whose interval lies above zero is classified as credibly worse and is not retained.

The comparison set gives priority to terminal systems that are credibly better than the reference system. 
\begin{equation}
\label{eq:equiv_set}
  \Lambda_\alpha
  =
  \begin{cases}
    \Lambda_\alpha^{-}, & \Lambda_\alpha^{-}\neq\emptyset,\\
    \Lambda_\alpha^{0}, & \Lambda_\alpha^{-}=\emptyset.
  \end{cases}
\end{equation}
The explicit inclusion of $\lambda^\ast$ in $\Lambda_\alpha^{0}$ reflects that the OOS forecast loss difference between the reference system and itself is identically zero. By construction, therefore, $\Lambda_\alpha$ is nonempty.

The credibly-indistinguishable classification depends on the loss function. Squared error places greater weight on large forecast misses, whereas absolute error is less sensitive to them \citep{Gneiting2011}. We therefore apply the preceding construction separately under squared and absolute forecast errors. This application yields the comparison sets $\Lambda_{\alpha}^{\mathrm{SE}}$ and $\Lambda_{\alpha}^{\mathrm{AE}}$, each relative to its loss-specific reference system. Define the combined comparison set by
\[
  \Lambda_{\alpha}^{\cup}
  =
  \Lambda_{\alpha}^{\mathrm{SE}}
  \cup
  \Lambda_{\alpha}^{\mathrm{AE}}.
\]
The two losses provide alternative admissibility criteria. We take their union because our objective is to retain the largest terminal system that satisfies the posterior comparison criterion under either loss.

Define
\[
  n_{\alpha,\cup}^{\max}
  =
  \max_{\lambda\in\Lambda_{\alpha}^{\cup}}
  n^\dagger(\lambda),
  \qquad
  \Lambda_{\alpha,\cup}^{\max}
  =
  \left\{
    \lambda\in\Lambda_{\alpha}^{\cup}:
    n^\dagger(\lambda)=n_{\alpha,\cup}^{\max}
  \right\}.
\]
Let $\prec$ be a prespecified strict total order on the grid $\Lambda$. The selected model complexity parameter is
\begin{equation}
\label{eq:selection_rule}
  \widehat\lambda_\alpha
  =
  \min_{\prec}\Lambda_{\alpha,\cup}^{\max}.
\end{equation}
If $\Lambda_{\alpha,\cup}^{\max}$ is a singleton, its sole member is selected. If several systems share the largest dimension, the prespecified ordering $\prec$ resolves the tie. Because terminal systems need not be nested, the largest system in this rule is the system with the largest dimension, not a system that contains every other retained system.

We set the probability level to $\alpha=0.68$ in the empirical analysis, following the reporting convention of \citet{SimsZha1999}. The 68\% intervals provide sharper discrimination than the 90\% and 95\% intervals.\footnote{We report the 90\% and 95\% intervals as diagnostics. At these wider levels, the sign of the loss difference is unresolved for nearly all terminal systems in our applications.} Our zero-containing 68\% comparison set parallels the band-based logic of the one-standard-error rule of cross validation. That rule selects the most parsimonious model whose estimated prediction error lies within one standard error of the minimum \citep{BreimanFriedmanOlshenStone1984,HastieTibshiraniFriedman2009}. Our rule instead gives priority to credible improvement and otherwise retains the largest system credibly indistinguishable from the reference, because our objective is information retention rather than predictive parsimony.

\subsection{Existence and uniqueness of the OOS selection}
\label{sec:oos_theory}

This subsection establishes that the OOS forecast loss criterion returns exactly one model complexity parameter, and that the mapping from the data to the selected system exists and is unique.

\begin{assumption} 
\label{ass:oos_regularity}
The posterior probability level satisfies $\alpha\in(0,1)$, and the following conditions hold.
\begin{enumerate}[(i)]
  \item The grid $\Lambda$ is finite and nonempty, and is endowed with a strict total order $\prec$.
  \item For every $\lambda\in\Lambda$, the terminal variable vector $y_t^\dagger(\lambda)$ is unique, so the terminal dimension $n^\dagger(\lambda)=|\cI^\dagger(\lambda)|$ is well defined.
  \item The posterior of each terminal system is based on data through $T$. The posterior draws are maintained over the validation period, and each squared-error or absolute-error OOS forecast loss in \eqref{eq:oos_loss} has a finite value for every draw $s=1,\ldots,S$ and every $\lambda\in\Lambda$.
   \item For every $\lambda\in\Lambda$ and for each loss, the $S$ posterior draws uniquely determine the empirical quantiles $q_{(1-\alpha)/2}(\lambda)$ and $q_{(1+\alpha)/2}(\lambda)$ used as the interval endpoints.  
\end{enumerate}
\end{assumption}

Condition~(ii) is part of Proposition~\ref{prop:uniqueness}. Conditions~(i), (iii), and~(iv) are part of the implemented procedure: a finite grid with a fixed order, a fixed set of posterior draws, finite OOS forecast loss values, and a fixed rule for reading quantiles off those draws.

The OOS selection is constructed one loss at a time. The next lemma establishes that the entire per-loss construction---the reference system, the credibly better set, the credibly indistinguishable set, and the retained set $\Lambda_\alpha$ defined in \eqref{eq:equiv_set}---is well defined, unique, and nonempty. Proposition~\ref{prop:welldefined} then applies this construction to the two losses and combines the resulting retained sets.

\begin{lemma} 
\label{lem:perloss}
Let $g\in\{\mathrm{SE},\mathrm{AE}\}$ denote either the squared or the absolute forecast loss. Let $\lambda^{\ast}_{g}$, $\Lambda_{\alpha}^{g,-}$, and $\Lambda_{\alpha}^{g,0}$ denote the corresponding loss-specific objects, and let $\Lambda_{\alpha}^{g}$ denote the retained set obtained by applying the rule in \eqref{eq:equiv_set} to these two sets. Under Assumption~\ref{ass:oos_regularity}, $\lambda^{\ast}_{g}$ and the sets $\Lambda_{\alpha}^{g,-}$, $\Lambda_{\alpha}^{g,0}$, and $\Lambda_{\alpha}^{g}$ exist and are unique. Moreover, $\lambda^{\ast}_{g}\in\Lambda_{\alpha}^{g,0}$, and $\Lambda_{\alpha}^{g}$ is nonempty.
\end{lemma}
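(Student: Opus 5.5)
The plan is to verify, for a fixed loss $g$, each object in the order in which the paper defines it. Every object is obtained from the previous one by a deterministic operation on finitely many finite real numbers, so existence and uniqueness pass down the chain.

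The first step is the reference system. By Assumption~\ref{ass:oos_regularity}(iii), every draw-level loss $\cL^{(s,T)}_{\mathrm{OOS}}(\lambda)$ under $g$ is a finite real number. Hence the posterior mean $\overline{\cL}^{(T)}_{\mathrm{OOS}}(\lambda)$, an average of $S$ finite numbers, is a well-defined finite real number for every $\lambda\in\Lambda$. By Assumption~\ref{ass:oos_regularity}(i), $\Lambda$ is finite and nonempty, so the argmin in \eqref{eq:benchmark} is a nonempty finite set, since a real-valued function on a finite nonempty set attains its minimum. The prespecified strict total order $\prec$ then selects exactly one element, its $\prec$-least member. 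This gives existence and uniqueness of $\lambda^{\ast}_{g}$.

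The second step is the two interval sets. With $\lambda^{\ast}_{g}$ fixed, the differences $\Delta^{(s,T)}(\lambda)$ in \eqref{eq:loss_diff} are finite for every $\lambda\neq\lambda^\ast_g$, because each is a difference of finite numbers. The empirical distribution function $\widehat F_\lambda$ is therefore a well-defined step function with values in $\{0,1/S,\ldots,1\}$. Since $\alpha\in(0,1)$, both levels $(1\mp\alpha)/2$ lie strictly in $(0,1)$, so each defining infimum is taken over a nonempty set that is bounded below. Each infimum therefore equals one of the finitely many draws. Assumption~\ref{ass:oos_regularity}(iv) guarantees in any case that the endpoints are uniquely determined. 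Consequently, the membership conditions $q_{(1+\alpha)/2}(\lambda)<0$ and $q_{(1-\alpha)/2}(\lambda)\le0\le q_{(1+\alpha)/2}(\lambda)$ have definite truth values for every $\lambda$. It follows that $\Lambda_\alpha^{g,-}$ and $\Lambda_\alpha^{g,0}$ are uniquely determined subsets of the finite set $\Lambda$.

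The third step is the retained set. By definition $\lambda^\ast_g\in\Lambda_\alpha^{g,0}$, because the explicit union with $\{\lambda^\ast_g\}$ places it there. The rule \eqref{eq:equiv_set} is a two-case definition, and the case is determined by whether $\Lambda_\alpha^{g,-}$ is empty, which is a definite fact. So $\Lambda_\alpha^{g}$ is unique. If $\Lambda_\alpha^{g,-}\neq\emptyset$, then $\Lambda_\alpha^{g}=\Lambda_\alpha^{g,-}$, which is nonempty by the case hypothesis. Otherwise $\Lambda_\alpha^{g}=\Lambda_\alpha^{g,0}$, which contains $\lambda^\ast_g$. In both cases $\Lambda_\alpha^{g}$ is nonempty.

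The main obstacle is minor: the quantile step requires care that the infimum is attained and finite. This follows because $\widehat F_\lambda$ is right-continuous, reaches $1$ at the largest draw, and equals $0$ below the smallest draw. The remaining subtlety is that $\Delta^{(s,T)}(\lambda)$ uses a fixed prespecified independent pairing of draws. I would state explicitly that this pairing is part of the fixed set of draws under Assumption~\ref{ass:oos_regularity}(iii), so the differences themselves are uniquely defined.
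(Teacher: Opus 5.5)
Your proposal is correct and follows essentially the same route as the paper's proof: finite posterior means on a finite grid, with $\prec$ breaking ties, give $\lambda^{\ast}_{g}$; finite paired differences and Assumption~\ref{ass:oos_regularity}(iv) make membership in $\Lambda_{\alpha}^{g,-}$ and $\Lambda_{\alpha}^{g,0}$ determinate; and the two-case rule yields a nonempty $\Lambda_{\alpha}^{g}$ because $\lambda^{\ast}_{g}\in\Lambda_{\alpha}^{g,0}$. Your remark that each empirical-quantile infimum is attained at one of the draws only adds detail beyond the paper, which simply invokes Assumption~\ref{ass:oos_regularity}(iv) for the endpoints.
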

\begin{proof}
   See Appendix~\ref{sec:proofs}.
\end{proof}

\begin{proposition} 
\label{prop:welldefined}
Under Assumption~\ref{ass:oos_regularity}, the loss-specific retained sets $\Lambda_{\alpha}^{\mathrm{SE}}$ and $\Lambda_{\alpha}^{\mathrm{AE}}$, the combined retained set $\Lambda_{\alpha}^{\cup}$, the maximal dimension $n_{\alpha,\cup}^{\max}$, the maximal set $\Lambda_{\alpha,\cup}^{\max}$, and the selected model complexity parameter $\widehat\lambda_{\alpha}=\min_{\prec}\Lambda_{\alpha,\cup}^{\max}$ in \eqref{eq:selection_rule} exist and are unique. Consequently, the selected terminal system $y_t^\dagger(\widehat\lambda_{\alpha})$ exists and is uniquely determined by the data and the prespecified procedure.
\end{proposition}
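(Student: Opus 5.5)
The argument is a bookkeeping exercise built on Lemma~\ref{lem:perloss}: each step is a deterministic set operation on finite, nonempty, uniquely determined objects. I would apply the lemma once with $g=\mathrm{SE}$ and once with $g=\mathrm{AE}$. Under Assumption~\ref{ass:oos_regularity}, the lemma gives existence and uniqueness of the retained sets $\Lambda_\alpha^{\mathrm{SE}}$ and $\Lambda_\alpha^{\mathrm{AE}}$. It also shows that both are nonempty, since each contains at least its own reference system or a credibly better system. Both are subsets of $\Lambda$ by construction.

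Next, the union $\Lambda_\alpha^{\cup}=\Lambda_\alpha^{\mathrm{SE}}\cup\Lambda_\alpha^{\mathrm{AE}}$ is a uniquely determined subset of $\Lambda$. It is nonempty because each component is nonempty, and it is finite because $\Lambda$ is finite by condition~(i). By condition~(ii), which is supplied by Proposition~\ref{prop:uniqueness}, each $\lambda\in\Lambda_\alpha^{\cup}$ has a well-defined integer dimension $n^\dagger(\lambda)\in\{n^{(0)},\ldots,n\}$. A function on a finite nonempty set attains its maximum, so $n_{\alpha,\cup}^{\max}$ exists. As a maximum value it is unique.

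The preimage set $\Lambda_{\alpha,\cup}^{\max}$ is then a uniquely determined subset of $\Lambda_\alpha^{\cup}$. It is nonempty because the maximum is attained. Since $\prec$ is a strict total order on the finite set $\Lambda$ by condition~(i), its restriction to the nonempty finite set $\Lambda_{\alpha,\cup}^{\max}$ has exactly one minimal element. Existence follows from finiteness. Uniqueness follows from totality and asymmetry: two minimal elements would be comparable, and the one ranked lower under $\prec$ would contradict the minimality of the other. This element is $\widehat\lambda_\alpha$.

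Finally, Proposition~\ref{prop:uniqueness} applied at $\lambda=\widehat\lambda_\alpha$ yields a unique terminal vector $y_t^\dagger(\widehat\lambda_\alpha)$. Every object in the chain is a deterministic function of the data, the fixed posterior draws, and the prespecified conventions, namely the grid, the ordering, the quantile rule, and the tie-breaking rules. The selected system is therefore uniquely determined by the data and the procedure. No step is a real obstacle. The one point needing care is confirming nonemptiness of $\Lambda_\alpha^{\cup}$, which is inherited entirely from Lemma~\ref{lem:perloss}.
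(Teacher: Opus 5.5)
Your proposal is correct and follows the paper's proof essentially step for step. You apply Lemma~\ref{lem:perloss} to each loss, carry nonemptiness and finiteness over to the union, take the attained maximum of $n^\dagger$, and select the unique $\prec$-least element. The only difference is in the last step: you invoke Proposition~\ref{prop:uniqueness}, which additionally requires Assumption~\ref{ass:well_defined}, whereas the paper cites Assumption~\ref{ass:oos_regularity}(ii) directly, and that condition alone suffices under the stated hypothesis.
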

\begin{proof}
   See Appendix~\ref{sec:proofs}.
\end{proof}

Proposition~\ref{prop:welldefined} completes the link between model construction and OOS selection. For each loss, the reference system minimizes the posterior mean OOS forecast loss. The posterior distribution of OOS forecast loss differences then identifies terminal systems that are credibly better than the reference system or credibly distinguishable from it. The two loss-specific retained sets are combined, and the selection rule retains the terminal system with the largest dimension in their union. Together with Propositions~\ref{prop:convergence} and~\ref{prop:uniqueness}, this result ensures that each model complexity parameter produces one terminal system and that the OOS selection returns one model complexity parameter and hence one selected terminal system.

We next apply the methodology to two of the most studied questions in the SVAR literature: the effects of credit shocks and the effects of monetary policy shocks. We demonstrate the methodology under two classic identification methods: recursive identification in the household credit application and proxy identification in the monetary policy application. The reported impulse-response bands are conditional on the system used to estimate each response. They incorporate posterior parameter uncertainty within that system but do not average across alternative information sets or reporting systems.

\section{Household Credit and Housing Production}\label{sec:household}

This application asks whether the boom-and-decline response of output to a household credit shock survives the systematic construction of the SVAR information set. MSV find in a panel of 30 countries that output initially rises after a household credit shock and later declines. Their proxy SVAR gives the shock a credit supply interpretation. Using a 10-variable model with monthly U.S. data, BPSS confirm the same boom-and-decline finding but attribute the later decline to endogenous monetary policy tightening.

The two papers therefore agree on the central impulse response but differ in their interpretation of the later decline. BPSS argue that the interpretation of a household credit shock depends on the information set. Their larger system includes interest rates, money, commodity prices, and credit spreads, but its composition is chosen by the researcher. This leaves open the model construction question: which additional variables belong in the SVAR?

We address this question under MSV's recursive identification. Output, business credit, and household credit form the core system because they are common to the MSV and BPSS analyses. Starting from this core, the procedure constructs terminal systems from a large candidate set, and OOS forecast performance selects the final information set. The selected system adds several measures of housing production and separates a household credit shock from a housing production shock.

\subsection{Data}\label{sec:data}

Our monthly candidate data set contains about 145 series used in this application and in the monetary policy transmission application discussed in Section~\ref{sec:mp}. Most series are obtained directly from Federal Reserve Economic Data (FRED), maintained by the Federal Reserve Bank of St. Louis. The remaining series are obtained from the public website of the Federal Reserve Board. We retain only series available over the full sample required by each application. This requirement keeps the estimation sample fixed as variables enter the model during construction. For both applications, we report the FRED series ID whenever one is available.

The data transformations follow standard practice in macroeconomic VAR analysis and place the variables on scales suitable for estimation and economic interpretation. Variables measured as indexes, counts, flows, or stocks enter in log levels. These variables include incomes, production, employment, price indexes, credit, money, housing permits, and housing starts. Log transformations reduce differences in scale across these series, and their impulse responses can be interpreted approximately as percentage changes. Interest rates, term and credit spreads, unemployment rates, capacity utilization rates, and other variables measured as rates or shares enter in percentage points. Their impulse responses retain those units.

\subsection{The economic question and the core evidence}\label{sec:core}

MSV begin with a 3-variable recursive VAR in log real GDP, nonfinancial firm debt in percent of lagged GDP, and household debt in percent of lagged GDP. The recursively identified household credit shock raises household debt. GDP initially rises and then falls, even though household debt remains above its pre-shock level throughout the ten-year horizon. MSV state that this first exercise describes the dynamic relation and is not intended to establish a causal relationship. They then use a low mortgage spread as an external instrument for a household credit supply shock. The proxy SVAR produces the same boom and decline of output. The central finding therefore does not rest on the external instrument. It is already present under the recursive identification, which assigns the rise and fall of output to a household credit shock.

BPSS examine whether the same finding survives in a richer monthly U.S. system. Their 10-variable SVAR includes industrial production, prices, household and business credit, money, the federal funds rate, commodity prices, a term spread, and two measures of financial stress. They identify the shocks through heteroskedasticity. Their household credit shock produces a small initial rise in industrial production and prices, followed by a later decline in industrial production and an increase in the federal funds rate. Neither measure of financial stress widens persistently. BPSS therefore attribute the later output decline to endogenous monetary tightening rather than to an inevitable reversal of credit supply or to financial stress. This interpretation does not overturn MSV's credit supply interpretation. It identifies endogenous monetary policy as an additional channel behind the later decline.

Figure~\ref{fig:hhc_qje} replicates the MSV evidence for the United States using different data, a different sample, and monthly rather than annual frequency. We use monthly data from 1960M1 through 2012M12, which match the calendar span of MSV's annual panel, 60 monthly lags to preserve the five-year lag span of their VAR, and the monthly Sims--Zha prior.\footnote{MSV measure outstanding household and nonfinancial firm credit using annual data from the Bank for International Settlements, including bank and nonbank loans and debt securities, and scale these stocks by GDP. BPSS use monthly commercial bank loan stocks in log levels: consumer and real estate loans for household credit (HHC) and commercial and industrial loans for business credit (BC). Their HHC series includes commercial real estate loans that cannot be separated over the full sample. Our HHC and BC follow BPSS. The MSV measures provide broader sectoral coverage, while the BPSS measures provide the monthly frequency required for our application. Both measure credit stocks rather than flows of new lending.} We follow the BPSS variable definitions because our analysis compares the variables selected by our procedure with the variables chosen by BPSS. 

Household credit remains above its pre-shock level throughout the forecast horizon. The posterior median of industrial production rises for about two years and then declines. Although the data, sample, frequency, and credit measures differ from those in MSV's VAR analysis, the same initial output boom and later decline continue to hold. This is the core evidence that survives BPSS's larger system despite their measured challenge to MSV's interpretation. The remaining question is whether both information sets omit a real activity margin that moves with household credit and helps generate the initial output boom.

\subsection{Identification and model construction}\label{sec:bb_identification}

We maintain the recursive identification used by MSV throughout our model construction and OOS selection. The three core variables are ordered as industrial production, business credit, and household credit, corresponding to MSV's ordering of real GDP, nonfinancial firm debt, and household debt. The household credit shock is the third shock identified by the recursive ordering. The HHC equation may respond contemporaneously to IP and BC, but the HHC shock cannot affect IP or BC on impact.\footnote{For the monthly systems, we use the BPSS sample, 1973M1--2015M6, and their lag length of 10 months. We use updated vintages of the underlying data, truncated to the same sample period. We also use the monthly Sims--Zha prior, which is closely related to the prior used by BPSS. This common setup allows us to assess whether the variables chosen by our methodology overlap with those in BPSS.}

All variables admitted beyond the core are ordered after IP, BC, and HHC. The expanded systems therefore preserve MSV's recursive ordering of the three core variables. Shocks to the added variables cannot affect the core variables on impact, although their lags enter the core equations and alter subsequent dynamics. Conversely, the core shocks can affect the added variables on impact. Adding variables also changes the fitted core equations, innovations, and impulse responses. These changes show the importance of model construction. Model construction evaluates the remaining candidates against the disturbances of output, business credit, and household credit, while OOS selection evaluates forecasts of these same variables.

At each iteration, the procedure in Section~\ref{sec:modelconstruction} estimates the current system and constructs the fitted composite disturbances for the three core equations. A remaining candidate is admitted when its contemporaneous value or included lags contain predictive information for at least one of these disturbances. Admission does not alter the maintained contemporaneous restrictions: admitted variables remain ordered after the three core variables, while their lags are allowed to enter the core equations and affect subsequent dynamics. The enlarged system is then reestimated because the fitted core disturbances change when the information set expands. For each value of $\lambda$, this process continues until no remaining candidate is admitted. The OOS criterion in Section~\ref{sec:oos} compares the resulting terminal systems using forecasts of the same three core variables (i.e., $J=n^{(0)}=3$ in \eqref{eq:oos_loss}).

\subsection{The selected system}\label{sec:bb_selection}

The procedure selects a 13-variable terminal system at $\widehat\lambda_\alpha=0.1129$ (Table~\ref{tab:mp_model11_variables}). In addition to IP, BC, and HHC, the system contains four housing variables: housing starts in the South and new private housing permits for the Northeast, the West, and the nation. It contains three labor market indicators: average weekly hours in goods producing industries, average weekly hours in manufacturing, and initial claims for unemployment insurance. It also contains three financial indicators: the one year Treasury rate relative to the federal funds rate, the excess bond premium, and the GZ credit spread.\footnote{The construction path begins at iteration 0 with 3 variables: \{\texttt{IP}, \texttt{BC}, \texttt{HHC}\}. Iteration 1 adds \{\texttt{HRS-G}, \texttt{HS-S}, \texttt{PER-NE}, \texttt{T1Y-FF}, \texttt{EBP}, \texttt{GZS}\}, producing a 9-variable system. Iteration 2 adds \{\texttt{HRS-M}, \texttt{PER}\}, producing an 11-variable system. Iteration 3 adds \{\texttt{UI}, \texttt{PER-W}\}, producing the selected 13-variable system. See the detailed variable descriptions in Table~\ref{tab:mp_model11_variables}. Computation used MATLAB R2023b under Windows 11 Enterprise on an Intel Core Ultra 7 165U processor at 2.10 GHz with 32 GB of RAM. MATLAB had access to 12 CPU cores, although computation was serial. Model construction across all values of $\lambda$ took 75 seconds. OOS selection based on 2,000 posterior draws took 1,900 seconds, or about 32 minutes. Increasing the number of posterior draws does not change the results.}

The selected system includes variables that capture two dimensions emphasized by BPSS. The one year Treasury spread measures the relation between a market rate and the policy rate, while the excess bond premium and the GZ spread measure corporate credit conditions. The selected system differs more sharply from BPSS on the real side. It adds three labor market indicators and four housing variables. The housing variables form the largest coherent block among the selected additions and provide the new economic margin. Neither MSV nor BPSS includes housing activity in its SVAR.

No housing block is imposed in model construction. Each housing series competes with all other candidates on the same terms. The procedure chooses housing starts in the South, but not national starts or starts in other regions. Among the housing starts series, southern starts are the one retained by the admission criterion at the selected value of $\lambda$. By contrast, the selected permit series cover the nation, Northeast, and West, indicating that housing adjustments differ across regions and over time. The geographic labels carry no separate structural interpretation. Together, these series measure several distinct margins of the housing production cycle and contribute predictive information for the fitted composite disturbances in the three core equations.

The selection of housing starts and permits is consistent with a broader literature on housing and credit allocation. \citet{Leamer2007,Leamer2015} place residential construction and housing quantities at the center of the U.S. business cycle. \citet{RognlieShleiferSimsek2018} develop a model in which overbuilding of durable capital such as housing can generate a subsequent demand-driven recession. \citet{MullerVerner2024} show more broadly that the macroeconomic consequences of credit expansions depend on the sector receiving the credit, as nontradable sectors rely more heavily on real-estate-secured borrowing. This literature supports the economic relevance of the housing production margin discovered by our procedure.

\subsection{Household credit is leverage without production}\label{sec:bb_hhc}

Figure~\ref{fig:hhc_aerchol_hhc} reports the responses to a household credit shock in the selected system. Household credit rises on impact, and its posterior median remains positive throughout the horizon. The 90\% credible band excludes zero for most of the horizon. Output does not follow. Its posterior median rises only trivially on impact, and this increase is negligible relative to its later decline. The household credit expansion therefore occurs without a material expansion in output.

The force of this result lies in the central response, not merely in the credible bands. MSV and BPSS interpret the boom and subsequent decline from the estimated response itself. If the selected system merely widened the bands while leaving the initial rise in place, their economic interpretation would survive with less precision. It does not. The posterior median shows no economically meaningful rise in output. This does more than weaken the evidence for an output boom. It removes the boom from the estimated response. The selected information set therefore changes the economic interpretation, not just its statistical precision. Household credit expands without an output boom.

The housing activity responses explain the change in interpretation. The posterior medians of national permits, western permits, northeastern permits, and southern housing starts are negative. The 90\% credible bands for national permits, western permits, and southern housing starts lie below zero for substantial portions of the horizon. The response of northeastern permits is less precise, but its posterior median declines. The GZ spread rises only briefly before returning toward zero. The household credit shock therefore produces a contraction, rather than an expansion, in housing production and no persistent deterioration in corporate credit conditions.

The increase in household credit is instead accompanied by lower residential construction. The aggregate data do not identify the exact use of the additional credit. It may reflect purchases of existing homes, refinancing, home equity extraction, consumer borrowing, or other balance sheet adjustments. These activities can increase outstanding household credit without financing new residential construction---more leverage without more production. Household credit rises, housing production falls, and output shows no meaningful initial increase.

The later output decline is imprecisely estimated in all three exercises. In our selected system, both the 68\% and 90\% credible bands for the IP response contain zero. In BPSS, the corresponding bands are somewhat tighter but also contain zero during the later decline. In MSV, the 95\% confidence interval contains zero over almost the entire ten-year horizon in which the GDP estimate turns negative. The later decline is therefore a feature of the central responses, not a precisely estimated effect.

This imprecision limits what the impulse responses can establish about the source of the later decline. The deleveraging and weak demand interpretation emphasized by MSV and the endogenous monetary tightening interpretation emphasized by BPSS remain consistent with our evidence. Our interpretation does not require output to decline when housing production falls. A decline in housing production works against an output boom, but other components of activity may offset that effect. We therefore remain agnostic about the sign of the later output response. Our main result concerns the initial response: once the selected information set accounts for housing production, a household credit shock no longer produces an output boom in the point estimate, measured by the posterior median.

\subsection{Output rises with housing production}\label{sec:bb_housing}

We refer to the recursive innovation associated with national building permits as the housing production shock. The label summarizes its empirical content and does not assign a primitive cause. National permits rise on impact. Permits in the Northeast and West and housing starts in the South rise with them (Figure~\ref{fig:hhc_aerchol_housing}). These joint responses support interpreting the shock as a broad expansion in prospective and current residential construction.

Industrial production rises after the housing production shock. Its 90\% credible band remains above zero for about two years, and its 68\% band remains above zero longer. Household credit also rises persistently, with both bands above zero over most of the horizon. The posterior median of business credit rises as well, although its response is less precise. Permits signal construction to come, starts record construction under way, and industrial production captures aggregate real activity. Housing activity and output rise together after the shock and later decline together. This comovement is consistent with the emphasis of \citet{Leamer2007,Leamer2015} on housing activity in the U.S. business cycle.

The contrast with the household credit shock is sharp. The posterior median of industrial production rises substantially after the housing production shock but only trivially after the household credit shock. Household credit rises after both shocks and therefore does not by itself distinguish their output responses. Housing production rises after the housing production shock and falls after the household credit shock. Within the selected system, output rises only in the first case. The comparison shows that the output response changes with the information set; it does not isolate the contribution of any single added variable.

\subsection{Three interpretations}\label{sec:bb_channel}

The response to a household credit shock admits of three complementary interpretations. MSV interpret the initial rise in output as a credit supply driven consumption boom and associate the later decline with debt overhang, deleveraging, and macroeconomic frictions. BPSS leave the source of the initial rise largely open, noting that credit and output generally move together, and attribute the later decline to endogenous monetary tightening. Our selected system adds a third interpretation: housing production is an omitted real activity margin that can explain the initial boom in systems that exclude housing starts and permits. These three interpretations can coexist.

MSV and BPSS differ mainly in their explanations of the later output decline. In MSV, the 95\% confidence band contains zero over nearly all of the horizon on which the estimated GDP response is negative. In BPSS, both credible bands contain zero during the later decline. Yet both papers develop economic interpretations of that decline. Their interpretations therefore rest on the central impulse responses. In their argument, wide bands reduce precision but do not alter the economic interpretation. In our selected system, the posterior median of output also declines later. This response remains compatible with the debt overhang and deleveraging interpretation of MSV, the endogenous monetary tightening interpretation of BPSS, or both. Our selected system does not distinguish between them, and our main result does not require the later decline to be statistically different from zero.

The same focus on central responses also shapes their interpretation of the initial output rise. MSV interpret its positive central response as a consumption boom driven by credit supply. BPSS also obtain a positive initial response and relate it to the general positive comovement between credit and output, but they do not identify the spending or production margin behind the initial rise. Neither information set includes housing starts or permits. These systems therefore cannot determine whether the initial output rise comes from household credit itself or from residential construction that moves with credit.

Our selected system separates these margins. At the point estimate, measured by the posterior median, the initial response of industrial production to the household credit shock is near zero and economically negligible. This is not a case in which wider bands surround an output boom. The boom is absent from the estimated response itself. Housing production also falls after the household credit shock. This finding does not imply that falling housing production must cause aggregate output to decline, nor does it rule out support for consumption. It shows that any such support is insufficient to produce a meaningful aggregate output boom when residential construction contracts. Our contribution therefore concerns the source of the initial boom rather than the source of the later decline. Models of household credit and the business cycle should allow residential construction to move separately from household credit. In smaller systems that do not separate these margins, the identified household credit shock may combine movements in credit with movements in housing production.

\subsection{Robustness}\label{subsec:bb_robustness}

The next two sets of exercises establish two distinct results. First, the initial output boom in the original researcher-chosen systems is unaffected by differences in data frequency, variable definitions, updated vintages, and the Sims--Zha prior. Second, the selected 13-variable system produces no initial output boom under the heteroskedasticity identification of BPSS.

First, Figure~\ref{fig:hhc_qje} replicates MSV's VAR evidence using data that differ from theirs in both frequency and variable definition, as discussed in Section~\ref{sec:core}. We use monthly rather than annual data and the monthly Sims--Zha prior, whereas MSV do not use a Bayesian prior. Figure~\ref{fig:hhc_aer_dup} replicates the IP, BC, and HHC responses to the household credit shock reported by BPSS. We use their 10-variable system and sample period, but updated vintages of the monthly data. We also use the Sims--Zha prior, which is close to but not identical to the BPSS prior.\footnote{The full panel of impulse responses from this 10-variable BPSS system is reported in Appendix~\ref{sec:full_irfs}.} In both cases, the posterior median of output rises initially and later declines. Different data, updated vintages, and the Sims--Zha prior do not remove the initial output boom. The point estimate of the initial rise remains sizable.

Second, the selected system yields the same separation under BPSS's identification through shock heteroskedasticity. We apply their identifying method to our 13-variable system. Figure~\ref{fig:hhc_t_shk_aer_hhc} shows that a household credit shock raises household credit without producing an output boom, while all four housing activity variables decline.\footnote{These results hold when we augment the selected system with six BPSS variables absent from that system: the price level, M1, the federal funds rate, the 10Y--3M Treasury spread, the TED spread, and the commodity price index. In particular, there is no initial output boom after the household credit shock.} The GZ spread remains essentially unchanged. Because this spread measures corporate credit conditions, its muted response indicates that the household credit expansion is not accompanied by a broader deterioration in corporate credit markets.

Figure~\ref{fig:hhc_t_shk_aer_housing} shows that a housing production shock raises housing permits, housing starts, industrial production, household credit, and business credit. The same separation therefore obtains under BPSS's identification through shock heteroskedasticity and is not an artifact of the recursive identification. Across both identifying methods, household credit rises after both shocks, but output rises only when housing production expands. This result supports the housing production channel.

\section{Monetary Policy Transmission}\label{sec:mp}

This application asks how systematic construction of the SVAR information set changes the estimated transmission of monetary policy under proxy identification. We apply the methodology of Sections~\ref{sec:modelconstruction}--\ref{sec:oos} to GK's proxy SVAR. The application further develops a necessary and sufficient identification result for multiple instruments identifying one shock and an anchor-free joint Bayesian method. We first describe GK's two-stage procedure in our notation and then show how the proxy restrictions identify the monetary policy shock within the general SVAR framework.

The joint approach is not merely an extension or a broader formulation. It is needed for coherent and stable economic results.
Both GK's procedure and ours identify the same monetary policy shock and the same impulse responses. They differ in how they recover these responses from the data. GK's procedure singles out the residual of one variable as an anchor and measures every response relative to it. Our approach estimates the responses jointly and eliminates the need for such a choice. With a single instrument, the anchor does not matter, and GK use one instrument. With several instruments, however, the anchor generally matters in finite samples. The estimated responses and their credible bands can move with the variable chosen as the anchor, and no economic argument selects one variable over another.

Moving from GK's single instrument to the multiple instruments of \citet{Swanson2021} makes the joint method necessary for avoiding arbitrary finite-sample dependence on the choice of anchor. The empirical subsection uses the six variables from GK's Figure~2 as the core variables and the three external instruments of \citet{Swanson2021} to show how the joint approach operates within our system-selection methodology and what it implies for the transmission of monetary policy.

\subsection{Proxy identification within one SVAR system}\label{subsec:mp_proxy}
To avoid burdensome system-specific notation, throughout this section we reuse $n$, $A_0$, $A_\ell$, $\eps_t$, $\Sigma$, and related symbols for the SVAR system under consideration. With this slight abuse of notation, the system may represent the original core-variable system, a current system generated during model construction, or the selected terminal system, depending on the context. The dimensions and values of these objects are understood to conform to the system under consideration. The proxy identification argument below applies to the system being analyzed.

For every system under consideration, the monetary policy equation is placed first, and $\eps_{1t}$ is the monetary policy shock. Let $v_t$ denote the reduced-form innovation of that system, where the symbol $v$ is used because $u$ is reserved in Section~\ref{sec:modelconstruction} for the composite disturbance:
\begin{equation}
\label{eq:mp_reduced_form}
  v_t'
  =
  \eps_t'A_0^{-1},
  \qquad
  \Sigma
  \equiv
  \mathbb E(v_tv_t')
  =
  A_0^{-1\,\prime}A_0^{-1}.
\end{equation}
Let $e_1$ be the first column of $I_n$ and define
\begin{equation}
\label{eq:mp_a1_s}
  a_1
  \equiv
  A_0e_1,
  \qquad
  s
  \equiv
  A_0^{-1\,\prime}e_1,
  \qquad
  s'
  =
  e_1'A_0^{-1}
  =
  a_1'\Sigma.
\end{equation}
The vector $a_1$ contains the coefficients in the monetary policy equation. The row vector $s'$ gives the impact responses to the monetary policy shock.

Let $m_t$ be an $n_m\times1$ vector of centered external instruments, where $n_m\geq1$. Following the literature, we maintain the following proxy restrictions.

\begin{assumption} 
\label{ass:mp_proxy}
The instrument vector satisfies
\begin{equation}
\label{eq:mp_proxy_moments}
  \mathbb E(m_t\eps_{1t})
  =
  \gamma_m
  \neq
  0_{n_m\times1},
  \qquad
  \mathbb E(m_t\eps_{jt})
  =
  0_{n_m\times1},
  \quad
  j=2,\ldots,n.
\end{equation}
Let $\Omega_m\equiv\mathbb E(m_tm_t')$. The matrix $\Omega_m$ is positive definite.
\end{assumption}

Assumption~\ref{ass:mp_proxy} requires the instruments to have predictive content for the policy shock and to be orthogonal to every other structural shock. Define the reduced-form proxy covariance matrix
\begin{equation}
\label{eq:mp_G}
  G
  \equiv
  \mathbb E(v_tm_t')
  \in
  \Real^{n\times n_m}.
\end{equation}
For $i=1,\ldots,n$, let $g_i\in\Real^{n_m}$ denote the transpose of the $i$th row of $G$, that is, $g_i=\mathbb E(m_tv_{i,t})$, and let $W\equiv\Omega_m^{-1}$.

\runinhead{The GK two-stage procedure} GK first estimate the reduced-form VAR and obtain its residuals. They then choose one reduced-form residual as the policy indicator used in the first stage. Let $p\in\{1,\ldots,n\}$ index this anchor residual. An anchor $p$ is defined as admissible if and only if $g_p\neq0_{n_m\times1}$. Assuming $p$ is admissible, the two-stage least-squares coefficient from regressing $v_{i,t}$ on $v_{p,t}$, using $m_t$ as the instrument vector, is
\begin{equation}
\label{eq:mp_gk_ratio}
  \beta_{i\mid p}
  =
  \frac{g_p'Wg_i}{g_p'Wg_p}.
\end{equation}
Following the external-instrument argument of GK, Theorem~\ref{thm:mp_identification} below expresses their identifying conditions in matrix form. Assumption~\ref{ass:mp_proxy} implies $G=s\gamma_m'$, hence $g_i=\gamma_ms_i$ and
\[
  \beta_{i\mid p}
  =
  \frac{s_i}{s_p}
\]
whenever $s_p\neq0$. Collect these coefficients in the anchor-specific ratio vector 
\[
d_p=(\beta_{1\mid p},\ldots,\beta_{n\mid p})',
\] 
which satisfies $\beta_{p\mid p}=1$ and $d_p=s/s_p$. The two-stage procedure recovers this ratio representation and then uses the unit-variance normalization of the policy shock to recover its scale:
\begin{equation}
\label{eq:mp_gk_scale}
  s
  =
  \pm
  \frac{d_p}{\sqrt{d_p'\Sigma^{-1}d_p}}.
\end{equation}
The sign is chosen by an economic normalization, such as requiring a contractionary policy shock to raise the interest rate on impact.

GK choose the reduced-form residual for the one-year government bond rate as the admissible anchor, estimate the impact ratios relative to that residual, and then recover the scale of the impact vector. They do not examine how the ratio representation changes when a different residual is used as the admissible anchor. The next proposition shows that different anchors change this intermediate representation but recover the same impact vector and the same impulse responses after normalization.

\begin{proposition} 
\label{prop:mp_anchor}
Under Assumption~\ref{ass:mp_proxy}, every admissible anchor $p$ yields the ratio vector
\[
  d_p
  =
  \frac{s}{s_p}.
\]
For any two admissible anchors $p$ and $r$,
\[
  d_r
  =
  \frac{s_p}{s_r}d_p.
\]
Thus, different anchors yield different ratio vectors that are proportional to one another. After the scale and sign normalization in \eqref{eq:mp_gk_scale}, every admissible anchor recovers the same impact vector $s$ and the same impulse responses at every horizon.
\end{proposition}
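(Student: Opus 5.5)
We first derive the factorization $G=s\gamma_m'$ from Assumption~\ref{ass:mp_proxy}, and then the proposition follows by direct substitution. By \eqref{eq:mp_reduced_form}, $v_t=A_0^{-1\,\prime}\eps_t$, so
\[
  G=\mathbb E(v_tm_t')=A_0^{-1\,\prime}\,\mathbb E(\eps_tm_t').
\]
By \eqref{eq:mp_proxy_moments}, the matrix $\mathbb E(\eps_tm_t')$ has first row $\gamma_m'$ and all other rows zero, so it equals $e_1\gamma_m'$. Hence $G=A_0^{-1\,\prime}e_1\gamma_m'=s\gamma_m'$ by \eqref{eq:mp_a1_s}. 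It follows that $g_i=\gamma_m s_i$ for every $i$.

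Next we show that admissibility of $p$ is equivalent to $s_p\neq0$. Since $\gamma_m\neq0$, we have $g_p=\gamma_m s_p\neq0$ exactly when $s_p\neq0$. We also need $g_p'Wg_p=s_p^2\,\gamma_m'W\gamma_m>0$. This holds because $W=\Omega_m^{-1}$ is positive definite and $\gamma_m\neq0$. Substituting into \eqref{eq:mp_gk_ratio} gives
\[
  \beta_{i\mid p}=\frac{s_ps_i\,\gamma_m'W\gamma_m}{s_p^2\,\gamma_m'W\gamma_m}=\frac{s_i}{s_p},
\]
so $d_p=s/s_p$. For two admissible anchors $p$ and $r$, $d_r=s/s_r=(s_p/s_r)(s/s_p)=(s_p/s_r)d_p$, which is the claimed proportionality.

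For the normalization step, write $d_p=c\,s$ with $c=1/s_p\neq0$. Then
\[
  d_p'\Sigma^{-1}d_p=c^2\,s'\Sigma^{-1}s .
\]
The key identity is $s'\Sigma^{-1}s=1$. Using $\Sigma=A_0^{-1\,\prime}A_0^{-1}$, we get $\Sigma^{-1}=A_0A_0'$. Then
\[
  s'\Sigma^{-1}s=e_1'A_0^{-1}A_0A_0'A_0^{-1\,\prime}e_1=e_1'e_1=1.
\]
Therefore $d_p/\sqrt{d_p'\Sigma^{-1}d_p}=(c/|c|)\,s=\sgn(c)\,s$. With the $\pm$ in \eqref{eq:mp_gk_scale}, the candidate set $\{s,-s\}$ is the same for every admissible anchor. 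A fixed economic sign normalization, such as a positive impact on a chosen interest rate, then selects the same vector $s$ regardless of $p$. This uses only the sign of a component of $s$ and not the anchor.

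Finally, the impulse responses at every horizon are the impact vector propagated through the reduced-form lag polynomial. Specifically, the horizon-$h$ response is $s'$ multiplied by the reduced-form moving-average coefficient at horizon $h$, and that coefficient does not depend on the anchor. Since every admissible anchor yields the same $s$, the responses coincide at all horizons.

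The only delicate step is the normalization. It requires confirming the identity $s'\Sigma^{-1}s=1$, and checking that the sign convention is defined on $s$ itself rather than relative to the anchor, so that $\sgn(s_p)$ cannot differ across anchors.
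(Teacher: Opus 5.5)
Your proposal is correct and follows essentially the same route as the paper's proof: you derive $G=s\gamma_m'$ from the proxy moments, substitute into the two-stage ratio to obtain $d_p=s/s_p$, use $s'\Sigma^{-1}s=1$ to show that the normalization returns $\sgn(s_p)s$, and note that the anchor-free reduced-form matrices $\Psi_h$ propagate the common $s$. Your remark that the sign convention must be stated on a component of $s$ itself, not relative to the anchor, makes explicit a point the paper leaves implicit.
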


\begin{proof}
   See Appendix~\ref{sec:mp_appendix}.
\end{proof}

Proposition~\ref{prop:mp_anchor} has a sharp finite-sample implication. With one instrument, the estimate $\widehat G$ has one column and therefore has rank one whenever it is nonzero. The sample analogue of the proposition then holds exactly in the data: different admissible anchors yield proportional ratio vectors, the same normalized impact vector, and the same impulse responses. With several instruments, Assumption~\ref{ass:mp_proxy} still makes $G$ rank one in population, but $\widehat G$ need not be rank one and is generally of higher rank in finite samples. Different anchors can then yield ratio vectors that are not proportional, causing the estimated impulse responses and their error bands to depend on the chosen anchor. The joint method introduced next estimates the entire system together. Its likelihood and posterior do not depend on any choice of anchor and therefore avoid this arbitrary finite-sample dependence.

\runinhead{The joint proxy SVAR} Our method estimates the SVAR and the proxy relation jointly under one posterior distribution. The contemporaneous second moments of the joint system satisfy
\begin{equation}
\label{eq:mp_joint_covariance}
  \mathbb E
  \left(
  \begin{bmatrix}
    \eps_t\\
    m_t
  \end{bmatrix}
  \begin{bmatrix}
    \eps_t' & m_t'
  \end{bmatrix}
  \right)
  =
  \begin{bmatrix}
    I_n & e_1\gamma_m'\\
    \gamma_m e_1' & \Omega_m
  \end{bmatrix},
\end{equation}
where $\Omega_m$ is positive definite. The block structure imposes the proxy moments in \eqref{eq:mp_proxy_moments} and preserves the unit-variance normalization in \eqref{eq:svar_full}. Under joint normality, we represent \eqref{eq:mp_joint_covariance} as an augmented structural system for $\widetilde y_t=[y_t',m_t']'$. The proxy exogeneity conditions become linear, non-cross-equation block restrictions on the augmented coefficient matrices, and the augmented shock vector is normally distributed with mean zero and identity covariance matrix. We place the Sims--Zha prior on the augmented system and draw from its posterior using the efficient Gibbs sampler of \citet{WZ03b}, which applies directly and without modification. Because each Gibbs draw has an invertible augmented contemporaneous matrix, the implied covariance matrix in \eqref{eq:mp_joint_covariance} is positive definite, and the Schur complement condition \eqref{eq:mp_joint_pd} below holds automatically.

\citet{CaldaraHerbst2019} develop a published model with one proxy for one shock. Their online appendix considers two proxies for one shock but assumes that the two measurement errors are independent. In our notation, this assumption requires $\Omega_m-\gamma_m\gamma_m'$ to be diagonal. It therefore excludes residual covariance between the proxies and requires all covariance between them to arise through their common loading on the shock. Identification does not require this restriction. Applied to our three instruments, it would impose three zero restrictions on the off-diagonal elements of the measurement error covariance matrix. When written as an augmented structural system, their measurement equation also couples the proxy equation to the policy equation through cross-equation proportionality restrictions. 

These restrictions fall outside the non-cross-equation class studied by \citet{WZ03b}. The posterior simulator of \citet{CaldaraHerbst2019} is therefore a Metropolis-within-Gibbs algorithm with tuned proposals. When the proxy is highly informative, the proposal based on the VAR-only posterior can differ sharply from the joint posterior, causing low acceptance rates. They consequently use a sequential Monte Carlo sampler for their reported applications, which they describe as stable but computationally demanding. Our specification instead leaves $\Omega_m$ unrestricted within the class of positive definite matrices and imposes only proxy exogeneity and the Schur complement condition in \eqref{eq:mp_joint_pd} below; identification further requires $\gamma_m\neq0$, as in GK and the earlier external-instrument literature. This unrestricted covariance specification nests both the published scalar model of \citet{CaldaraHerbst2019} and their independent-error multiple-proxy extension. Moreover, Proposition~\ref{prop:mp_gk_equivalence} below establishes the population equivalence between our joint Bayesian approach and the GK two-stage procedure.

To interpret the remaining restriction, consider the linear projection
\begin{equation}
\label{eq:mp_proxy_projection}
  \eps_{1t}
  =
  m_t'\Omega_m^{-1}\gamma_m+w_{1t},
  \quad
  \mathbb E(m_tw_{1t})
  =
  0_{n_m\times1}.
\end{equation}
Since $\Var(\eps_{1t})=1$, the fraction of the policy shock variance explained by the instruments is
\[
  \gamma_m'\Omega_m^{-1}\gamma_m,
\]
and the variance of the projection residual is
\[
  \Var(w_{1t})
  =
  1-\gamma_m'\Omega_m^{-1}\gamma_m.
\]
The covariance matrix in \eqref{eq:mp_joint_covariance} is positive definite when
\begin{equation}
\label{eq:mp_joint_pd}
  \gamma_m'\Omega_m^{-1}\gamma_m
  <
  1.
\end{equation}
The projection residual has positive variance if and only if the Schur complement inequality in \eqref{eq:mp_joint_pd} holds. Together with $\gamma_m\neq0$, this inequality implies that the instruments provide information about the policy shock without perfectly predicting it. Given that $\Omega_m$ is positive definite, it also ensures that the joint covariance matrix in \eqref{eq:mp_joint_covariance} is positive definite.

Utilizing \eqref{eq:mp_proxy_projection} in the first equation of \eqref{eq:svar_full} gives
\begin{equation}
\label{eq:mp_joint_first_equation}
  y_t'a_1
  =
  c_1
  +
  \sum_{\ell=1}^{\cL}y_{t-\ell}'a_{\ell,1}
  +
  m_t'\delta_m
  +
  w_{1t},
  \quad
  a_{\ell,1}
  =
  A_\ell e_1,
\end{equation}
where $\delta_m =\Omega_m^{-1}\gamma_m$. The instruments enter the first equation through the linear projection of the policy shock on $m_t$. For $j=2,\ldots,n$, Assumption~\ref{ass:mp_proxy} implies $\mathbb E(m_t\eps_{jt})=0_{n_m\times1}$. The linear projection of each remaining shock on the instruments is therefore zero, and no instrument term enters the remaining equations.

The external instrument vector $m_t$ joins the SVAR system under consideration through the first equation. By construction, $\mathbb E(m_tw_{1t})=0_{n_m\times1}$. For $j=2,\ldots,n$,
\[
  \mathbb E(w_{1t}\eps_{jt})
  =
  \mathbb E(\eps_{1t}\eps_{jt})
  -
  \delta_m'\mathbb E(m_t\eps_{jt})
  =
  0,
\]
because the model shocks are contemporaneously uncorrelated and Assumption~\ref{ass:mp_proxy} makes the instruments uncorrelated with every remaining shock. Thus, $w_{1t}$ is contemporaneously uncorrelated with the shocks in the remaining equations.

This augmented specification is what we have referred to as the joint system. The coefficient $\delta_m$ is estimated together with all other parameters in that system. Theorem~\ref{thm:mp_identification} below establishes that the joint system identifies all coefficients in the first equation of the SVAR system under consideration and therefore identifies the impulse responses to the first shock at every horizon.\footnote{Following Theorem~4 of \citet{tZ99}, Proposition~\ref{prop:mp_normalization} in Appendix~\ref{sec:mp_appendix} shows that an orthogonal transformation of columns $2,\ldots,n$ leaves the joint covariance restriction, the first equation, and the impulse responses to the first shock unchanged. The identification result is related to the partial-identification framework of \citet{jRdWtZ10}, whose Theorem~2 gives a general sufficient rank condition for identifying one equation. Here the proxy moments yield a necessary and sufficient condition for identifying the first equation and provide explicit formulas for its coefficients and impulse responses.} No anchor is chosen, and no ratio calculation is needed.

\begin{theorem} 
\label{thm:mp_identification}
Assume that $n\geq2$, the SVAR system under consideration has an invertible $A_0$, and \eqref{eq:mp_reduced_form} holds. Let
\[
  \gamma_m
  \equiv
  \mathbb E(m_t\eps_{1t}),
  \qquad
  \Omega_m
  \equiv
  \mathbb E(m_tm_t'),
\]
where $\Omega_m$ is positive definite, and maintain the exclusion restrictions
\[
  \mathbb E(m_t\eps_{jt})
  =
  0_{n_m\times1},
  \qquad
  j=2,\ldots,n.
\]
In the absence of additional identifying restrictions on the SVAR system, the joint system identifies all coefficients in the first equation up to their common sign if and only if
\[
  \gamma_m
  \neq
  0_{n_m\times1}.
\]
Equivalently, identification holds if and only if $G\neq0_{n\times n_m}$, or, equivalently,
\[
  \rank(G)
  =
  \rank(\Sigma^{-1}G)
  =
  1.
\]
When these equivalent conditions hold,
\begin{equation}
\label{eq:mp_rank_one}
  G
  =
  s\gamma_m',
  \qquad
  \Sigma^{-1}G
  =
  a_1\gamma_m'.
\end{equation}
For any $q\in\Real^{n_m}$ such that $Gq\neq0_{n\times1}$,
\begin{equation}
\label{eq:mp_identified_objects}
  a_1
  =
  \pm
  \frac{\Sigma^{-1}Gq}
  {\sqrt{(\Sigma^{-1}Gq)'\Sigma(\Sigma^{-1}Gq)}},
  \qquad
  s
  =
  \Sigma a_1
  =
  \pm
  \frac{Gq}
  {\sqrt{(Gq)'\Sigma^{-1}(Gq)}}.
\end{equation}
Thus, the joint system identifies $c_1$, $a_1$, $a_{\ell,1}$ for $\ell=1,\ldots,\cL$, and $\delta_m$, up to the common sign normalization of the first equation. It therefore identifies the impulse responses to the first shock at every horizon. The remaining equations are not identified without additional restrictions and are not needed for these impulse responses.
\end{theorem}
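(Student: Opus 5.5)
The plan is to work entirely with the second moments $\Sigma$ and $G$, which are identified from the joint distribution of $(v_t,m_t)$. We first compute $G$ from the structural representation. Since $v_t'=\eps_t'A_0^{-1}$, we have $v_t=A_0^{-1\prime}\eps_t$. Hence
\[
G=\mathbb E(v_tm_t')=A_0^{-1\prime}\mathbb E(\eps_tm_t')=A_0^{-1\prime}e_1\gamma_m'=s\gamma_m',
\]
where the exclusion restrictions kill rows $2,\ldots,n$ of $\mathbb E(\eps_tm_t')$. Multiplying by $\Sigma^{-1}=A_0A_0'$ gives $\Sigma^{-1}G=A_0A_0'A_0^{-1\prime}e_1\gamma_m'=a_1\gamma_m'$. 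This establishes \eqref{eq:mp_rank_one}.

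The equivalences follow from this. Because $A_0$ is invertible, $s\neq0$ and $a_1\neq0$. So $G=s\gamma_m'$ is nonzero exactly when $\gamma_m\neq0$, and an outer product of nonzero vectors has rank one. Since $\Sigma^{-1}$ is nonsingular, $\rank(\Sigma^{-1}G)=\rank(G)$.

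For sufficiency, take any $q$ with $Gq\neq0$. Then $Gq=s(\gamma_m'q)$ with $\gamma_m'q\neq0$, so $\Sigma^{-1}Gq=a_1(\gamma_m'q)$ is a nonzero multiple of $a_1$. The scale is pinned down by the unit-variance normalization $a_1'\Sigma a_1=e_1'A_0'A_0^{-1\prime}A_0^{-1}A_0e_1=1$. Normalizing $\Sigma^{-1}Gq$ in the $\Sigma$-norm therefore recovers $a_1$ up to sign, and $s=\Sigma a_1$ gives the second formula in \eqref{eq:mp_identified_objects}.

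The lag coefficients follow from the reduced form. The reduced-form coefficients $B_\ell=A_\ell A_0^{-1}$ and $\kappa=A_0^{-1\prime}c$ are identified by projection, and $A_\ell e_1=B_\ell a_1$ and $c_1=\kappa'a_1$. This yields $a_{\ell,1}$ and $c_1$. Next, $\gamma_m=\mathbb E(m_t\eps_{1t})=\mathbb E(m_tv_t')a_1=G'a_1$, so $\delta_m=\Omega_m^{-1}G'a_1$, with $\Omega_m$ identified from the proxy moments. Impulse responses to shock 1 are the reduced-form MA coefficients applied to $s$, so they are identified, all up to the common sign.

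For necessity, suppose $\gamma_m=0$. Then $G=0$, and the joint moments $(\Sigma,\Omega_m,G)$ and the reduced-form coefficients are the same for every $A_0$ with $A_0^{-1\prime}A_0^{-1}=\Sigma$. In particular $A_0Q$ is observationally equivalent for any orthogonal $Q$, because the restriction $\mathbb E(m_t\eps_t')=e_1\gamma_m'=0$ is preserved. Choosing $Q$ as a rotation mixing columns 1 and 2 (possible since $n\geq2$) changes $a_1$ to a vector that is not $\pm a_1$, because columns of $A_0$ are linearly independent. So the first equation is not identified.

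Non-identification of the remaining equations uses the same device with $\gamma_m\neq0$. Take $Q=\diag(1,Q_{22})$ with $Q_{22}$ an arbitrary orthogonal $(n-1)\times(n-1)$ matrix. This preserves $e_1\gamma_m'$, since $Q'e_1=e_1$, preserves $\Sigma$, and leaves column 1 unchanged. It alters columns $2,\ldots,n$ whenever $Q_{22}\neq\pm I$. Such a $Q_{22}$ exists when $n\geq3$. When $n=2$ the second column is fixed up to sign, so the claim should be read as "not identified in general" or "only up to the stated normalization." This is in line with Proposition~\ref{prop:mp_normalization}.

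The main obstacle is making the necessity and non-identification arguments precise about what "the joint system" observes. One must check that the orthogonal reparameterizations leave the entire joint Gaussian law of $(y_t,m_t)$ unchanged, including the projection coefficient $\delta_m$ and the residual $w_{1t}$. The $n=2$ edge case for the remaining equations also has to be handled carefully.
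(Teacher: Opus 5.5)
Your proposal is correct and follows the paper's proof essentially step for step. The shared steps are the factorization $G=s\gamma_m'$ and $\Sigma^{-1}G=a_1\gamma_m'$, scaling via $a_1'\Sigma a_1=1$, and recovery of $a_{\ell,1}=B_\ell a_1$, $c_1$, and $\delta_m$ (your $\gamma_m=G'a_1$ is the paper's $\gamma_m'=s'\Sigma^{-1}G$), an orthogonal transformation moving $e_1$ for necessity, and $\diag(1,Q)$ for the remaining equations. Your $n=2$ caveat is well taken: there $\Sigma^{-1}-a_1a_1'=a_2a_2'$ pins down the second equation up to sign, which the paper's proof covers only by saying the transformation ``generally'' changes the remaining equations.
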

\begin{proof}
See Appendix~\ref{sec:mp_appendix}.
\end{proof}

The identification result in Theorem~\ref{thm:mp_identification} applies beyond a single shock. The label ``first'' is merely a notational convention; any shock can be placed first. More important, identification need not proceed one equation at a time. Multiple shocks can be identified simultaneously. Let $\mathcal J\subseteq\{1,\ldots,n\}$ index a set of shocks, and for each $j\in\mathcal J$ let $m_t^{(j)}$ be a vector of instruments satisfying 
\[
  \mathbb E\bigl(m_t^{(j)}\eps_{jt}\bigr)
  =
  \gamma_j
  \neq
  0,
  \qquad
  \mathbb E\bigl(m_t^{(j)}\eps_{kt}\bigr)
  =
  0,
  \quad
  k\neq j.
\]
Let $m_t$ stack the instrument vectors $m_t^{(j)}$, let $\Omega=\mathbb E(m_tm_t')$, and let $C=\mathbb E(\eps_t m_t')$. For population identification, assume $\Omega\succ0$ together with the relevance and exclusion restrictions above. For estimation under one regular joint Gaussian posterior, assume that
\[
\begin{bmatrix}
I_n & C\\
C' & \Omega
\end{bmatrix}
\succ0,
\]
or equivalently $I_n-C\Omega^{-1}C'\succ0$ and $\Omega-C'C\succ0$. Under these conditions, Theorem~\ref{thm:mp_identification} applies to each instrument vector and simultaneously identifies the corresponding shock and equation, the impact vector $s_j=A_0^{-1\prime}e_j$, and the corresponding impulse responses at every horizon, up to a separate sign normalization for each shock. The joint method makes this extension straightforward: the instrument vectors for different shocks are stacked, the corresponding restrictions are imposed on the joint covariance block, and estimation proceeds under one posterior distribution as in a standard SVAR with several identified shocks and the corresponding equations.

We now return to the notation of Theorem~\ref{thm:mp_identification}, where the shock of interest is labeled ``first.'' The vector $q$ is only an algebraic device for displaying the identified direction. Under Assumption~\ref{ass:mp_proxy}, $Gq$ is proportional to $s$ for every $q$ such that $Gq\neq0_{n\times1}$. The joint posterior is defined using the full instrument vector and does not depend on any choice of $q$. The identified impact vector and the reduced-form dynamics determine the impulse responses to the shock of interest.

Define the reduced-form coefficient matrices
\begin{equation}
\label{eq:mp_B}
  B_\ell
  \equiv
  A_\ell A_0^{-1},
  \qquad
  \ell=1,\ldots,\cL,
\end{equation}
and define the moving-average matrices by
\begin{equation}
\label{eq:mp_Psi}
  \Psi_0
  =
  I_n,
  \qquad
  \Psi_h
  =
  \sum_{\ell=1}^{\min(\cL,h)}
  \Psi_{h-\ell}B_\ell,
  \quad
  h\geq1.
\end{equation}
The horizon-$h$ impulse-response row to the first shock is
\begin{equation}
\label{eq:mp_irf}
  \mathrm{IRF}_1(h)
  =
  s'\Psi_h
  =
  a_1'\Sigma\Psi_h.
\end{equation}
Theorem~\ref{thm:mp_identification} therefore identifies $\mathrm{IRF}_1(h)$ up to the common sign normalization for every $h\geq0$.

\begin{proposition} 
\label{prop:mp_gk_equivalence}
Under Assumption~\ref{ass:mp_proxy}, the GK two-stage procedure based on any admissible anchor and the joint proxy SVAR identify the same impact vector $s$ after the common scale and sign normalization. They therefore identify the same first equation and the same impulse responses at every horizon.
\end{proposition}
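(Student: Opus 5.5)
The plan is to combine Proposition~\ref{prop:mp_anchor} with Theorem~\ref{thm:mp_identification}, showing that both procedures return the same population vector $s=A_0^{-1\prime}e_1$ up to sign.

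\textbf{Step 1: the two-stage side.} Fix an admissible anchor $p$, so that $g_p\neq 0_{n_m\times1}$. Assumption~\ref{ass:mp_proxy} gives $G=s\gamma_m'$, hence $g_p=\gamma_m s_p$. Because $\gamma_m\neq0$, admissibility is equivalent to $s_p\neq0$, and the ratio formula $d_p=s/s_p$ is valid. Substituting this into \eqref{eq:mp_gk_scale} gives
\[
d_p'\Sigma^{-1}d_p=\frac{s'\Sigma^{-1}s}{s_p^2}.
\]
Now compute
\[
s'\Sigma^{-1}s=e_1'A_0^{-1}\bigl(A_0A_0'\bigr)A_0^{-1\prime}e_1=e_1'e_1=1,
\]
using $\Sigma^{-1}=A_0A_0'$ from \eqref{eq:mp_reduced_form}. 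Therefore $d_p/\sqrt{d_p'\Sigma^{-1}d_p}=(s/s_p)\,|s_p|=\sgn(s_p)\,s$. The two-stage output is $\pm s$, and Proposition~\ref{prop:mp_anchor} confirms this does not depend on $p$.

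\textbf{Step 2: the joint side.} Theorem~\ref{thm:mp_identification} applies because $\gamma_m\neq0$. It gives $s=\pm Gq/\sqrt{(Gq)'\Sigma^{-1}(Gq)}$ for any $q$ with $Gq\neq0$. A convenient choice is $q=Wg_p$, which makes the joint formula directly comparable to the two-stage one: $Gq=s\,\gamma_m'Wg_p=s\,s_p\,\gamma_m'W\gamma_m$. This is a nonzero multiple of $s$, since $W\succ0$ and $\gamma_m\neq0$ make $\gamma_m'W\gamma_m>0$, and $s_p\neq0$. So both procedures recover the same vector up to sign.

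\textbf{Step 3: sign normalization.} Impose the same economic sign normalization in both procedures, for example that a contractionary shock raises the anchor rate on impact. The sign ambiguity is then removed, and both return the identical $s$.

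\textbf{Step 4: first equation and impulse responses.} The first equation is $a_1=\Sigma^{-1}s$. The intercept and lag coefficients follow as $c_1$ and $a_{\ell,1}=A_\ell e_1=B_\ell\Sigma^{-1}s$, where $B_\ell$ is a reduced-form object by \eqref{eq:mp_B}. By \eqref{eq:mp_irf}, the impulse responses at every horizon are $s'\Psi_h$, and $\Psi_h$ depends only on the reduced form. Since $s$ coincides, the equation and all impulse responses coincide as well.

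The main obstacle is bookkeeping rather than depth. One must check that admissibility ($g_p\neq0$) is equivalent to $s_p\neq0$ so the ratio formula is valid, and that a single sign convention can be applied consistently to both procedures.
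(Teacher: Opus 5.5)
Your proposal is correct and follows essentially the paper's own route: you derive $d_p=s/s_p$, use $s'\Sigma^{-1}s=1$ to get $\sgn(s_p)s$, apply Theorem~\ref{thm:mp_identification} to get $\pm s$ on the joint side, and conclude with $a_1=\Sigma^{-1}s$, $a_{\ell,1}=B_\ell a_1$, and $\mathrm{IRF}_1(h)=s'\Psi_h$. Your choice $q=Wg_p$ is a special case of the paper's arbitrary $q$, with the nice feature that $Gq=(g_p'Wg_p)\,d_p$. Two small touch-ups: write the intercept explicitly as $c_1=c'A_0^{-1}a_1$, where $c'A_0^{-1}$ is the reduced-form constant; and fix the sign convention on a given variable rather than on ``the anchor rate,'' since an anchor-based convention is undefined for the joint method and would return $-s$ for any admissible anchor with $s_p<0$.
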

\begin{proof}
   See Appendix~\ref{sec:mp_appendix}.
\end{proof}   

Proposition~\ref{prop:mp_gk_equivalence} confirms that the joint method preserves the population object identified by GK. The methods differ in finite-sample estimation. The joint method requires no anchor, estimates the SVAR and the proxy relation under one posterior distribution, and carries uncertainty about both components jointly. It also avoids division by an anchor loading. With one instrument, the normalized GK procedure is anchor invariant even in finite samples. With multiple instruments, the sample covariance matrix generally need not have rank one, and the GK estimates can depend on the residual chosen as the anchor. The joint method remains anchor free. Proposition~\ref{prop:mp_efficiency} in Appendix~\ref{sec:mp_appendix} shows that, under its stated regularity conditions, the joint estimator attains the parametric efficiency bound and weakly dominates any regular two-stage estimator of the same object under the same model.

\subsection{Policy surprises and financial conditions}\label{subsec:mp_finance}
The preceding subsection establishes that the joint proxy SVAR identifies the monetary policy shock, its equation, and its impulse responses without choosing an anchor. We now combine this anchor-free identification with model construction from the large data set described in Section~\ref{sec:household} to revisit GK's central question: how monetary policy surprises pass through financial markets to prices and real activity. GK use the 6-variable specification in their Figure~2 as the baseline: the one-year Treasury rate, CPI, industrial production, the excess bond premium, the mortgage spread, and the commercial paper spread (see Table~\ref{tab:mp_model17_variables} for data details). The one-year Treasury rate summarizes both the current policy stance and the expected path of short rates. CPI and industrial production measure the aggregate price and real activity responses. The three spreads trace distinct credit-cost margins in corporate bond, mortgage, and short-term business credit markets.

These 6 variables define the economic question and therefore serve as the core variables throughout model construction and OOS selection. Our methodology then determines which additional variables enlarge the information set. This approach preserves direct comparability with GK while replacing their researcher-chosen extensions with our model-construction procedure and OOS forecast loss criterion. Identification and information selection perform distinct tasks: the three high-frequency instrument series constructed by \citet{Swanson2021} are used jointly as one instrument vector to identify the monetary policy shock, while the model-construction procedure and OOS forecast loss criterion determine which variables are included to trace its transmission through financial markets to prices and real activity. The model-construction sample runs from 1988:02 through 2012:06. The end date matches GK's sample endpoint, while the start date is set by the earliest common availability of the three instrument series. The OOS validation sample covers the next 10 years, from 2012:07 through 2022:06.

We treat the three series of \citet{Swanson2021} as one instrument vector for a single monetary policy shock under Assumption~\ref{ass:mp_proxy}. Their federal funds rate, forward-guidance, and large-scale-asset-purchase labels describe distinct dimensions of announcement surprises, not separate shocks in our model.

\subsubsection{The economic question and the core evidence}
At the center of GK is a fundamental question about monetary policy transmission. In a frictionless benchmark, monetary policy affects private borrowing rates only by altering current and expected future short rates. Term premia and credit spreads do not respond. A modest movement in the path of safe rates should therefore produce a similarly modest movement in private credit costs. With financial frictions, however, a monetary tightening can widen credit spreads. The resulting increase in private borrowing costs can lead to declines in economic activity. The 6-variable core system that GK use as their baseline directly addresses the central question of their paper: ``how credit costs respond to exogenous surprises in monetary policy.'' 

Figure~\ref{fig:mp_core} provides our empirical estimates for this core system. The excess bond premium and commercial paper spread rise sharply after the monetary policy shock, with credible bands above zero at both the 68\% and 90\% levels, and then decline toward zero. The mortgage spread also rises, although its credible bands lie above zero only at the 68\% level and in the short run. By contrast, the one-year Treasury rate rises only slightly on impact, and its credible bands include zero at both the 68\% and 90\% levels. Private credit conditions therefore tighten more clearly than the policy indicator itself. The response of the economy cannot be summarized by a large and precisely estimated movement in the one-year rate. CPI falls persistently, with credible bands below zero at the 90\% level for almost two years. The median response of industrial production also turns negative, but its credible bands lie below zero only at the 68\% level, between one and two years after the shock.

The broad economic response supports GK's central mechanism. A monetary policy shock raises credit spreads across corporate bond, mortgage, and short-term business credit markets while prices and output decline. The rise in borrowing spreads represents a transmission margin beyond the safe rate. It increases the cost of external finance faced by firms and households and puts downward pressures on prices and output. The evidence is particularly strong for the excess bond premium and the commercial paper spread, which identify transmission through both long-term corporate credit and short-term business finance.

At the same time, our results differ from GK in informative respects. In their Figure~2, the one-year rate rises by roughly 20 basis points on impact, industrial production falls significantly after several months, CPI declines but not significantly at the 95\% level, and all three credit spreads rise significantly. In our Figure~\ref{fig:mp_core}, the decline in CPI is more precisely estimated than the decline in industrial production, and the response of the one-year rate is not credibly different from zero. The main agreement between our results and GK's is the increase in credit spreads across several financial markets. The precise response of the policy indicator and the relative strength of the price and output responses are less robust across the two empirical specifications.

This difference reflects design differences in several respects: the samples differ, the instruments differ, the estimation methods differ, and the error bands differ. GK identify the shock by instrumenting the reduced-form innovation in the one-year Treasury rate with FF4, the high-frequency surprise in the three-month-ahead federal funds futures rate. The one-year rate is the anchor of their construction. We instead use the three high-frequency series of \citet{Swanson2021} jointly as one instrument vector and estimate the proxy relation and the SVAR under one posterior distribution without any anchor. The three series contain policy news associated with the federal funds rate, forward guidance, and large-scale asset purchases. 

Our credible bands are wider than those reported by GK. The sample, instruments, model, prior, estimation method, and interval construction all differ, so this cross-study comparison does not identify the source of the difference and is not an efficiency comparison. Within our model, posterior uncertainty about the common proxy direction, the instrument loadings, and the SVAR parameters enters every impulse response. The wider credible bands do not contradict the asymptotic efficiency result in Proposition~\ref{prop:mp_efficiency} of Appendix~\ref{sec:mp_appendix}, which shows that the joint estimator attains the efficiency bound and weakly dominates regular two-stage estimators under the same model.

Yet, despite these differences in design and statistical uncertainty, the central transmission mechanism through credit markets holds. Our design reproduces the central GK finding: a monetary policy surprise widens credit spreads. The result holds under a different identification, with three instruments jointly and no privileged reduced-form innovation, and without a large or precise response of the one-year rate. Even without such a response, the excess bond premium and commercial paper spread rise credibly, the mortgage spread is credibly positive at the 68\% level, and prices and output are under downward pressures from these tightening credit markets. GK stress that modest movements in the one-year safe rate can produce large movements in credit costs. In that sense, our results strengthen their message: even without a credible movement in the one-year safe rate, contractionary monetary policy still tightens credit markets. Thus, our results establish the crucial fact behind the GK interpretation: monetary policy surprises tighten private credit conditions in ways that are not summarized by the response of the one-year safe rate. This fact highlights an important financial amplification margin, as emphasized by GK.

\subsubsection{Selection}

Our model-construction procedure concerns only the monetary policy shock, because that is the single shock identified. It therefore targets one component of the composite disturbance, the policy-equation component $\widehat u_{1,t}^{(k)}$, which is scalar. As discussed in Sections~\ref{sec:modelconstruction} and~\ref{sec:theory}, this component changes as variables are admitted. Each admitted variable enlarges the current system. The enlarged system is then reestimated by the joint method of Section~\ref{subsec:mp_proxy}, and $\widehat u_{1,t}^{(k)}$ changes accordingly. The OOS forecast loss criterion selects a 19-variable system at $\widehat\lambda_\alpha=0.0881$: the 6 GK core variables and the 13 additional variables reported in Table~\ref{tab:mp_model17_variables}.\footnote{The 19-variable system at $\widehat\lambda_\alpha=0.0881$, selected by the OOS forecast loss criterion, reaches its terminal system after one update. The procedure begins with the 6 GK core variables at iteration 0 and adds all 13 additional variables in the first and only update. Across the candidate values of $\lambda$, most terminal systems require two updates, some require one, and a few require more than two.}

Our methodology selects neither a ladder of two-year, five-year, and ten-year Treasury rates nor several closely related Treasury-minus-federal-funds spreads. This is intuitive because these highly correlated variables add little distinct information for predicting the innovation in the policy equation. The fact that the 19-variable selected system is not nested in GK's sequence of interest rate extensions is economically informative. The procedure does not merely choose which maturity to append to their baseline. It instead adds variables that measure distinct margins of labor adjustment, housing activity, asset valuation, volatility, credit pricing, liquidity conditions, exchange rates, and commodity prices, none of which can be recovered by moving along a Treasury yield ladder. The selected system therefore expands the empirical content of monetary policy transmission rather than only its interest rate dimension.

The selected variables pass two distinct procedures. They first enter because their contemporaneous values or lags have predictive content for the one fitted composite disturbance of the policy equation in the current system. The resulting terminal system is then compared with the other terminal systems through the OOS forecast loss of the core variables.\footnote{Computations were performed in MATLAB R2023b under Windows 11 Enterprise on a computer with an Intel Core Ultra 7 165U processor at 2.10 GHz and 32 GB of RAM. MATLAB had access to 12 CPU cores, although the computations were serial. The complete model-construction process across all values of $\lambda$ required 290 seconds of wall-clock time. The OOS selection based on 2,000 posterior draws for each model required 840 seconds (14 minutes). Increasing the number of posterior draws did not alter the results.} This separation matters. The variables are not chosen because they produce preferred impulse responses or because their economic labels fit a prior story. They enter through model construction, and the corresponding terminal system is evaluated by the common OOS criterion before the impulse responses are interpreted.

Model construction and impulse-response reporting serve different purposes. The 13 additional variables enter jointly in the optimization procedure, and their conditional predictive content for the fitted composite disturbance determines their admission. At the terminal iteration, they form the selected 19-variable information set in which the policy equation and monetary policy shock are estimated.

Identification is system specific. Given any core-plus-one reporting system, Theorem~\ref{thm:mp_identification} identifies the policy shock and its equation for that system under the maintained relevance and exclusion restrictions; the 19-variable dimension is not an identification condition for that given system. The selected information set determines which transmission margins are reported. We therefore follow GK and estimate each core-plus-one system separately because the selected 19-variable system produces credible bands too wide to distinguish the individual transmission margins clearly. Each reported response is conditional on the corresponding reporting system, not on the selected 19-variable system.

\subsubsection{Transmission margins beyond the GK core}
The 13 additional variables expand the GK core along six economically distinct margins: labor market adjustment, housing activity, external and commodity conditions, equity valuation and volatility, credit and liquidity conditions, and expected default risk. We take them in turn. Each figure adds one variable from the selected information set to the six-variable core, holds the core fixed, and reports posterior medians with 68\% and 90\% credible bands.\footnote{In each 7-variable system, the responses of the six core variables are essentially unchanged from the core system in Figure~\ref{fig:mp_core}. We therefore report only the response of the added variable.} We move from the real-activity and external margins, where the responses show downward pressures with limited precision, to the financial margins, where the responses are credible at the 90\% level over the early horizons.

\runinhead{Labor market adjustment} The selected system adds short-duration unemployment, hours in goods-producing industries, and hours in manufacturing (Figure~\ref{fig:mp_labor}). Hours are the intensive margin of labor demand, which firms can vary without changing employment or capacity. Short-duration unemployment measures recent job separations and hiring conditions. GK include neither margin. Their extensions to the baseline add interest rates one at a time, not labor variables. A monetary tightening can weaken demand, raise financing costs, and tighten working capital, and firms can respond on the hours margin. 

The evidence is mixed. Short-duration unemployment shows no clear response; its posterior median is small and its bands are wide. Hours in both sectors decline, with the largest reductions in the first year. The 68\% bands lie below zero over the medium run, while the 90\% bands include zero. A monetary tightening therefore puts downward pressures on hours worked. GK measure activity through industrial production alone. These labor margins complement that measure and respond to the same contractionary shock. This group also shows why selection and response strength are distinct. Short-duration unemployment is selected for its predictive content, yet its own response to the identified shock is weak.

\runinhead{Housing activity} The selected system adds three regional housing series: starts in the Northeast and Midwest and permits in the Northeast (Figure~\ref{fig:mp_housing}). This group extends the mortgage-spread channel already in the GK core. The mortgage spread measures the price of housing finance. Permits and starts measure the quantity response, at the planning and construction stages. Housing is among the most interest-sensitive sectors of the economy, and the sector where the policy rate, long rates, mortgage spreads, collateral values, and construction decisions meet. The regional labels carry no structural interpretation. These indicators are selected for their informational content in the model-construction procedure. The selection of more than one regional series indicates that regional housing adjustments are not synchronized and that this heterogeneity carries predictive content for the fitted composite disturbance in the policy equation.

Housing starts and permits have negative posterior medians. The 68\% credible band for housing permits includes zero. For housing starts, the 68\% credible bands lie strictly below zero over the one-year horizon in the Midwest and the two-year horizon in the Northeast, whereas the 90\% bands include zero.  A monetary tightening therefore puts downward pressures on residential construction. These quantity responses complement the positive mortgage-spread response in the core system: tighter housing finance is followed by weaker building. The selected additions show that the quantity side of the same channel belongs in the information set.

\runinhead{External and oil markets} The selected system adds the U.S.--U.K. exchange rate and the oil price (Figure~\ref{fig:mp_external_commodity}). Both contain fast-moving information about global financial conditions and cost pressures. The exchange rate reflects differences in monetary conditions across countries and changes in global risk. Oil prices respond to global demand and supply and also affect inflation. In the selected system, the exchange rate helps distinguish domestic policy movements from changes in foreign monetary conditions and global risk, while the oil price helps distinguish policy surprises from global demand and oil supply disturbances. Together, these variables help prevent external and commodity-market disturbances from contaminating the monetary policy shock.

Other U.S. dollar exchange rates are highly correlated with this bilateral rate. As with the regional housing series, the bilateral exchange rate is an indicator selected by the model-construction procedure, not a claim that the United Kingdom has a unique role in U.S. policy transmission. The exchange rate declines persistently. Because the series is measured in U.S. dollars per U.K. pound, the decline is a dollar appreciation after the monetary policy tightening. Its 68\% band lies below zero over the medium run, and the upper edge of the 90\% band is only barely above zero, so the posterior places overwhelming probability on a decline. The oil price falls on impact and then reverts, but its bands are wide throughout. A monetary tightening therefore puts downward pressures on the oil price, consistent with weaker global demand, though the oil response is only suggestive. GK include no exchange rate or commodity price. These margins condition the system on global conditions that move alongside policy surprises.

\runinhead{Equity valuation and volatility} The selected system adds the price-earnings ratio and the VIX (Figure~\ref{fig:mp_equity_uncertainty}). The two are related but distinct. The price-earnings ratio is a valuation ratio that reflects discount rates, expected earnings growth, and risk compensation. The VIX is a forward-looking measure of expected volatility. These are the first noncore margins credible at the 90\% level. The price-earnings ratio falls sharply during the first year, with both bands below zero over the early horizons. Its median later turns positive, but its 68\% credible band includes zero. The VIX rises immediately and then decays, with both bands above zero over the early horizons. A monetary policy tightening therefore operates through both valuation and volatility. Lower valuations and higher volatility affect investment, collateral values, risk-bearing capacity, and the willingness of intermediaries to supply credit. GK measure neither margin. These responses show a rapid deterioration in market risk conditions that the core credit spreads do not summarize.

\runinhead{Credit and liquidity conditions} The selected system retains the three core credit-cost measures and adds the Baa-minus-federal-funds (BAA-FF) spread, the TED spread, and the GZ corporate bond spread (Figure~\ref{fig:mp_credit_liquidity}). These cover different parts of the credit system. The BAA-FF spread combines the transmission of the policy rate with term and corporate-risk components. The TED spread measures the premium on unsecured wholesale dollar funding relative to safe Treasury bills, and captures funding stress and flight to safety. The GZ spread measures the total corporate bond spread. Together with the mortgage and commercial paper spreads, these variables trace credit conditions from borrowers back to the funding markets that support lending. 

Across the three core-plus-one reporting systems, the GZ spread rises sharply and is credible at the 90\% level over the early horizons, the strongest response of the three. The TED spread rises on impact and is credible at the 90\% level at the short end, then decays quickly. The BAA-FF response is hump-shaped and imprecise, not even credible at the 68\% level in the medium run. Together, these reporting systems extend GK's central result across corporate bond pricing, wholesale bank funding, and short-term business credit. The credit channel is not confined to a single spread.

\runinhead{Expected default risk} The GZ spread and the excess bond premium are both in the selected information set, and their difference isolates the spread component associated with expected default risk (Figure~\ref{fig:mp_expected_default}). GK emphasize that the excess bond premium removes the expected-default component and isolates the part of corporate spreads driven by risk-bearing capacity. The difference $\texttt{GZS}-\texttt{EBP}$ rises sharply after the shock, and both the 68\% and 90\% bands exclude zero over the early horizons. Monetary tightening therefore widens the corporate bond spread through two channels: the excess bond premium that GK emphasize and the expected-default component that they remove. This is the sharpest extension of their central result. The total rise in corporate borrowing costs reflects both a fall in risk-bearing capacity and a deterioration in the expected creditworthiness of borrowers. Expected default, therefore, is an independently essential part of the credit channel.

\runinhead{Summary} Taken together, the six margins convert GK's researcher-chosen sequence of interest-rate extensions into one selected information set of economically distinct margins. The selection adds no dense ladder of Treasury yields. It adds labor adjustment, housing variables, equity valuation, volatility, corporate credit, wholesale liquidity, the exchange rate, and the oil price. The strongest evidence lies in the financial margins: the responses of the price-earnings ratio, VIX, GZ spread, TED spread, and expected-default risk are credible at the 90\% level over the early horizons. The BAA--FF response is imprecise. The real activity and external margins show downward pressures with less precision.

The expected-default result extends GK's central finding most directly. In the core system augmented with the selected GZ spread, the component they remove from the corporate spread responds materially to monetary policy. The selected information set is therefore a disciplined enlargement of GK. It preserves their external-instrument strategy and emphasis on credit costs, replaces a single anchor and one instrument with anchor-free joint estimation from multiple instruments, and replaces researcher-chosen model expansion with iterative construction and OOS selection. The selected information set is larger than the GK baseline, narrower than the full candidate data set, and organized around distinct margins that trace monetary policy transmission through financial conditions.

\section{Conclusion}\label{sec:conclusion}

SVAR evidence is conditional on an information set. When hundreds of candidate series are available, the surrounding information set cannot be chosen credibly by hand. This paper formalizes that choice through an algorithm-driven Bayesian procedure. The researcher specifies the economic question, the core variables, and the identifying restrictions. For each value of the model complexity parameter, the construction step admits remaining candidates with predictive content for the fitted composite disturbances of the current system, reestimates the enlarged SVAR, and iterates to a terminal information set. For each loss, the Bayesian OOS criterion retains systems credibly better than the reference; when none exist, it retains systems credibly indistinguishable from the reference. The final rule selects the largest system in the union of the retained sets under squared and absolute loss. We establish finite termination, uniqueness of the construction path and terminal system, an exact stopping characterization, invariance of the auxiliary admission rule at a given iteration to the measurement units of the remaining candidates, stability under small numerical perturbations, and existence and uniqueness of the OOS-selected system.

The two applications show that the information set can alter the central economic conclusion. In the household credit application, the procedure selects four measures of housing production that neither MSV nor BPSS include. A household credit shock raises household credit while housing production falls and the posterior median of output shows no initial boom. A separate housing production shock raises construction, output, and household credit together. Within the selected system, output rises with expanding residential construction, not with household credit alone. Models of household credit and the business cycle should therefore allow housing production to move separately from household credit.

In the monetary policy application, joint estimation with three instruments and no anchor strengthens GK's central credit spread channel. In the core system augmented with the selected GZ spread, monetary policy tightening raises both the excess bond premium and expected default risk. The application delivers an independent theoretical result. Under the maintained proxy restrictions, a vector of external instruments identifies one shock and its equation if and only if at least one instrument is correlated with that shock. The theorem allows any number of instruments for one shock, extends to several shocks and their corresponding instrument vectors under the joint covariance-admissibility condition, and gives explicit formulas for the coefficients of the identified equation and its impulse responses without a reduced-form innovation anchor.

The framework is designed to be modular. The auxiliary optimization is one implementation of the construction step; alternative optimization criteria could use different regularizers, group a variable with its lags, or target several equations with different tuning parameters. The Bayesian selection step could use other loss functions or proper scoring rules for the full predictive distribution, and the candidate set could include constructed latent factors along with observed variables. The architecture, however, remains the same: the researcher defines the economic question and identifying restrictions, the algorithm constructs candidate information sets, Bayesian estimation incorporates parameter uncertainty, and OOS evidence selects model complexity. The specific theoretical guarantees and empirical performance would have to be established for each alternative. Our theoretical results provide those guarantees for the procedure studied here and establish a well-defined, reproducible, tractable, and economically informative benchmark for evaluating such alternatives.

The paper offers a broader lesson. Identification restrictions are stated, defended, and debated; yet the variables on which they operate are often chosen informally. When the set of candidate variables is vast, this asymmetry is no longer defensible. An explicit algorithm makes the information set reproducible and open to inspection, and the applications show that this discipline can alter the economic interpretation of an identified shock. Identification has long been disciplined. In the age of big data, the information set deserves the same discipline. This paper provides it.

\clearpage

\begin{table}[!htbp]
\centering
\caption{Variables in the selected household credit system}
\label{tab:mp_model11_variables}
\footnotesize
\renewcommand{\arraystretch}{1.08}
\begin{tabular}{@{}p{0.11\textwidth}p{0.29\textwidth}p{0.51\textwidth}@{}}
\toprule
\textbf{Short} & \textbf{Original ticker} & \textbf{Economic description}\\
\midrule
\texttt{IP} & \texttt{INDPRO} & Industrial production index.\\
\texttt{BC} & \texttt{BUSLOANS} & Commercial and industrial loans.\\
\texttt{HHC} & \texttt{HH} & Consumer and real estate loans at all commercial banks.\\
\midrule
\texttt{HRS-G} & \texttt{CES0600000007} & Average weekly hours in goods-producing industries.\\
\texttt{HS-S} & \texttt{HOUSTS} & Housing starts in the South.\\
\texttt{PER-NE} & \texttt{PERMITNE} & New private housing permits in the Northeast, at a seasonally adjusted annual rate.\\
\texttt{T1Y-FF} & \texttt{T1YFFM} & One-year Treasury rate minus the federal funds rate.\\
\texttt{EBP} & \texttt{EBP\_OA} & Option-adjusted excess bond premium of GZ---the corporate bond spread component not attributable to expected default risk.\\
\texttt{GZS} & \texttt{GZ\_SPR} & GZ corporate bond spread, comprising the component associated with expected default risk and the excess bond premium.\\
\texttt{HRS-M} & \texttt{AWHMAN} & Average weekly hours in manufacturing.\\
\texttt{PER} & \texttt{PERMIT} & New private housing permits, at a seasonally adjusted annual rate.\\
\texttt{UI} & \texttt{CLAIMSx} & Initial claims for unemployment insurance.\\
\texttt{PER-W} & \texttt{PERMITW} & New private housing permits in the West, at a seasonally adjusted annual rate.\\
\bottomrule
\end{tabular}
\par
\vspace{0.4em}
\noindent
\begin{minipage}{0.95\textwidth}
\footnotesize
\raggedright
\textit{Note:} The table reports the 13 variables included in the selected system at $\lambda=0.1129$. It gives the short abbreviations used in the discussion in the first column, the original series names in the second column, and their economic meanings in the third column. The original tickers are FRED series IDs whenever available.
\end{minipage}
\end{table}

\begin{table}[!htbp]
\centering
\caption{Variables in the selected monetary policy system}
\label{tab:mp_model17_variables}
\footnotesize
\renewcommand{\arraystretch}{1.08}
\begin{tabular}{@{}p{0.11\textwidth}p{0.29\textwidth}p{0.51\textwidth}@{}}
\toprule
\textbf{Short} & \textbf{Original ticker} & \textbf{Economic description}\\
\midrule
\texttt{T1Y} & \texttt{GS1}  & One-year Treasury rate.\\
\texttt{CPI} & \texttt{CPIAUCSL}  & Consumer price index, all items.\\
\texttt{IP} & \texttt{INDPRO}  & Industrial production index.\\
\texttt{EBP} & \texttt{EBP\_OA} & GZ option-adjusted excess bond premium, the corporate bond spread component not attributable to expected default risk.\\
\texttt{MGS} & \texttt{MortgageSpliceGS10}  & Mortgage spread relative to the ten-year Treasury rate.\\
\texttt{CPS} & \texttt{CP3M\_TB3MS}  & Three-month commercial paper spread relative to the three-month Treasury bill rate.\\
\midrule
\texttt{U5} & \texttt{UEMPLT5}  & Civilians unemployed for less than 5 weeks.\\
\texttt{HRS-G} & \texttt{CES0600000007}  & Average weekly hours in goods-producing industries.\\
\texttt{HRS-M} & \texttt{AWHMAN}  & Average weekly hours in manufacturing.\\
\texttt{HS-NE} & \texttt{HOUSTNE}  & Housing starts in the Northeast.\\
\texttt{HS-MW} & \texttt{HOUSTMW}  & Housing starts in the Midwest.\\
\texttt{PER-NE} & \texttt{PERMITNE}  & New private housing permits in the Northeast, at a seasonally adjusted annual rate.\\
\texttt{PE} & \texttt{S\&P PE ratio}  & Price-earnings ratio for the S\&P composite common stock index.\\
\texttt{BAA-FF} & \texttt{BAAFFM}  & Moody's Baa corporate bond yield minus the federal funds rate.\\
\texttt{FX-UK} & \texttt{EXUSUKx}  & U.S.--U.K. foreign exchange rate.\\
\texttt{OIL} & \texttt{OILPRICEx}  & Crude oil price, spliced WTI and Cushing series.\\
\texttt{VIX} & \texttt{VIXCLSx}  & CBOE implied volatility index.\\
\texttt{TED} & \texttt{MED3\_TB3MS}  & Three-month Eurodollar deposit rate minus the three-month Treasury bill rate.\\
\texttt{GZS} & \texttt{GZ\_SPR} & GZ corporate bond spread, comprising the component associated with expected default risk and the excess bond premium.\\
\bottomrule
\end{tabular}
\par
\vspace{0.4em}
\noindent
\begin{minipage}{0.95\textwidth}
\footnotesize
\raggedright
\textit{Note:} The table reports the variables included in the selected monetary policy system. It gives the short abbreviations used in the discussion in the first column, the original series names in the second column, and their economic meanings in the third column. The original tickers are FRED series IDs whenever available.
\end{minipage}
\end{table}

\begin{figure}[htbp]
\centering
\includegraphics[width=\textwidth]{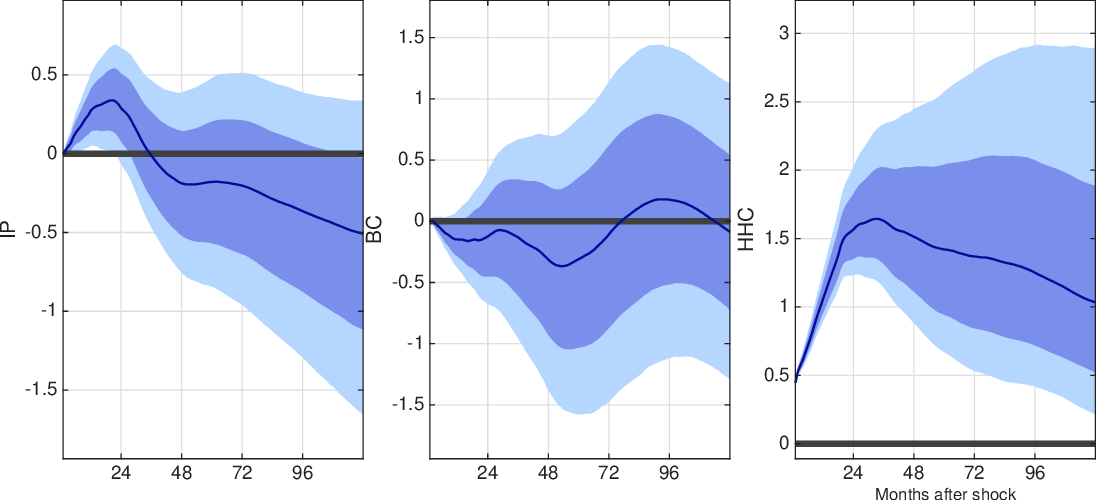}
\caption{Household credit shock and its dynamic impacts. This figure replicates MSV's Figure I using updated monthly data and the monthly Sims--Zha prior. All responses are expressed in percent. Unless otherwise stated, throughout this paper we follow GK and use ``percent'' as shorthand: variables entered in log levels are reported as approximate percentage changes, whereas rates, spreads, and shares are reported in percentage points. The dark band shows the 68\% posterior credible band, and the light band the 90\% posterior credible band. See Table~\ref{tab:mp_model11_variables} for complete variable descriptions.}
\label{fig:hhc_qje}
\end{figure}

Variables entered in log levels are reported as approximate percentage changes; rates, spreads, and shares are reported in percentage points.

\begin{figure}[htbp]
\centering
\includegraphics[width=\textwidth]{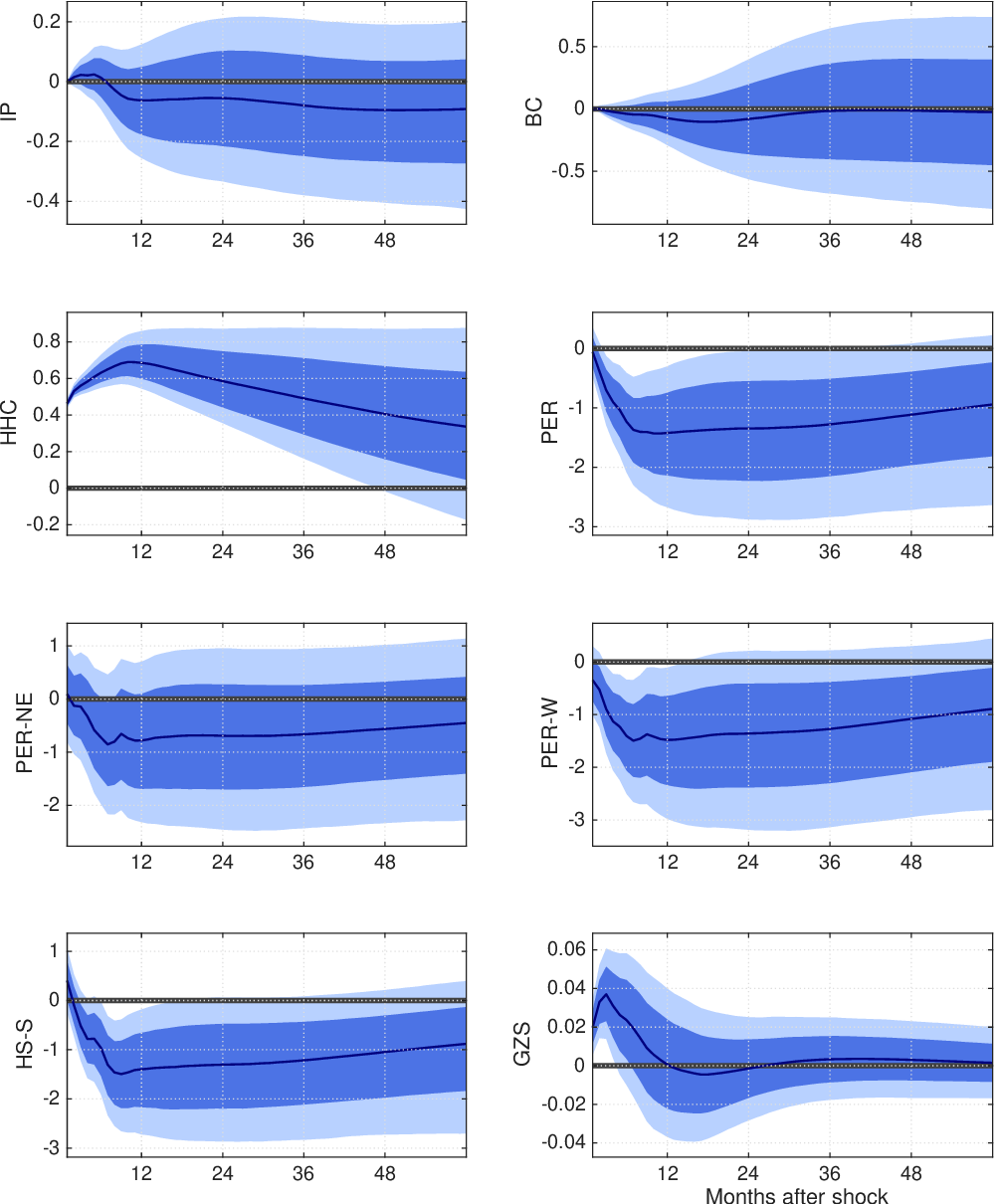}
\caption{Household credit shock and its dynamic effects. The impulse responses are estimated from the selected 13-variable system under the recursive identification used by MSV. All impulse responses are expressed in percent. The dark band shows the 68\% posterior credible band and the light band the 90\% posterior credible band. See Table~\ref{tab:mp_model11_variables} for complete variable descriptions.}
\label{fig:hhc_aerchol_hhc}
\end{figure}

\begin{figure}[htbp]
\centering
\includegraphics[width=\textwidth]{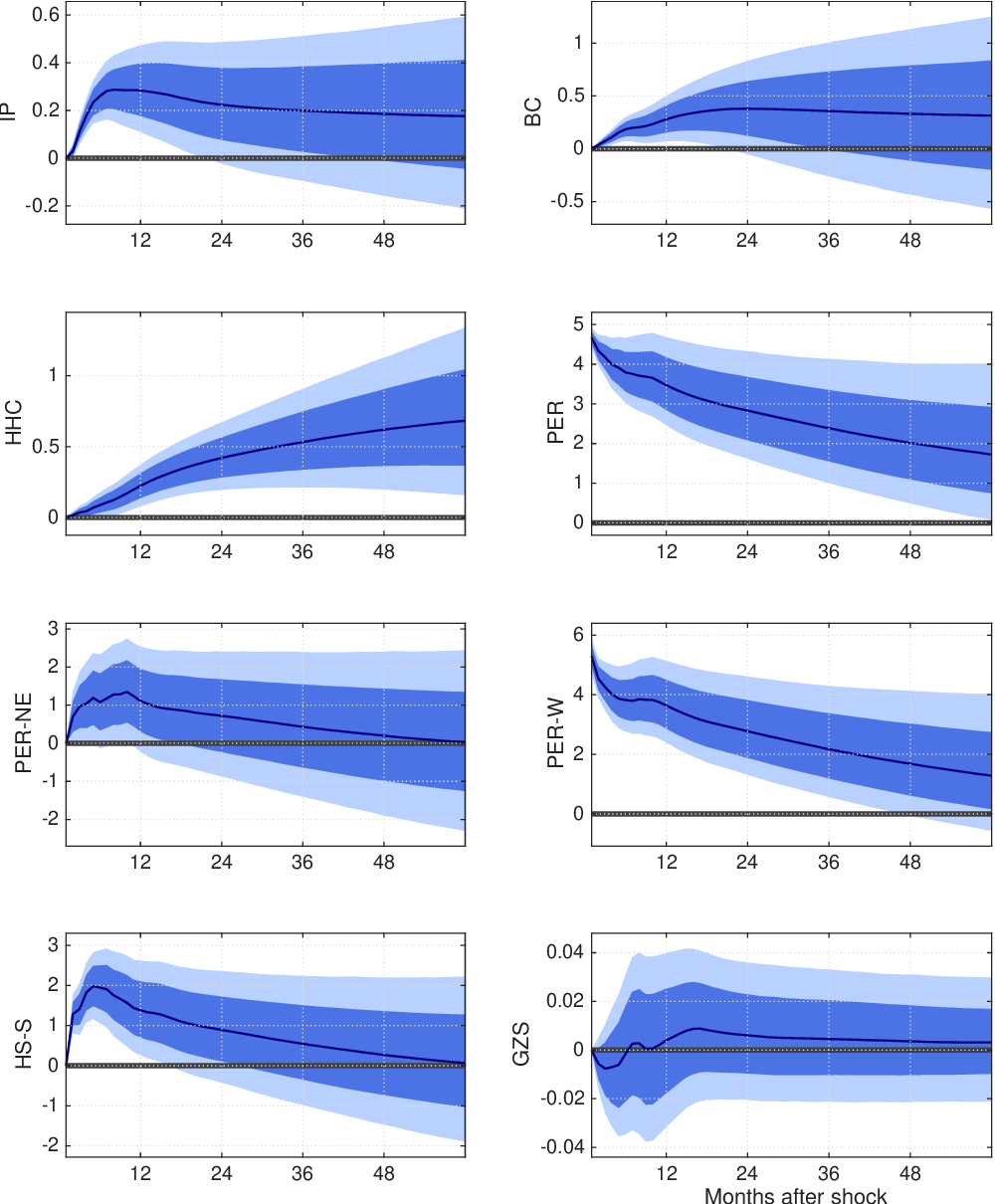}
\caption{Housing production shock and its dynamic impacts. The impulse responses are estimated from the selected 13-variable system under the recursive identification used by MSV. All impulse responses are expressed in percent. The dark band shows the 68\% posterior credible band, and the light band the 90\% posterior credible band. See Table~\ref{tab:mp_model11_variables} for complete variable descriptions.}
\label{fig:hhc_aerchol_housing}
\end{figure}

\begin{figure}[htbp]
\centering
\includegraphics[width=\textwidth]{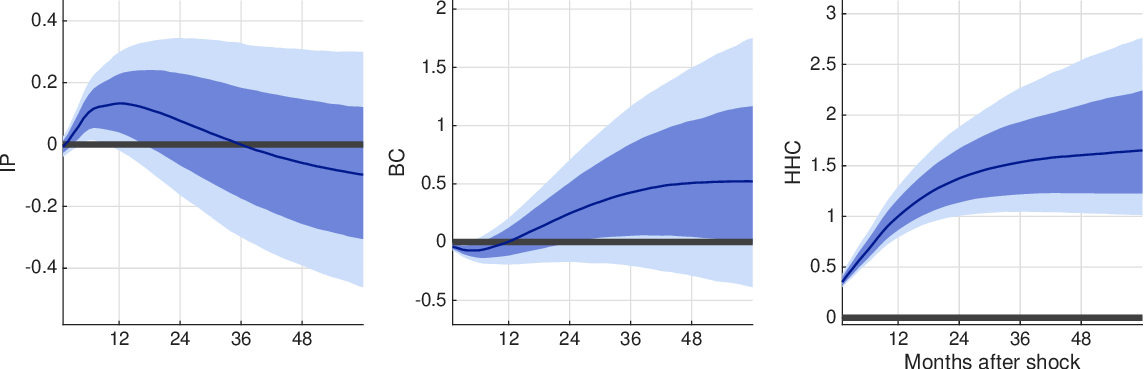}
\caption{Household credit shock and its dynamic impacts. This figure replicates the IP, BC, and HHC responses in the second column of BPSS's Figure 2 using updated monthly data and the monthly Sims--Zha prior. All impulse responses are expressed in percent. The dark band shows the 68\% posterior credible band, and the light band the 90\% posterior credible band. See Table~\ref{tab:mp_model11_variables} for complete variable descriptions.}
\label{fig:hhc_aer_dup}
\end{figure}

\begin{figure}[htbp]
\centering
\includegraphics[width=\textwidth]{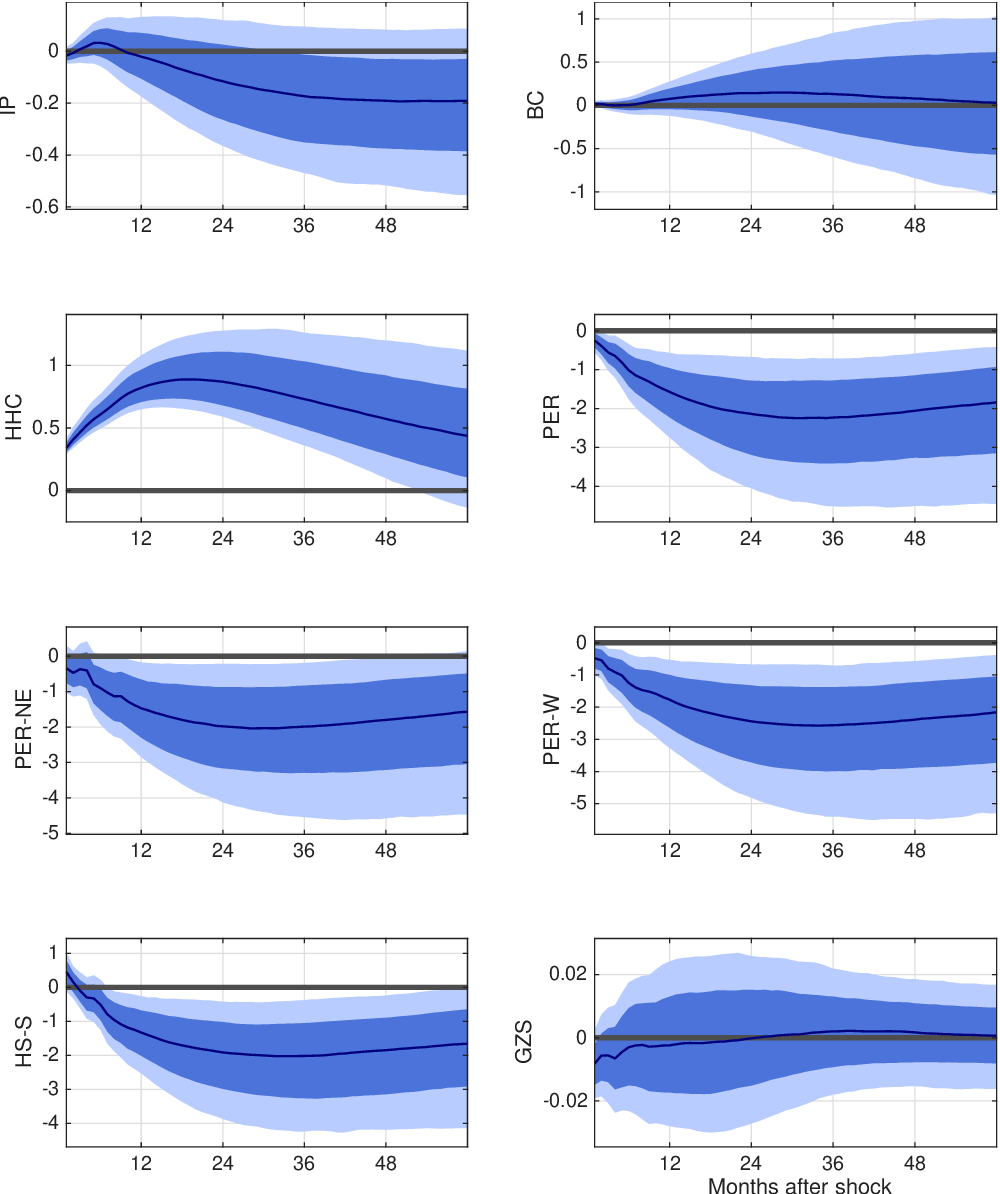}
\caption{Household credit shock and its dynamic effects. The impulse responses are estimated from the selected 13-variable system using the heteroskedasticity-based identification of BPSS. All impulse responses are expressed in percent. The dark band shows the 68\% posterior credible band and the light band the 90\% posterior credible band. See Table~\ref{tab:mp_model11_variables} for complete variable descriptions.}
\label{fig:hhc_t_shk_aer_hhc}
\end{figure}

\begin{figure}[htbp]
\centering
\includegraphics[width=\textwidth]{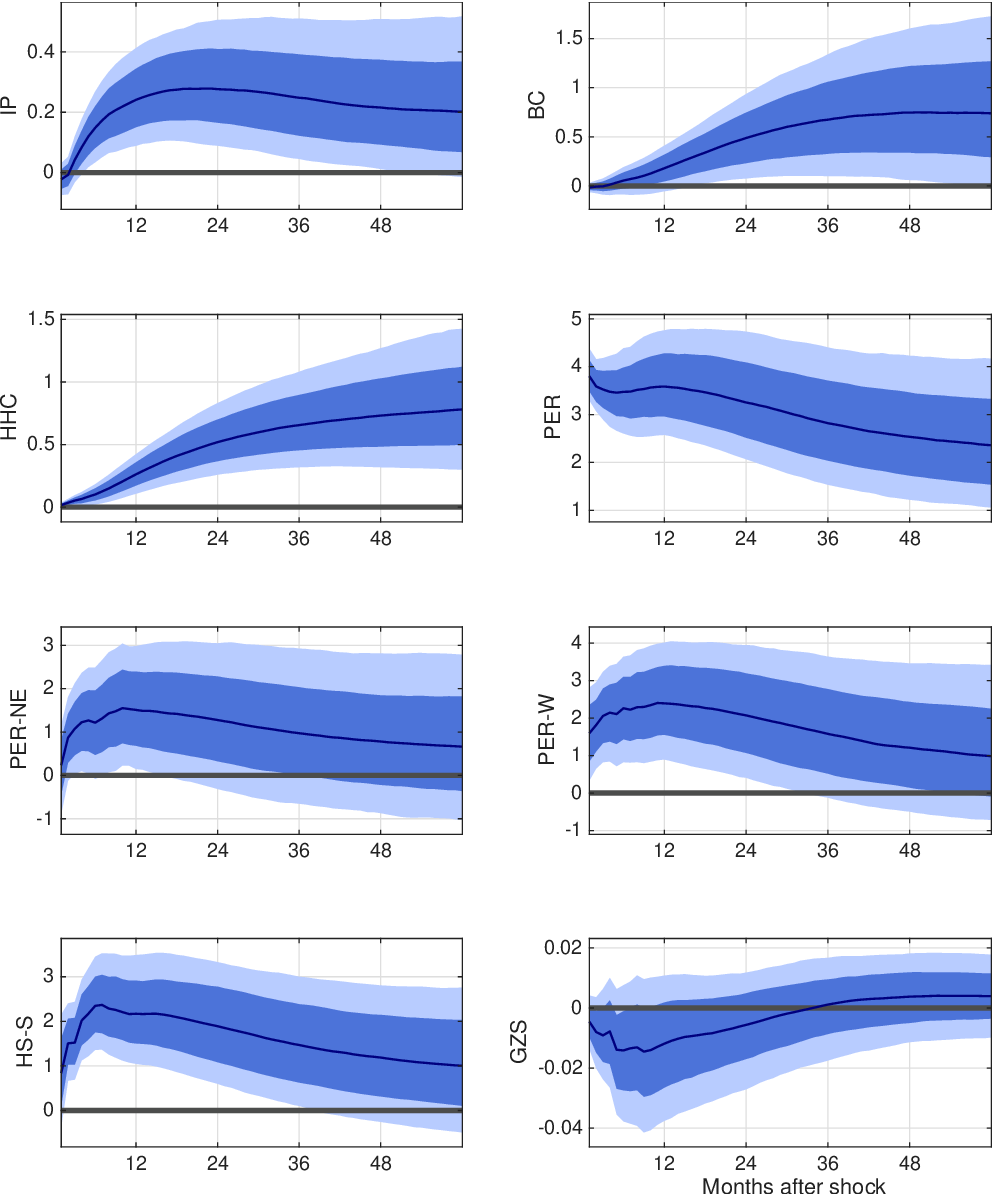}
\caption{Housing production shock and its dynamic impacts. The impulse responses are estimated from the selected 13-variable system using the heteroskedasticity-based identification of BPSS.  All impulse responses are expressed in percent. The dark band shows the 68\% posterior credible band, and the light band the 90\% posterior credible band. See Table~\ref{tab:mp_model11_variables} for complete variable descriptions.}
\label{fig:hhc_t_shk_aer_housing}
\end{figure}

\begin{figure}[htbp]
\centering
\includegraphics[width=\textwidth]{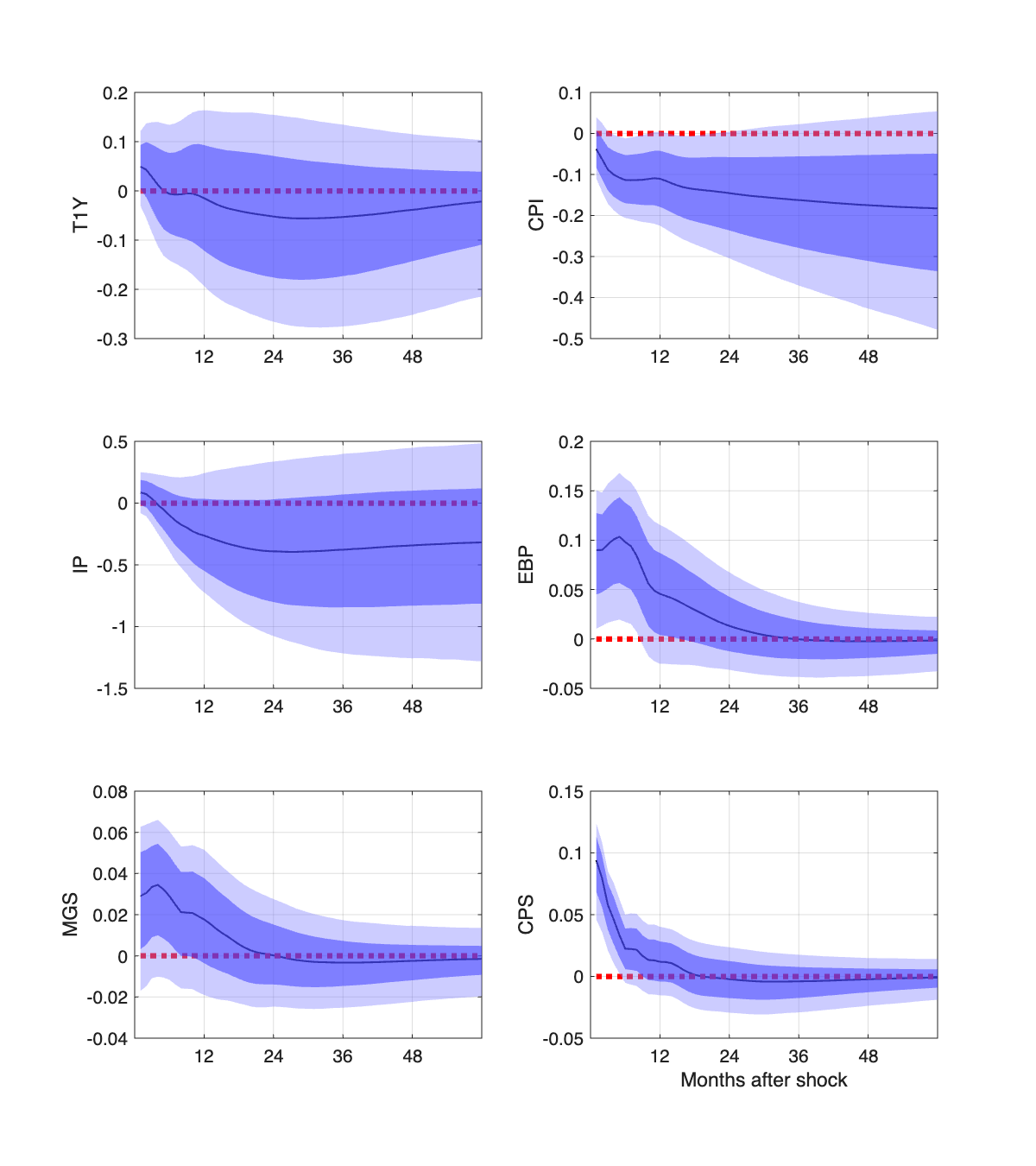}
\caption{Monetary policy shock and its transmission through the core variables. Following GK, all dynamic responses are expressed in percent. ``T1Y'' denotes the one-year Treasury rate, ``CPI'' the consumer price index, ``IP'' industrial production, ``EBP'' the excess bond premium, ``MGS'' the mortgage spread, and ``CPS'' the commercial paper spread. The dark band shows the 68\% posterior credible band and the light band the 90\% posterior credible band. See Table~\ref{tab:mp_model17_variables} for complete variable descriptions.}
\label{fig:mp_core}
\end{figure}

\begin{figure}[htbp]
\centering
\includegraphics[width=\textwidth]{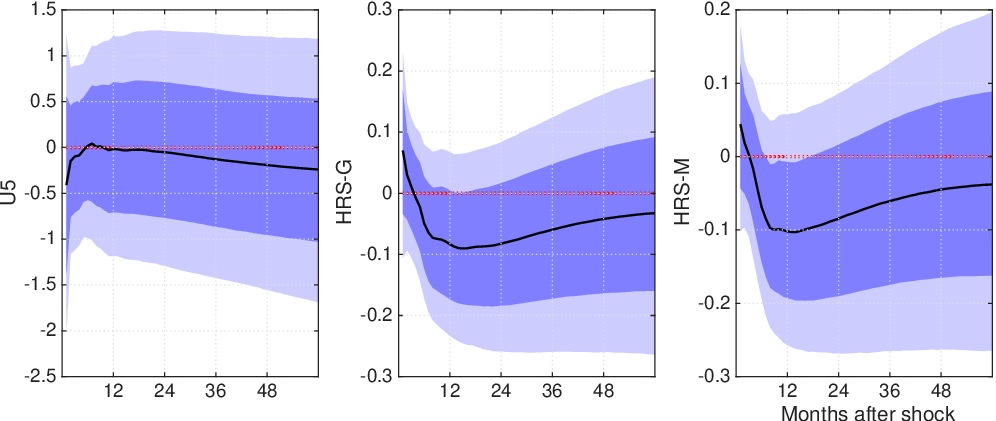}
\caption{Monetary policy shock and its transmission through selected labor market variables. Following GK, all dynamic responses are expressed in percent. ``U5'' denotes unemployed for less than 5 weeks, ``HRS-G'' hours in goods-producing industries, and ``HRS-M'' hours in manufacturing. The dark band shows the 68\% posterior credible band and the light band the 90\% posterior credible band. See Table~\ref{tab:mp_model17_variables} for complete variable descriptions.}
\label{fig:mp_labor}
\end{figure}

\begin{figure}[htbp]
\centering
\includegraphics[width=\textwidth]{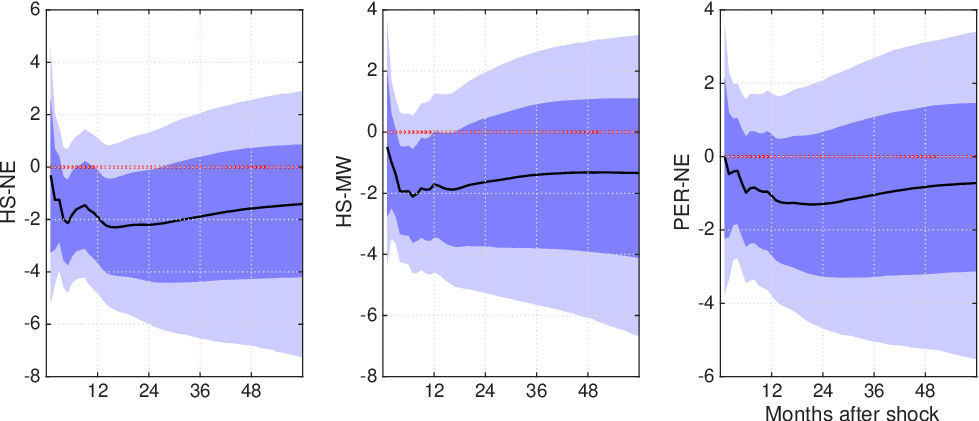}
\caption{Monetary policy shock and its transmission through selected housing variables. Following GK, all dynamic responses are expressed in percent. ``HS-NE'' denotes housing starts in the Northeast, ``HS-MW'' housing starts in the Midwest, and ``PER-NE'' new housing permits in the Northeast. The dark band shows the 68\% posterior credible band and the light band the 90\% posterior credible band. See Table~\ref{tab:mp_model17_variables} for complete variable descriptions.}
\label{fig:mp_housing}
\end{figure}

\begin{figure}[htbp]
\centering
\includegraphics[
  width=\textwidth
]{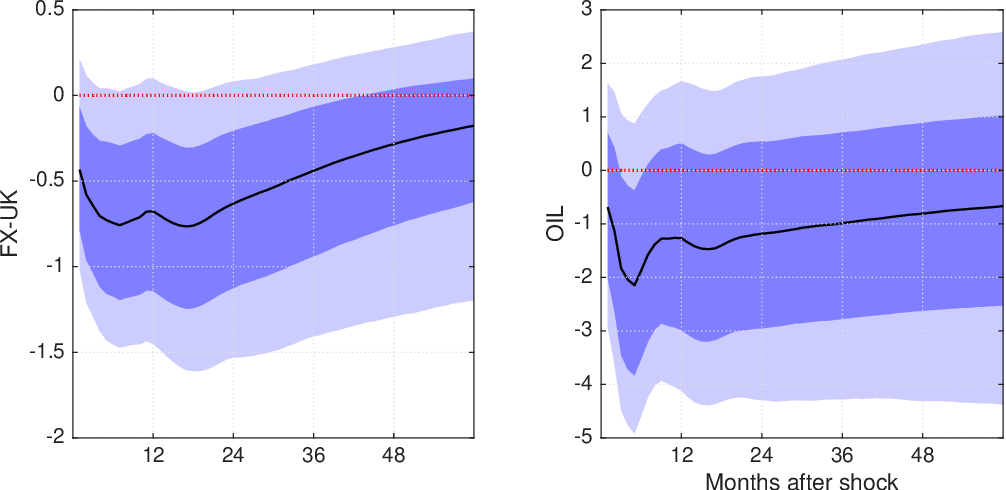}
\caption{Monetary policy shock and its transmission through selected external and commodity variables. Following GK, all dynamic responses are expressed in percent. ``FX-UK'' denotes the U.S./U.K. foreign exchange rate, and ``OIL'' denotes the oil price. The dark band shows the 68\% posterior credible band and the light band the 90\% posterior credible band. See Table~\ref{tab:mp_model17_variables} for complete variable descriptions.}
\label{fig:mp_external_commodity}
\end{figure}

\begin{figure}[htbp]
\centering
\includegraphics[
  width=\textwidth
]{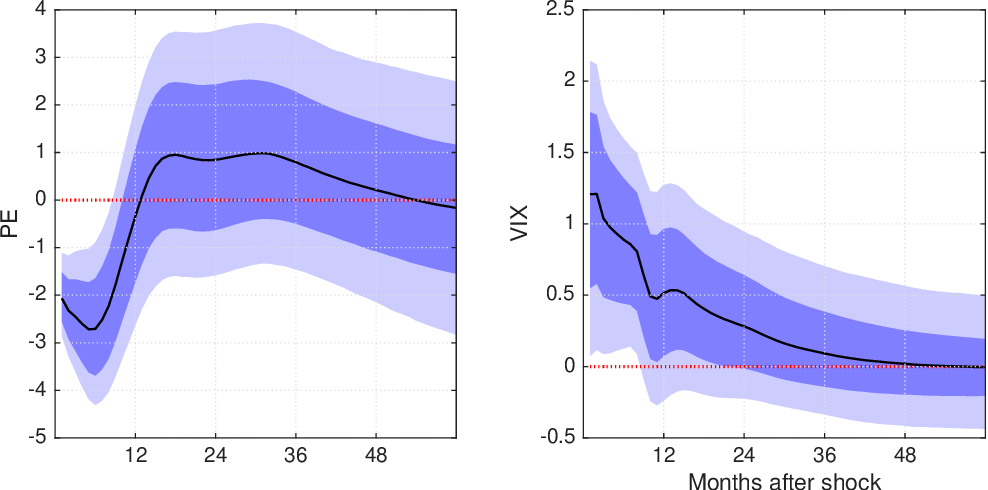}
\caption{Monetary policy shock and its transmission through selected equity and volatility variables. Following GK, all dynamic responses are expressed in percent. ``PE'' denotes the price-earnings ratio, and ``VIX'' denotes stock market volatility. The dark band shows the 68\% posterior credible band and the light band the 90\% posterior credible band. See Table~\ref{tab:mp_model17_variables} for complete variable descriptions.}
\label{fig:mp_equity_uncertainty}
\end{figure}

\begin{figure}[htbp]
\centering
\includegraphics[width=\textwidth]{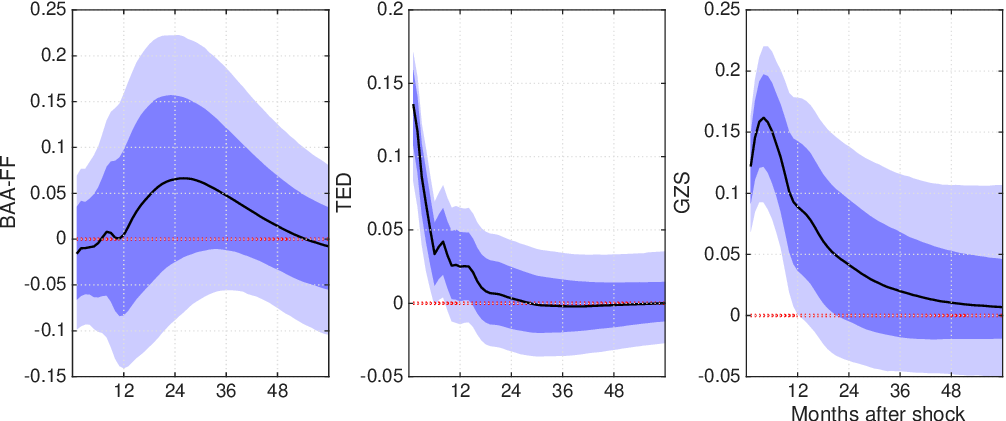}
\caption{Monetary policy shock and its transmission through selected credit and liquidity variables. Following GK, all dynamic responses are expressed in percent. ``BAA-FF'' denotes the Baa corporate bond yield spread over the federal funds rate, ``TED'' the Eurodollar deposit rate spread over the three-month Treasury bill rate, and ``GZS'' the GZ corporate bond spread. The dark band shows the 68\% posterior credible band and the light band the 90\% posterior credible band. See Table~\ref{tab:mp_model17_variables} for complete variable descriptions.}
\label{fig:mp_credit_liquidity}
\end{figure}

\begin{figure}[htbp]
\centering
\includegraphics[
  width=0.6\textwidth
]{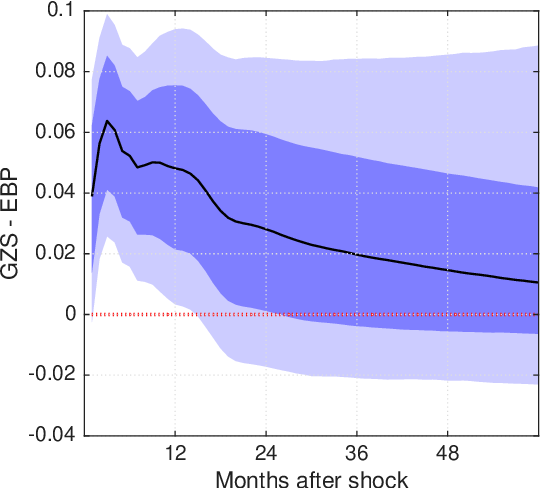}
\caption{Monetary policy shock and its transmission through expected default risk. The responses are estimated from the six-variable core system augmented with the selected GZ spread. ``EBP'' denotes the excess bond premium, and ``GZS'' denotes the GZ corporate bond spread. The difference $\texttt{GZS}-\texttt{EBP}$ measures the spread component attributable to expected default risk. All responses are expressed in percentage points. The dark and light bands depict the 68\% and 90\% posterior credible bands, respectively. See Table~\ref{tab:mp_model17_variables} for complete variable descriptions.}
\label{fig:mp_expected_default}
\end{figure}

\clearpage
\newpage 
\appendix
\setcounter{section}{0}
\setcounter{equation}{0}
\setcounter{figure}{0}
\setcounter{table}{0}
\setcounter{proposition}{0}
\setcounter{lemma}{0}
\setcounter{definition}{0}
\renewcommand{\thesection}{S\arabic{section}}
\renewcommand{\theequation}{S\arabic{equation}}
\renewcommand{\thefigure}{S\arabic{figure}}
\renewcommand{\thetable}{S\arabic{table}}
\renewcommand{\theproposition}{S\arabic{proposition}}
\renewcommand{\thelemma}{S\arabic{lemma}}
\renewcommand{\thedefinition}{S\arabic{definition}}

In this appendix, sections, tables, figures, and equations are labeled with the prefix ``S'' to denote supplemental material.

\section{Proofs for Model Construction and OOS Selection}\label{sec:proofs}

\begin{proof}[Proof of Proposition~\ref{prop:convergence}]
By \eqref{eq:selected_set}, $\widetilde{\cS}^{(k)}(\lambda)$ is a subset of $\widetilde{\cI}^{(k)}=\cN\setminus\cI^{(k)}$. Hence, $\widetilde{\cS}^{(k)}(\lambda)$ and $\cI^{(k)}$ are disjoint. The update in \eqref{eq:update} therefore gives
\[
  \cI^{(k+1)}=\cI^{(k)}\cup\widetilde{\cS}^{(k)}(\lambda),
  \qquad
  |\cI^{(k+1)}|=|\cI^{(k)}|+|\widetilde{\cS}^{(k)}(\lambda)|.
\]
The first equality implies $\cI^{(k)}\subseteq\cI^{(k+1)}$. It also implies that an index already contained in $\cI^{(k)}$ cannot be removed by any later update. If $\widetilde{\cS}^{(k)}(\lambda)\neq\emptyset$, the second equality implies $|\cI^{(k+1)}|>|\cI^{(k)}|$. Every nonterminal update therefore adds at least one index from the finite set $\cN\setminus\cI^{(0)}$, which contains $n-n^{(0)}$ indices. There can be no more than $n-n^{(0)}$ strict updates. The terminal index consequently satisfies $\bar k\leq n-n^{(0)}$. The procedure evaluates the initial system and then evaluates one enlarged system after each strict update. Hence, the number of current systems estimated along the path is $\bar k+1\leq n-n^{(0)}+1$.
\end{proof}

\begin{proof}[Proof of Proposition~\ref{prop:uniqueness}]
The initial index set $\cI^{(0)}$ is fixed by the economic application. Suppose that $\cI^{(k)}$ is uniquely determined. The corresponding variable vector $y_t^{(k)}$, the remaining index set $\widetilde{\cI}^{(k)}$, the index mapping $j\mapsto\iota_j^{(k)}$, and the standardized regressor vector $x_{t,\mathrm{std}}^{(k)}$ are then uniquely determined because the standardization moments are computed once and held fixed. 

Assumption~\ref{ass:well_defined} ensures that the argmax set in \eqref{eq:mode} is nonempty, and the prespecified deterministic rule selects one posterior mode $\widehat\theta^{(k)}$ and hence one fitted composite disturbance vector $\widehat u_t^{(k)}$. The same assumption determines one minimizer of \eqref{eq:l1_obj} for every $q\in\cQ$. The coefficient blocks in \eqref{eq:coef_block}, the selected set $\widetilde{\cS}^{(k)}(\lambda)$ in \eqref{eq:selected_set}, and the updated set $\cI^{(k+1)}$ in \eqref{eq:update} are consequently unique. Induction from $k=0$ establishes uniqueness of the entire path. Proposition~\ref{prop:convergence} establishes finite termination, and the terminal index set and variable vector are therefore unique.
\end{proof}

\begin{proof}[Proof of Lemma~\ref{lem:centering}]
For fixed $\beta_q$, the criterion in \eqref{eq:l1_obj} is a strictly convex quadratic function of $d_q$. Differentiating with respect to $d_q$ and setting the derivative equal to zero gives
\[
  -\frac{1}{T_{\cL}}\bm 1'
  \left(\widehat{\bm u}_q^{(k)}-d_q\bm 1-X^{(k)}\beta_q\right)=0,
\]
which gives the stated expression for $d_q(\beta_q)$. The residual after substitution is
\[
  \widehat{\bm u}_q^{(k)}-d_q(\beta_q)\bm 1-X^{(k)}\beta_q
  =
  M\left(\widehat{\bm u}_q^{(k)}-X^{(k)}\beta_q\right).
\]
The $\ell_1$ term does not involve the intercept. Substitution therefore yields \eqref{eq:centered_auxiliary}, and the minimizing coefficient vector is unchanged.
\end{proof}

\begin{proof}[Proof of Proposition~\ref{prop:stopping_condition}]
If $p_k=0$, no candidate remains, $\widetilde{\cS}^{(k)}(\lambda)=\emptyset$, and the left-hand side of \eqref{eq:stopping_score} is zero by convention. The result therefore holds. Suppose $p_k>0$. By Lemma~\ref{lem:centering}, the estimated coefficients on the regressors are unchanged when the dependent variable and regressors are centered. For equation $q$, the auxiliary optimization is therefore equivalent to
\[
  \min_{\beta_q}
  \left\{
  \frac{1}{2T_{\cL}}
  \left\|
    M\widehat{\bm u}_q^{(k)}
    -
    MX^{(k)}\beta_q
  \right\|_2^2
  +
  \lambda\|\beta_q\|_1
  \right\}.
\]
The objective is convex. The vector $\beta_q=0$ is a minimizer if and only if zero belongs to its subdifferential at $\beta_q=0$. This condition is
\[
  0
  \in
  -\frac{1}{T_{\cL}}(X^{(k)})'M\widehat{\bm u}_q^{(k)}
  +
  \lambda\,\partial\|\beta_q\|_1\big|_{\beta_q=0}.
\]
Since
\[
  \partial\|\beta_q\|_1\big|_{\beta_q=0}
  =
  \left\{
    z:\|z\|_\infty\leq 1
  \right\},
\]
the zero vector is a minimizer if and only if
\[
  \left\|
    \frac{1}{T_{\cL}}(X^{(k)})'M\widehat{\bm u}_q^{(k)}
  \right\|_\infty
  \leq
  \lambda.
\]
Under Assumption~\ref{ass:well_defined}, the auxiliary minimizer is unique. Hence, the condition above is equivalent to $\widehat\beta_q^{(k)}=0$. By \eqref{eq:selected_set}, no remaining candidate is admitted if and only if $\widehat\beta_q^{(k)}=0$ for every $q \in \cQ$. Therefore,
\[
  \widetilde{\cS}^{(k)}(\lambda)=\emptyset
\]
if and only if
\[
  \max_{q \in \cQ}
  \left\|
    \frac{1}{T_{\cL}}(X^{(k)})'M\widehat{\bm u}_q^{(k)}
  \right\|_\infty
  \leq
  \lambda.
\]
\end{proof}   
   
\begin{proof}[Proof of Proposition~\ref{prop:unit_invariance}]
For a transformed remaining candidate $y_{i,t}^{\mathrm{new}}=\alpha_i+d_i y_{i,t}$, its mean and standard deviation satisfy
\[
  \bar y_i^{\mathrm{new}}
  =
  \alpha_i+d_i\bar y_i,
  \qquad
  \widehat\sigma_i^{\mathrm{new}}
  =
  |d_i|\widehat\sigma_i.
\]
Its standardized value is therefore
\[
  \frac{y_{i,t}^{\mathrm{new}}-\bar y_i^{\mathrm{new}}}{\widehat\sigma_i^{\mathrm{new}}}
  =
  \sgn(d_i)y_{i,t}^{\mathrm{std}}.
\]
Consequently, the new standardized regressor matrix has the form $X_{\mathrm{new}}^{(k)}=X^{(k)}H$, where $H$ is diagonal, every diagonal entry equals $1$ or $-1$, and the same sign is repeated for the contemporaneous value and lags of a given variable. The matrix satisfies $H^{-1}=H$ and preserves the $\ell_1$ norm. For any coefficient vector $\beta_q$, define $\beta_{q,\mathrm{new}}=H\beta_q$. Then
\[
  X_{\mathrm{new}}^{(k)}\beta_{q,\mathrm{new}}
  =
  X^{(k)}HH\beta_q
  =
  X^{(k)}\beta_q,
  \qquad
  \|\beta_{q,\mathrm{new}}\|_1
  =
  \|\beta_q\|_1.
\]
The original and transformed auxiliary criteria consequently have the same values under this one to one mapping. Uniqueness in Assumption~\ref{ass:well_defined} implies that their minimizers are related by $\widehat\beta_{q,\mathrm{new}}=H\widehat\beta_q$. Multiplication by $H$ changes only coefficient signs. Every coefficient block is zero before the transformation if and only if the corresponding block is zero after the transformation. Equation~\eqref{eq:selected_set} therefore gives the same selected variable indices.
\end{proof}

\begin{proof}[Proof of Proposition~\ref{prop:update_stability}]
Assumption~\ref{ass:well_defined} ensures that the argmax set in \eqref{eq:mode} is nonempty and that the auxiliary minimizers $\widehat\beta_q^{(k)}$ are unique for every $q\in\cQ$. The prespecified deterministic rule selects one posterior mode $\widehat\theta^{(k)}$ and hence one fitted composite disturbance vector. Each nonzero coefficient index set $\cA_q^{(k)}$ is therefore uniquely determined before the perturbation. Assumption~\ref{ass:selection_margins} provides the conditions under which these index sets are preserved under sufficiently small changes in the fitted composite disturbances.

If $p_k=0$, no regressor remains in the auxiliary optimization. Every nonzero coefficient index set and the selected variable set are empty, and the next index set is unchanged under any perturbation. The conclusion holds for any $\epsilon_k>0$. Suppose $p_k>0$.

Fix an equation $q$ and write
\[
  A=\cA_q^{(k)}.
\]
First suppose that $A$ is nonempty. Let
\[
  s_A=\sgn(\widehat\beta_{q,A}^{(k)})
\]
and define
\[
  Q_A
  =
  \frac{1}{T_{\cL}}
  (\widetilde X_A^{(k)})'
  \widetilde X_A^{(k)}.
\]
Assumption~\ref{ass:selection_margins} makes $Q_A$ positive definite and therefore invertible. The first order condition for the original auxiliary problem on the nonzero coordinates is
\[
  \frac{1}{T_{\cL}}
  (\widetilde X_A^{(k)})'
  \left(
    \widetilde{\bm u}_q^{(k)}
    -
    \widetilde X_A^{(k)}
    \widehat\beta_{q,A}^{(k)}
  \right)
  =
  \lambda s_A.
\]

Replace $\widehat{\bm u}_q^{(k)}$ by $\widehat{\bm u}_q^{(k)}+h_q$. After centering, the dependent variable becomes $\widetilde{\bm u}_q^{(k)}+Mh_q$. Holding the coefficient support equal to $A$ and the sign vector equal to $s_A$, the first order condition for the perturbed problem is
\[
  \frac{1}{T_{\cL}}
  (\widetilde X_A^{(k)})'
  \left(
    \widetilde{\bm u}_q^{(k)}
    +Mh_q
    -
    \widetilde X_A^{(k)}\beta_A(h_q)
  \right)
  =
  \lambda s_A.
\]
Subtracting the original first order condition and using the definition of $Q_A$ gives
\begin{equation}
\label{eq:restricted_perturbation}
  \beta_A(h_q)
  =
  \widehat\beta_{q,A}^{(k)}
  +
  Q_A^{-1}
  \frac{1}{T_{\cL}}
  (\widetilde X_A^{(k)})'Mh_q.
\end{equation}

The mapping $h_q\mapsto\beta_A(h_q)$ is linear and continuous, and
\[
  \beta_A(0)=\widehat\beta_{q,A}^{(k)}.
\]
Assumption~\ref{ass:selection_margins} gives
\[
  \min_{j\in A}
  \left|
    \widehat\beta_{q,j}^{(k)}
  \right|
  \geq b_q>0.
\]
Consequently, there exists $\epsilon_{q,1}>0$ such that
\[
  \|h_q\|_2<\epsilon_{q,1}
\]
implies
\[
  \sgn\bigl(\beta_A(h_q)\bigr)=s_A.
\]
Thus, every coefficient indexed by $A$ remains nonzero and retains its original sign.

Define the residual associated with \eqref{eq:restricted_perturbation} by
\[
  \bm v_q(h_q)
  =
  \widetilde{\bm u}_q^{(k)}
  +Mh_q
  -
  \widetilde X_A^{(k)}\beta_A(h_q).
\]
If $A^c$ is nonempty, define for each index $j\notin A$ the dual quantity
\[
  d_j(h_q)
  =
  \frac{1}{T_{\cL}}
  (\widetilde X_j^{(k)})'\bm v_q(h_q).
\]
Equation~\eqref{eq:restricted_perturbation} implies that each mapping $h_q\mapsto d_j(h_q)$ is continuous. Assumption~\ref{ass:selection_margins} gives the strict inactive-coordinate bound
\[
  |d_j(0)|
  \leq
  \lambda-\delta_q
  \qquad
  \text{for every }j\notin A,
\]
where $\delta_q>0$. Because there are finitely many inactive coordinates, there exists $\epsilon_{q,2}>0$ such that
\[
  \|h_q\|_2<\epsilon_{q,2}
\]
implies
\[
  |d_j(h_q)|<\lambda
  \qquad
  \text{for every }j\notin A.
\]
If $A^c$ is empty, set $\epsilon_{q,2}=+\infty$.

Let
\[
  \epsilon_q^{A}
  =
  \min\{\epsilon_{q,1},\epsilon_{q,2}\}.
\]
When $\|h_q\|_2<\epsilon_q^{A}$, the coefficient vector whose entries equal $\beta_A(h_q)$ on $A$ and zero outside $A$ satisfies the complete optimality conditions for the perturbed auxiliary problem. On $A$, the first order condition holds with sign vector $s_A$. Outside $A$, the strict subgradient inequalities hold.

It remains to establish that this perturbed minimizer is unique. Let $\beta(h_q)$ denote the coefficient vector just constructed, and define
\[
  z(h_q)
  =
  \frac{1}{\lambda T_{\cL}}
  (\widetilde X^{(k)})'\bm v_q(h_q).
\]
The optimality conditions give
\[
  z_A(h_q)=s_A.
\]
When $A^c$ is nonempty, they also give
\[
  \|z_{A^c}(h_q)\|_\infty<1.
\]
For any coefficient vector $\beta$, convexity of the squared-error term and the subgradient inequality for the $\ell_1$ norm imply
\begin{align*}
  \cL_q(\beta;h_q)-\cL_q(\beta(h_q);h_q)
  &=
  \frac{1}{2T_{\cL}}
  \left\|
    \widetilde X^{(k)}
    \bigl(\beta-\beta(h_q)\bigr)
  \right\|_2^2\\
  &\quad
  +
  \lambda
  \left[
    \|\beta\|_1
    -
    \|\beta(h_q)\|_1
    -
    z(h_q)'
    \bigl(\beta-\beta(h_q)\bigr)
  \right],
\end{align*}
where $\cL_q(\cdot;h_q)$ denotes the perturbed auxiliary criterion for equation $q$. Both terms on the right-hand side are nonnegative. When $A^c$ is nonempty, the strict inequality $\|z_{A^c}(h_q)\|_\infty<1$ implies that equality can hold only if
\[
  \beta_{A^c}=0.
\]
When $A^c$ is empty, this restriction holds trivially. Conditional on this restriction, equality in the first term requires
\[
  \widetilde X_A^{(k)}
  \bigl(\beta_A-\beta_A(h_q)\bigr)=0.
\]
Since $Q_A$ is positive definite, $\widetilde X_A^{(k)}$ has full column rank. Hence,
\[
  \beta_A=\beta_A(h_q).
\]
The perturbed minimizer is therefore unique and has the same nonzero coefficient index set $A$.

Now suppose that $\cA_q^{(k)}$ is empty. In this case, the original minimizer is the zero vector. Define the perturbed score
\[
  g_q(h_q)
  =
  \frac{1}{T_{\cL}}
  (\widetilde X^{(k)})'
  \left(
    \widetilde{\bm u}_q^{(k)}+Mh_q
  \right).
\]
Assumption~\ref{ass:selection_margins} gives
\[
  \|g_q(0)\|_\infty
  \leq
  \lambda-\delta_q
\]
for some $\delta_q>0$. The mapping $h_q\mapsto g_q(h_q)$ is continuous. There therefore exists $\epsilon_q^{0}>0$ such that
\[
  \|h_q\|_2<\epsilon_q^{0}
\]
implies
\[
  \|g_q(h_q)\|_\infty<\lambda.
\]

For any nonzero coefficient vector $\beta$, the difference between the perturbed criterion at $\beta$ and at zero is
\begin{align*}
  \cL_q(\beta;h_q)-\cL_q(0;h_q)
  &=
  \frac{1}{2T_{\cL}}
  \left\|
    \widetilde X^{(k)}\beta
  \right\|_2^2
  -
  g_q(h_q)'\beta
  +
  \lambda\|\beta\|_1\\
  &\geq
  \frac{1}{2T_{\cL}}
  \left\|
    \widetilde X^{(k)}\beta
  \right\|_2^2
  +
  \left(
    \lambda-\|g_q(h_q)\|_\infty
  \right)\|\beta\|_1\\
  &>
  0.
\end{align*}
The zero vector therefore remains the unique minimizer, and its nonzero coefficient index set remains empty.

For each equation $q$, let $\epsilon_q$ equal $\epsilon_q^{A}$ when $\cA_q^{(k)}$ is nonempty and $\epsilon_q^{0}$ when $\cA_q^{(k)}$ is empty. Each $\epsilon_q$ is strictly positive. Since the current system contains finitely many equations, define
\[
  \epsilon_k
  =
  \min_{q \in \cQ}\epsilon_q
  >
  0.
\]
If $\|h_q\|_2<\epsilon_k$ for every equation $q$, each nonzero coefficient index set $\cA_q^{(k)}$ remains unchanged. The coefficient blocks in \eqref{eq:coef_block} are therefore unchanged in their zero and nonzero pattern. It follows from \eqref{eq:selected_set} that $\widetilde{\cS}^{(k)}(\lambda)$ is unchanged, and \eqref{eq:update} then implies that $\cI^{(k+1)}$ is unchanged.
\end{proof}   
   
\begin{proof}[Proof of Corollary~\ref{cor:path_stability}]
Let $\{\cI^{(k)}\}_{k=0}^{\bar k}$ denote the original construction path, and let $\{\cI_h^{(k)}\}$ denote the path generated by the perturbed fitted composite disturbances. Both procedures begin from the same core index set. Hence,
\[
  \cI_h^{(0)}=\cI^{(0)}.
\]

Suppose for some $k\leq\bar k$ that
\[
  \cI_h^{(k)}=\cI^{(k)}.
\]
The current system and the set of remaining candidates are then the same along the two paths. Because
\[
  \|h_q^{(k)}\|_2<\epsilon_k
  \qquad
  \text{for every }q \in \cQ,
\]
Proposition~\ref{prop:update_stability} implies
\[
  \widetilde{\cS}_h^{(k)}(\lambda)
  =
  \widetilde{\cS}^{(k)}(\lambda).
\]
Applying the update rule \eqref{eq:update} therefore gives
\[
  \cI_h^{(k+1)}
  =
  \cI^{(k+1)}
\]
whenever the original procedure continues beyond iteration $k$. By induction, the perturbed and original index sets coincide at every iteration through $\bar k$.

It remains to show that the terminal iteration is also unchanged. For every $k<\bar k$, the definition of $\bar k$ gives
\[
  \widetilde{\cS}^{(k)}(\lambda)\neq\emptyset.
\]
Since the perturbed selected set is identical, the perturbed procedure cannot terminate before $\bar k$. At iteration $\bar k$,
\[
  \widetilde{\cS}^{(\bar k)}(\lambda)=\emptyset.
\]
Proposition~\ref{prop:update_stability} gives
\[
  \widetilde{\cS}_h^{(\bar k)}(\lambda)=\emptyset,
\]
Thus, the perturbed procedure also terminates at $\bar k$. Proposition~\ref{prop:convergence} ensures that $\bar k$ is finite. The complete construction path and its terminal index set are therefore unchanged. Hence,
\[
  y_{t,h}^\dagger(\lambda)
  =
  y_t^\dagger(\lambda).
\]
\end{proof}

\begin{proof}[Proof of Lemma~\ref{lem:perloss}]
Fix the loss $g$. By Assumption~\ref{ass:oos_regularity}(iii), for every $\lambda\in\Lambda$ the posterior mean loss
\[
  \overline{\cL}^{(T)}_{g}(\lambda)
  =
  \frac{1}{S}\sum_{s=1}^{S}\cL^{(s,T)}_{g}(\lambda)
\]
is a finite average of finite numbers and hence a finite real number. The image $\{\overline{\cL}^{(T)}_{g}(\lambda):\lambda\in\Lambda\}$ is a finite subset of $\mathbb{R}$ and therefore attains a minimum $m_{g}$. The minimizer set
\[
  M_{g}
  =
  \left\{
    \lambda\in\Lambda:
    \overline{\cL}^{(T)}_{g}(\lambda)=m_{g}
  \right\}
\]
is nonempty and finite. By the strict total order $\prec$ in Assumption~\ref{ass:oos_regularity}(i), $M_{g}$ has a unique $\prec$-least element. Set
\[
  \lambda^{\ast}_{g}
  =
  \min_{\prec}M_{g},
\]
as in \eqref{eq:benchmark} applied under loss $g$. This proves existence and uniqueness of the reference model complexity parameter. Assumption~\ref{ass:oos_regularity}(ii) then ensures that the associated reference terminal system is also unique.

Given $\lambda^{\ast}_{g}$, for every $\lambda\in\Lambda\setminus\{\lambda^{\ast}_{g}\}$ the paired differences in \eqref{eq:loss_diff} are
\[
  \Delta^{(s,T)}_{g}(\lambda)
  =
  \cL^{(s,T)}_{g}(\lambda)
  -
  \cL^{(s,T)}_{g}(\lambda^{\ast}_{g}),
  \qquad
  s=1,\ldots,S.
\]
The common index $s$ labels a pair consisting of one posterior draw from each terminal system. By Assumption~\ref{ass:oos_regularity}(iii), both loss values in each pair are finite. Hence, $\Delta^{(s,T)}_{g}(\lambda)$ is finite for every $\lambda\in\Lambda\setminus\{\lambda^{\ast}_{g}\}$ and every $s=1,\ldots,S$.

Because $\alpha\in(0,1)$,
\[
  \frac{1-\alpha}{2}\in\left(0,\frac{1}{2}\right),
  \qquad
  \frac{1+\alpha}{2}\in\left(\frac{1}{2},1\right).
\]
By Assumption~\ref{ass:oos_regularity}(iv), the endpoints $q^{g}_{(1-\alpha)/2}(\lambda)$ and $q^{g}_{(1+\alpha)/2}(\lambda)$ are uniquely determined for every $\lambda\in\Lambda\setminus\{\lambda^{\ast}_{g}\}$. Consequently, each of the conditions
\[
  q^{g}_{(1+\alpha)/2}(\lambda)<0
  \qquad\text{and}\qquad
  q^{g}_{(1-\alpha)/2}(\lambda)
  \leq 0
  \leq
  q^{g}_{(1+\alpha)/2}(\lambda)
\]
is either true or false for every such $\lambda$, with no indeterminacy. The sets
\[
  \Lambda_{\alpha}^{g,-}
  =
  \left\{
    \lambda\in\Lambda\setminus\{\lambda^{\ast}_{g}\}:
    q^{g}_{(1+\alpha)/2}(\lambda)<0
  \right\}
\]
and
\[
  \Lambda_{\alpha}^{g,0}
  =
  \{\lambda^{\ast}_{g}\}
  \cup
  \left\{
    \lambda\in\Lambda\setminus\{\lambda^{\ast}_{g}\}:
    q^{g}_{(1-\alpha)/2}(\lambda)
    \leq 0
    \leq
    q^{g}_{(1+\alpha)/2}(\lambda)
  \right\}
\]
are therefore uniquely determined subsets of $\Lambda$.

The explicit inclusion of $\lambda^{\ast}_{g}$ in $\Lambda_{\alpha}^{g,0}$ reflects that the OOS forecast loss difference between the reference system and itself is identically zero. No empirical quantile comparison is required for the reference system. Hence, $\lambda^{\ast}_{g}\in\Lambda_{\alpha}^{g,0}$ and $\Lambda_{\alpha}^{g,0}\neq\emptyset$.

Finally, by \eqref{eq:equiv_set} applied under loss $g$,
\[
  \Lambda_{\alpha}^{g}
  =
  \begin{cases}
    \Lambda_{\alpha}^{g,-}, & \Lambda_{\alpha}^{g,-}\neq\emptyset,\\[3pt]
    \Lambda_{\alpha}^{g,0}, & \Lambda_{\alpha}^{g,-}=\emptyset.
  \end{cases}
\]
In the first case, $\Lambda_{\alpha}^{g}=\Lambda_{\alpha}^{g,-}$ is nonempty by the case condition. In the second case, $\Lambda_{\alpha}^{g}=\Lambda_{\alpha}^{g,0}$ is nonempty because it contains $\lambda^{\ast}_{g}$. Both branches are uniquely determined by the sets constructed above. Therefore, $\Lambda_{\alpha}^{g}$ exists, is unique, and is nonempty.
\end{proof}

\begin{proof}[Proof of Proposition~\ref{prop:welldefined}]
Apply Lemma~\ref{lem:perloss} first with $g=\mathrm{SE}$ and then with $g=\mathrm{AE}$. The retained sets $\Lambda_{\alpha}^{\mathrm{SE}}$ and $\Lambda_{\alpha}^{\mathrm{AE}}$ therefore exist, are unique, and are nonempty. Their union
\[
  \Lambda_{\alpha}^{\cup}
  =
  \Lambda_{\alpha}^{\mathrm{SE}}
  \cup
  \Lambda_{\alpha}^{\mathrm{AE}}
\]
is consequently a uniquely determined subset of $\Lambda$. It is nonempty because it contains the nonempty sets $\Lambda_{\alpha}^{\mathrm{SE}}$ and $\Lambda_{\alpha}^{\mathrm{AE}}$. It is finite because it is a subset of the finite grid $\Lambda$. Hence, the combined retained set $\Lambda_{\alpha}^{\cup}$ exists, is unique, is nonempty, and is finite.

By Assumption~\ref{ass:oos_regularity}(ii), the terminal dimension $n^\dagger(\lambda)$ is well defined for every $\lambda\in\Lambda$. It is therefore well defined for every $\lambda\in\Lambda_{\alpha}^{\cup}$. Because $\Lambda_{\alpha}^{\cup}$ is finite and nonempty, the image
\[
  \left\{
    n^\dagger(\lambda):
    \lambda\in\Lambda_{\alpha}^{\cup}
  \right\}
\]
is a finite nonempty set of nonnegative integers. Every finite nonempty set of integers has a unique largest value. Thus,
\[
  n_{\alpha,\cup}^{\max}
  =
  \max_{\lambda\in\Lambda_{\alpha}^{\cup}}
  n^\dagger(\lambda)
\]
exists and is unique.

Given the uniquely determined value $n_{\alpha,\cup}^{\max}$, define
\[
  \Lambda_{\alpha,\cup}^{\max}
  =
  \left\{
    \lambda\in\Lambda_{\alpha}^{\cup}:
    n^\dagger(\lambda)=n_{\alpha,\cup}^{\max}
  \right\}.
\]
This set is uniquely determined because both $\Lambda_{\alpha}^{\cup}$ and the function $\lambda\mapsto n^\dagger(\lambda)$ are uniquely determined. It is nonempty because the maximum $n_{\alpha,\cup}^{\max}$ is attained by at least one member of the finite nonempty set $\Lambda_{\alpha}^{\cup}$. It is finite because it is a subset of the finite set $\Lambda_{\alpha}^{\cup}$. Hence, $\Lambda_{\alpha,\cup}^{\max}$ exists, is unique, is nonempty, and is finite.

The strict total order $\prec$ on $\Lambda$ in Assumption~\ref{ass:oos_regularity}(i) restricts to a strict total order on the finite nonempty subset $\Lambda_{\alpha,\cup}^{\max}$. Every finite nonempty set endowed with a strict total order has a unique least element. Therefore,
\[
  \widehat\lambda_{\alpha}
  =
  \min_{\prec}\Lambda_{\alpha,\cup}^{\max}
\]
exists and is unique.

Finally, Assumption~\ref{ass:oos_regularity}(ii) states that the terminal variable vector $y_t^\dagger(\lambda)$ is unique for every $\lambda\in\Lambda$. Since $\widehat\lambda_{\alpha}\in\Lambda_{\alpha,\cup}^{\max}\subseteq\Lambda$ exists and is unique, the terminal system $y_t^\dagger(\widehat\lambda_{\alpha})$ also exists and is unique. Thus, the selected terminal system is uniquely determined by the data and the prespecified procedure.
\end{proof}

\section{Proxy Identification and the Joint Bayesian Method}\label{sec:mp_appendix}

This appendix proves the results that Section~\ref{sec:mp} defers to it. It proves Proposition~\ref{prop:mp_anchor} and Theorem~\ref{thm:mp_identification}, derives the two-stage coefficients in \eqref{eq:mp_gk_ratio} and proves Proposition~\ref{prop:mp_gk_equivalence}, shows the rotational invariance of the remaining equations in posterior computation, and states the efficiency comparison under correct parametric specification. 

\begin{proof}[Proof of Proposition~\ref{prop:mp_anchor}]
By \eqref{eq:mp_reduced_form},
\[
  v_t
  =
  A_0^{-1\,\prime}\eps_t.
\]
Therefore,
\[
  G
  \equiv
  \mathbb E(v_tm_t')
  =
  A_0^{-1\,\prime}\mathbb E(\eps_tm_t').
\]

Assumption~\ref{ass:mp_proxy} restricts the rows of $\mathbb E(\eps_tm_t')$. Its $j$th row is
\[
  \mathbb E(\eps_{jt}m_t')
  =
  \bigl(\mathbb E(m_t\eps_{jt})\bigr)'.
\]
By \eqref{eq:mp_proxy_moments}, the first row equals $\gamma_m'$, while every remaining row equals $0_{1\times n_m}$. Stacking these rows gives
\[
  \mathbb E(\eps_tm_t')
  =
  \begin{pmatrix}
    \gamma_m'\\
    0_{1\times n_m}\\
    \vdots\\
    0_{1\times n_m}
  \end{pmatrix}
  =
  e_1\gamma_m'.
\]
This result holds for every $n_m\geq1$, because Assumption~\ref{ass:mp_proxy} restricts which shocks may be correlated with the instruments, not how many instruments there are.

Using the definition of $s$ in \eqref{eq:mp_a1_s},
\[
  G
  =
  A_0^{-1\,\prime}e_1\gamma_m'
  =
  s\gamma_m'.
\]
Recall that
\[
  G'
  =
  [g_1\ \cdots\ g_n].
\]
The $i$th row of $G=s\gamma_m'$ is $s_i\gamma_m'$. Hence,
\[
  g_i
  =
  \gamma_ms_i,
  \qquad
  i=1,\ldots,n.
\]

Let $p$ be an admissible anchor. By definition,
\[
  g_p
  \neq
  0_{n_m\times1}.
\]
Since $g_p=\gamma_ms_p$ and $\gamma_m\neq0_{n_m\times1}$ under Assumption~\ref{ass:mp_proxy}, admissibility implies
\[
  s_p
  \neq
  0.
\]

Let
\[
  W
  =
  \mathbb E(m_tm_t')^{-1}.
\]
Assumption~\ref{ass:mp_proxy} states that $\mathbb E(m_tm_t')$ is positive definite. Hence, $W$ is positive definite and
\[
  \gamma_m'W\gamma_m
  >
  0.
\]

By \eqref{eq:mp_gk_ratio}, the two-stage coefficient for variable $i$ using anchor $p$ is
\[
  \beta_{i\mid p}
  =
  \frac{g_p'Wg_i}{g_p'Wg_p}.
\]
Substituting $g_i=\gamma_ms_i$ and $g_p=\gamma_ms_p$ gives
\[
  \beta_{i\mid p}
  =
  \frac{s_ps_i\gamma_m'W\gamma_m}
       {s_p^2\gamma_m'W\gamma_m}
  =
  \frac{s_i}{s_p}.
\]
In particular,
\[
  \beta_{p\mid p}
  =
  1.
\]
Collecting these coefficients gives
\[
  d_p
  =
  \begin{pmatrix}
    \beta_{1\mid p}\\
    \vdots\\
    \beta_{n\mid p}
  \end{pmatrix}
  =
  \frac{1}{s_p}
  \begin{pmatrix}
    s_1\\
    \vdots\\
    s_n
  \end{pmatrix}
  =
  \frac{s}{s_p}.
\]

Let $r$ be another admissible anchor. The same argument gives
\[
  d_r
  =
  \frac{s}{s_r}.
\]
Because $s_p\neq0$ and $s_r\neq0$,
\[
  d_r
  =
  \frac{s}{s_r}
  =
  \frac{s_p}{s_r}\frac{s}{s_p}
  =
  \frac{s_p}{s_r}d_p.
\]
Thus, ratio vectors obtained from different admissible anchors are proportional. The anchor determines which component of the ratio vector is normalized to one.

It remains to show that the scale normalization removes this anchor-specific representation. By \eqref{eq:mp_reduced_form},
\[
  \Sigma
  =
  A_0^{-1\,\prime}A_0^{-1},
\]
and hence
\[
  \Sigma^{-1}
  =
  A_0A_0'.
\]
Using $s=A_0^{-1\,\prime}e_1$ from \eqref{eq:mp_a1_s},
\[
  s'\Sigma^{-1}s
  =
  e_1'A_0^{-1}A_0A_0'A_0^{-1\,\prime}e_1
  =
  e_1'e_1
  =
  1.
\]
Since $d_p=s/s_p$,
\[
  d_p'\Sigma^{-1}d_p
  =
  \frac{1}{s_p^2}s'\Sigma^{-1}s
  =
  \frac{1}{s_p^2}.
\]
Therefore,
\[
  \sqrt{d_p'\Sigma^{-1}d_p}
  =
  \frac{1}{|s_p|}.
\]
The unit-variance normalization in \eqref{eq:mp_gk_scale} then gives
\[
  \frac{d_p}{\sqrt{d_p'\Sigma^{-1}d_p}}
  =
  \frac{s/s_p}{1/|s_p|}
  =
  \frac{|s_p|}{s_p}s
  =
  \sgn(s_p)s.
\]
Thus, every admissible anchor recovers the same impact vector up to sign. Applying the same economic sign normalization selects the common orientation $s$ for every anchor.

The matrices $\Psi_h$ in \eqref{eq:mp_Psi} are generated by the reduced-form matrices $B_\ell=A_\ell A_0^{-1}$ in \eqref{eq:mp_B}, which are identified from the reduced form without reference to any anchor. The impulse responses therefore satisfy
\[
  \mathrm{IRF}_1(h)
  =
  s'\Psi_h.
\]
Because every admissible anchor recovers the same sign-normalized impact vector $s$, every admissible anchor produces the same impulse responses at every horizon. Moreover, $a_1=\Sigma^{-1}s$ by \eqref{eq:mp_a1_s}, so every admissible anchor also recovers the same first equation.
\end{proof}

\begin{proof}[Proof of Theorem~\ref{thm:mp_identification}]
Transposing \eqref{eq:mp_reduced_form} gives
\[
  v_t
  =
  A_0^{-1\prime}\eps_t.
\]
The maintained exclusion restrictions imply
\[
  \mathbb E(\eps_tm_t')
  =
  e_1\gamma_m'.
\]
Therefore,
\[
  G
  =
  \mathbb E(v_tm_t')
  =
  A_0^{-1\prime}\mathbb E(\eps_tm_t')
  =
  A_0^{-1\prime}e_1\gamma_m'
  =
  s\gamma_m'.
\]
Since
\[
  \Sigma^{-1}
  =
  A_0A_0'
  \qquad\text{and}\qquad
  s
  =
  A_0^{-1\prime}e_1,
\]
we have
\[
  \Sigma^{-1}s
  =
  A_0A_0'A_0^{-1\prime}e_1
  =
  A_0e_1
  =
  a_1.
\]
Hence,
\[
  \Sigma^{-1}G
  =
  a_1\gamma_m'.
\]
Because $A_0$ is invertible, both $a_1$ and $s$ are nonzero. It follows that
\[
  \gamma_m\neq0_{n_m\times1}
  \quad\Longleftrightarrow\quad
  G\neq0_{n\times n_m}
  \quad\Longleftrightarrow\quad
  \rank(G)=1.
\]
Since $\Sigma^{-1}$ is invertible,
\[
  \rank(\Sigma^{-1}G)
  =
  \rank(G).
\]
Thus, the relevance condition is equivalent to the two rank conditions stated in the theorem.

Suppose first that $\gamma_m\neq0_{n_m\times1}$. Let $q\in\Real^{n_m}$ satisfy $Gq\neq0_{n\times1}$. From \eqref{eq:mp_rank_one},
\[
  Gq
  =
  (\gamma_m'q)s,
  \qquad
  \Sigma^{-1}Gq
  =
  (\gamma_m'q)a_1.
\]
Since $s\neq0_{n\times1}$, the condition $Gq\neq0_{n\times1}$ implies $\gamma_m'q\neq0$. The unit-variance normalization of the first shock gives
\[
  a_1'\Sigma a_1
  =
  e_1'A_0'A_0^{-1\prime}A_0^{-1}A_0e_1
  =
  e_1'e_1
  =
  1.
\]
Therefore,
\[
  (\Sigma^{-1}Gq)'\Sigma(\Sigma^{-1}Gq)
  =
  (\gamma_m'q)^2,
\]
and
\[
  \frac{\Sigma^{-1}Gq}
  {\sqrt{(\Sigma^{-1}Gq)'\Sigma(\Sigma^{-1}Gq)}}
  =
  \sgn(\gamma_m'q)a_1.
\]
This proves the expression for $a_1$ in \eqref{eq:mp_identified_objects}. Moreover,
\[
  \Sigma a_1
  =
  A_0^{-1\prime}A_0^{-1}A_0e_1
  =
  A_0^{-1\prime}e_1
  =
  s,
\]
which implies
\[
  s'\Sigma^{-1}s
  =
  a_1'\Sigma a_1
  =
  1.
\]
Using $Gq=(\gamma_m'q)s$ gives
\[
  (Gq)'\Sigma^{-1}(Gq)
  =
  (\gamma_m'q)^2,
\]
and hence
\[
  \frac{Gq}
  {\sqrt{(Gq)'\Sigma^{-1}(Gq)}}
  =
  \sgn(\gamma_m'q)s.
\]
This proves the expression for $s$ in \eqref{eq:mp_identified_objects}.

Let
\[
  B_\ell
  =
  A_\ell A_0^{-1},
  \qquad
  \ell=1,\ldots,\cL,
\]
and let
\[
  \widetilde c'
  =
  c'A_0^{-1}.
\]
Since $a_1=A_0e_1$,
\[
  a_{\ell,1}
  =
  A_\ell e_1
  =
  B_\ell a_1,
  \qquad
  \ell=1,\ldots,\cL,
\]
and
\[
  c_1
  =
  c'e_1
  =
  \widetilde c'a_1.
\]
Thus, identification of $a_1$ identifies all contemporaneous, lagged, and constant coefficients in the first equation under the same sign normalization.

The factorization $G=s\gamma_m'$ and the normalization $s'\Sigma^{-1}s=1$ imply
\[
  \gamma_m'
  =
  s'\Sigma^{-1}G.
\]
Hence,
\[
  \delta_m
  =
  \Omega_m^{-1}\gamma_m
\]
is also identified under the common sign normalization. The impulse responses to the first shock satisfy
\[
  \mathrm{IRF}_1(h)
  =
  s'\Psi_h,
  \qquad
  h\geq0.
\]
Since $s$ and the reduced-form dynamic coefficients determining $\Psi_h$ are identified, the impulse responses to the first shock are identified at every horizon. This proves sufficiency.

It remains to prove necessity. Suppose that
\[
  \gamma_m
  =
  0_{n_m\times1}.
\]
Then
\[
  \mathbb E(\eps_tm_t')
  =
  0_{n\times n_m}
  \qquad\text{and}\qquad
  G
  =
  0_{n\times n_m}.
\]
Because $n\geq2$, there exists an orthogonal $n\times n$ matrix $P$ such that $Pe_1\neq e_1$ and $Pe_1\neq-e_1$. Define
\[
  A_\ell^*
  =
  A_\ell P,
  \qquad
  \ell=0,\ldots,\cL,
\]
\[
  c^*
  =
  P'c,
  \qquad
  \eps_t^*
  =
  P'\eps_t.
\]
Orthogonality gives
\[
  \mathbb E(\eps_t^*\eps_t^{*\prime})
  =
  I_n,
\]
and
\[
  \mathbb E(\eps_t^*m_t')
  =
  P'\mathbb E(\eps_tm_t')
  =
  0_{n\times n_m}.
\]
Thus, the transformed representation satisfies the same proxy restrictions. Its reduced-form coefficients are unchanged because
\[
  A_\ell^*(A_0^*)^{-1}
  =
  A_\ell PP'A_0^{-1}
  =
  A_\ell A_0^{-1},
  \qquad
  \ell=1,\ldots,\cL,
\]
and
\[
  c^{*\prime}(A_0^*)^{-1}
  =
  c'PP'A_0^{-1}
  =
  c'A_0^{-1}.
\]
The reduced-form innovation is also unchanged:
\[
  \eps_t^{*\prime}(A_0^*)^{-1}
  =
  \eps_t'PP'A_0^{-1}
  =
  \eps_t'A_0^{-1}.
\]
Hence, the original and transformed systems imply the same distribution for the observed variables and instruments. Their first contemporaneous coefficient vectors nevertheless differ:
\[
  a_1^*
  =
  A_0^*e_1
  =
  A_0Pe_1.
\]
Because $A_0$ is invertible and $Pe_1\neq\pm e_1$,
\[
  a_1^*
  \neq
  \pm a_1.
\]
Thus, when $\gamma_m=0_{n_m\times1}$, two observationally equivalent parameterizations satisfy the same proxy restrictions but have first equations that differ by more than a common sign. The first equation is therefore not identified. This proves necessity and establishes the if-and-only-if result.

Let $Q$ be any orthogonal $(n-1)\times(n-1)$ matrix and define
\[
  H
  =
  \diag(1,Q).
\]
Postmultiplying $A_\ell$ by $H$ for $\ell=0,\ldots,\cL$, replacing $c$ by $H'c$, and replacing $\eps_t$ by $H'\eps_t$ leaves the reduced-form unchanged. Since $He_1=e_1$, this transformation also leaves the first equation, the first shock, the proxy restrictions, and the impulse responses to the first shock unchanged, while it generally changes the remaining equations. The remaining equations are therefore not identified without additional restrictions and are not needed for the identified impulse responses.

Last, Proposition~\ref{prop:mp_normalization} shows that orthogonal transformations of columns $2,\ldots,n$ leave the joint covariance restriction, the first equation, and the impulse responses to the first shock unchanged. Thus, the remaining equations are not identified and are not needed for these impulse responses.
\end{proof}

\runinhead{The two-stage coefficients} The main text states the two-stage coefficient in \eqref{eq:mp_gk_ratio} without derivation. We derive it here from population moments, using $G$, $g_i$, and $W=\Omega_m^{-1}$ as defined in Section~\ref{subsec:mp_proxy}. Fix an admissible anchor $p$. The linear projection of the anchor innovation on the instrument vector is
\[
  v_{p,t}
  =
  m_t'\pi_p+\zeta_{p,t},
  \qquad
  \pi_p
  =
  Wg_p,
  \qquad
  \mathbb E(m_t\zeta_{p,t})
  =
  0_{n_m\times1},
\]
where $\zeta_{p,t}$ is the projection residual. The fitted value is $\widetilde v_{p,t}=m_t'Wg_p$. The population two-stage least squares coefficient from regressing $v_{i,t}$ on $v_{p,t}$ using $m_t$ as the instrument vector is the ratio of the covariance of $v_{i,t}$ with the fitted value to the covariance of $v_{p,t}$ with the fitted value. Since
\[
  \mathbb E(\widetilde v_{p,t}v_{i,t})
  =
  g_p'Wg_i
  \qquad\text{and}\qquad
  \mathbb E(\widetilde v_{p,t}v_{p,t})
  =
  g_p'Wg_p,
\]
we obtain
\[
  \beta_{i\mid p}
  =
  \frac{\mathbb E(\widetilde v_{p,t}v_{i,t})}
       {\mathbb E(\widetilde v_{p,t}v_{p,t})}
  =
  \frac{g_p'Wg_i}{g_p'Wg_p},
\]
which is \eqref{eq:mp_gk_ratio}. The denominator is positive because $W$ is positive definite and $g_p\neq0_{n_m\times1}$. Under Assumption~\ref{ass:mp_proxy}, an anchor is admissible if and only if $s_p\neq0$, because \eqref{eq:mp_rank_one} gives $g_p=\gamma_ms_p$ and $\gamma_m\neq0_{n_m\times1}$.

\begin{proof}[Proof of Proposition~\ref{prop:mp_gk_equivalence}]
By \eqref{eq:mp_rank_one}, $G=s\gamma_m'$, which gives $g_i=\gamma_ms_i$ for every $i$. For any admissible anchor $p$,
\[
  \beta_{i\mid p}
  =
  \frac{g_p'Wg_i}{g_p'Wg_p}
  =
  \frac{s_ps_i\,\gamma_m'W\gamma_m}{s_p^{2}\,\gamma_m'W\gamma_m}
  =
  \frac{s_i}{s_p},
\]
where $\gamma_m'W\gamma_m>0$ because $W$ is positive definite and $\gamma_m\neq0_{n_m\times1}$. Collecting these coefficients gives $d_p=s/s_p$.

The unit-variance normalization gives $s'\Sigma^{-1}s=1$, as established in the proof of Theorem~\ref{thm:mp_identification}. Therefore,
\[
  d_p'\Sigma^{-1}d_p
  =
  \frac{s'\Sigma^{-1}s}{s_p^{2}}
  =
  \frac{1}{s_p^{2}},
  \qquad
  \frac{d_p}{\sqrt{d_p'\Sigma^{-1}d_p}}
  =
  \sgn(s_p)\,s.
\]
The scale normalization in \eqref{eq:mp_gk_scale} therefore returns $s$ up to sign, and the economic sign normalization selects the same $s$ for every admissible anchor.

For the joint procedure, Theorem~\ref{thm:mp_identification} identifies $s$ directly from $G$ and $\Sigma$. For any $q\in\Real^{n_m}$ such that $Gq\neq0_{n\times1}$, the identity $Gq=(\gamma_m'q)s$ and the normalization $s'\Sigma^{-1}s=1$ reduce the second expression in \eqref{eq:mp_identified_objects} to $\pm s$. The same economic sign normalization selects the same $s$. Both procedures therefore identify the same impact vector.

The two procedures also identify the same first equation. Since $a_1=\Sigma^{-1}s$, they identify the same contemporaneous coefficient vector. The reduced-form matrices $B_\ell=A_\ell A_0^{-1}$ are common to both procedures, and hence
\[
  a_{\ell,1}
  =
  A_\ell e_1
  =
  B_\ell a_1,
  \qquad
  \ell=1,\ldots,\cL,
\]
is the same under both. Likewise,
\[
  c_1
  =
  c'e_1
  =
  c'A_0^{-1}a_1,
\]
and $c'A_0^{-1}$ is the reduced-form constant common to both procedures. Thus, every coefficient in the first equation agrees after the common sign normalization.

Finally, the moving-average matrices $\Psi_h$ in \eqref{eq:mp_Psi} are functions of the common reduced-form matrices. Equation~\eqref{eq:mp_irf} therefore gives
\[
  \mathrm{IRF}_1(h)
  =
  s'\Psi_h
\]
for both procedures at every horizon $h\geq0$.
\end{proof}

Two features of this equivalence deserve emphasis. First, the anchor enters the two-stage construction even though the identified object does not depend on it. The fitted value $\widetilde v_{p,t}=s_pm_t'W\gamma_m$ and the ratio vector $d_p=s/s_p$ vary with $p$, and a second admissible anchor $r$ gives $d_r=(s_p/s_r)d_p$. Under the rank-one restriction, these representations are proportional, and the common scale and sign normalization maps each of them to the same $s$. Second, Proposition~\ref{prop:mp_anchor} characterizes the role of this rank-one restriction. When it fails, ratio vectors based on different anchors need not be proportional and, in general, no common normalization reconciles them. The equivalence in Proposition~\ref{prop:mp_gk_equivalence} is therefore a consequence of the maintained proxy restrictions, not a generic property of the two-stage procedure.

The same distinction has a finite-sample counterpart. Let $\widehat G$ denote the sample counterpart of $G$, formed from the estimated reduced-form innovations, and let $\widehat g_i$ denote the transpose of its $i$th row. With one instrument, each $\widehat g_i$ is a scalar and
\[
  \widehat\beta_{i\mid p}
  =
  \frac{\widehat g_i}{\widehat g_p}
\]
whenever $\widehat g_p\neq0$, so changing the anchor only rescales the same estimated direction. With more than one instrument, sampling variation generally makes $\widehat G$ have rank greater than one even when $G$ has rank one. The vectors $\widehat g_1,\ldots,\widehat g_n$ then need not share a common direction, and
\[
  \widehat\beta_{i\mid p}
  =
  \frac{\widehat g_p'\widehat W\widehat g_i}
       {\widehat g_p'\widehat W\widehat g_p}
\]
depends on the anchor through $\widehat g_p$. Different anchors therefore generally produce estimated directions that are not proportional. This finite-sample anchor dependence can arise even when the population proxy restrictions are correct, and it is distinct from a population failure of the rank-one restriction. The joint method avoids anchor dependence because it neither selects a residual anchor nor divides by an estimated anchor loading. It does not remove the need for valid and relevant instruments or for the rank-one restriction to hold in population.

The following propositions establish two implications of the joint formulation beyond identification. Proposition~\ref{prop:mp_normalization} shows that rotations of the remaining equations leave the identified equation, shock, and impulse responses unchanged. Proposition~\ref{prop:mp_efficiency} shows that joint estimation attains the asymptotic efficiency bound and weakly dominates two-stage estimation of the same object under the stated regularity conditions.

\begin{proposition} 
\label{prop:mp_normalization}
Suppose that $n\geq2$, the joint system satisfies \eqref{eq:mp_joint_covariance}, and $\gamma_m\neq0_{n_m\times1}$. Let $Q$ be any orthogonal $(n-1)\times(n-1)$ matrix and define
\[
  H
  =
  \diag(1,Q).
\]
For $\ell=0,\ldots,\cL$, define
\[
  A_\ell^*
  =
  A_\ell H,
  \qquad
  c^*
  =
  H'c,
  \qquad
  \eps_t^*
  =
  H'\eps_t.
\]
Then the transformed coefficients and shocks imply the same reduced form and satisfy the same joint covariance restriction as the original joint system. The transformation leaves the first equation, the first shock, and the impulse responses to that shock unchanged at every horizon. Consequently, the equations associated with columns $2,\ldots,n$ are not identified without additional restrictions, and their rotational indeterminacy does not affect identification of the first equation or its impulse responses.
\end{proposition}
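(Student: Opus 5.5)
The proof is a direct verification: I will check each claimed invariance algebraically, using only $H'H=HH'=I_n$ and $He_1=e_1$, $e_1'H=e_1'$. Both identities follow from the block-diagonal form $H=\diag(1,Q)$ with $Q$ orthogonal.

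\textbf{Reduced form.} Invertibility of $A_0^*=A_0H$ is immediate. I will then compute $A_\ell^*(A_0^*)^{-1}=A_\ell HH'A_0^{-1}=A_\ell A_0^{-1}=B_\ell$ and $c^{*\prime}(A_0^*)^{-1}=c'HH'A_0^{-1}=c'A_0^{-1}$. For the innovation, $\eps_t^{*\prime}(A_0^*)^{-1}=\eps_t'HH'A_0^{-1}=v_t'$. Hence $\Sigma$, the matrices $B_\ell$, and $\Psi_h$ are unchanged, and the structural equation \eqref{eq:svar_full} still holds with starred objects. This is exactly the computation already used in the necessity part of the proof of Theorem~\ref{thm:mp_identification}.

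\textbf{Joint covariance restriction.} I will compute the three blocks of \eqref{eq:mp_joint_covariance}. First, $\mathbb E(\eps_t^*\eps_t^{*\prime})=H'H=I_n$. Second, $\mathbb E(\eps_t^*m_t')=H'e_1\gamma_m'=e_1\gamma_m'$, since $H'e_1=e_1$. Third, $\Omega_m$ is untouched. So the starred system satisfies the same restriction with the same $\gamma_m$ and $\Omega_m$, and the Schur condition \eqref{eq:mp_joint_pd} is inherited. Joint normality of $(\eps_t^*,m_t)$ follows from linearity, so the joint distribution of $(y_t,m_t)$ is identical under both parameterizations.

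\textbf{First equation and impulse responses.} Using $He_1=e_1$, I get $a_1^*=A_0He_1=a_1$, $a_{\ell,1}^*=a_{\ell,1}$, $c_1^*=e_1'H'c=c_1$, and $\eps_{1t}^*=e_1'H'\eps_t=\eps_{1t}$. Consequently $\delta_m=\Omega_m^{-1}\gamma_m$ and $w_{1t}$ in \eqref{eq:mp_joint_first_equation} are unchanged. The impact vector is $s^*=\Sigma a_1^*=s$, and with the unchanged $\Psi_h$, \eqref{eq:mp_irf} gives $\mathrm{IRF}_1^*(h)=\mathrm{IRF}_1(h)$ for every $h\ge0$.

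\textbf{Nonidentification of columns $2,\ldots,n$.} This is the only step needing care, because I must show the transformation actually changes those equations. Columns $2,\ldots,n$ of $A_0^*$ are $A_0[0;I_{n-1}]Q$. Since $A_0$ is invertible, $A_0[0;I_{n-1}]$ has full column rank $n-1$. Therefore choosing any orthogonal $Q\neq I_{n-1}$ yields different columns, while the distribution of the observables is unchanged. Two distinct parameterizations are then observationally equivalent and satisfy all maintained restrictions, so those equations are not identified. When $n=2$, $Q=-1$ changes only the sign; this is the one spot where I would add a remark that the identification is up to sign, or that nonidentification holds in the usual sense for $n\ge3$. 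Because the first equation and its responses are invariant for every $Q$, this indeterminacy does not affect them.
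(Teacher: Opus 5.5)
Your proposal is correct and follows the paper's proof essentially step by step. You obtain reduced-form invariance from $HH'=I_n$, preservation of the joint covariance block from $H'e_1=e_1$, and invariance of $a_1$, $a_{\ell,1}$, $c_1$, $\eps_{1t}$, $\delta_m$, $w_{1t}$, $s$, and $\Psi_h$ from $He_1=e_1$. Your full-column-rank argument that columns $2,\ldots,n$ genuinely change, and your caveat that for $n=2$ the only choices $Q=\pm1$ change the second equation merely by sign, are sharper than the paper, which asserts without proof that the remaining equations can vary across observationally equivalent representations.
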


\begin{proof}
Theorem~4 of \citet{tZ99} establishes the block-rotation invariance underlying this result. The first equation forms one block, while the equations associated with columns $2,\ldots,n$ form the transformed block. Because the SVAR is written in row-vector form, the premultiplication in \citet{tZ99} appears here as postmultiplication of the coefficient matrices by $H$. We verify the result directly and then impose the external-instrument covariance restriction.

The matrix $H$ is orthogonal and leaves the first coordinate unchanged:
\[
  H'H
  =
  HH'
  =
  I_n,
  \qquad
  He_1
  =
  H'e_1
  =
  e_1.
\]
Postmultiplying \eqref{eq:svar_full} by $H$ gives the transformed structural representation because
\[
  A_\ell^*
  =
  A_\ell H,
  \qquad
  c^{*\prime}
  =
  c'H,
  \qquad
  \eps_t^{*\prime}
  =
  \eps_t'H.
\]

For each $\ell=1,\ldots,\cL$,
\[
  A_\ell^*(A_0^*)^{-1}
  =
  A_\ell H(A_0H)^{-1}
  =
  A_\ell HH'A_0^{-1}
  =
  A_\ell A_0^{-1}.
\]
The reduced-form constant is also unchanged:
\[
  c^{*\prime}(A_0^*)^{-1}
  =
  c'HH'A_0^{-1}
  =
  c'A_0^{-1}.
\]
Under the transformed representation, the reduced-form innovation satisfies
\[
  v_t^{*\prime}
  =
  \eps_t^{*\prime}(A_0^*)^{-1}
  =
  \eps_t'HH'A_0^{-1}
  =
  \eps_t'A_0^{-1}
  =
  v_t'.
\]
Thus, the observed-data reduced form is unchanged.

We next verify the joint covariance restriction. Since $\eps_t^*=H'\eps_t$,
\[
  \mathbb E(\eps_t^*\eps_t^{*\prime})
  =
  H'\mathbb E(\eps_t\eps_t')H
  =
  I_n.
\]
Equation~\eqref{eq:mp_joint_covariance} gives $\mathbb E(\eps_tm_t')=e_1\gamma_m'$, and hence
\[
  \mathbb E(\eps_t^*m_t')
  =
  H'\mathbb E(\eps_tm_t')
  =
  H'e_1\gamma_m'
  =
  e_1\gamma_m'.
\]
Because $m_t$ is not transformed, its covariance remains $\Omega_m$. Therefore,
\[
  \mathbb E
  \left(
  \begin{bmatrix}
    \eps_t^*\\
    m_t
  \end{bmatrix}
  \begin{bmatrix}
    \eps_t^{*\prime} & m_t'
  \end{bmatrix}
  \right)
  =
  \begin{bmatrix}
    I_n & e_1\gamma_m'\\
    \gamma_m e_1' & \Omega_m
  \end{bmatrix}.
\]
Thus, the transformed representation satisfies the same joint covariance restriction.

The first column of every coefficient matrix is unchanged. For $\ell=0,\ldots,\cL$,
\[
  A_\ell^*e_1
  =
  A_\ell He_1
  =
  A_\ell e_1.
\]
In particular,
\[
  a_1^*
  =
  a_1,
  \qquad
  a_{\ell,1}^*
  =
  a_{\ell,1},
  \qquad
  \ell=1,\ldots,\cL.
\]
The constant in the first equation is also unchanged:
\[
  c_1^*
  =
  e_1'c^*
  =
  e_1'H'c
  =
  e_1'c
  =
  c_1.
\]
The first shock is unchanged as well:
\[
  \eps_{1t}^*
  =
  e_1'\eps_t^*
  =
  e_1'H'\eps_t
  =
  \eps_{1t}.
\]

Because $\gamma_m$ and $\Omega_m$ are unchanged, $\delta_m=\Omega_m^{-1}\gamma_m$ is unchanged. Hence the projection residual
\[
  w_{1t}
  =
  \eps_{1t}-m_t'\delta_m
\]
is also unchanged. The augmented first equation \eqref{eq:mp_joint_first_equation} is therefore identical under the original and transformed representations. Moreover, \eqref{eq:mp_joint_covariance} implies
\[
  \mathbb E(w_{1t}\eps_{-1,t}')
  =
  0_{1\times(n-1)},
\]
where $\eps_{-1,t}$ collects shocks $2,\ldots,n$. Since $\eps_{-1,t}^*=Q'\eps_{-1,t}$,
\[
  \mathbb E(w_{1t}\eps_{-1,t}^{*\prime})
  =
  \mathbb E(w_{1t}\eps_{-1,t}')Q
  =
  0_{1\times(n-1)}.
\]
Thus, the orthogonality conditions in the joint system are preserved without invoking any assumption beyond \eqref{eq:mp_joint_covariance}.

It remains to verify the impulse responses. Let $s'=e_1'A_0^{-1}$. Under the transformed representation,
\[
  s^{*\prime}
  =
  e_1'(A_0^*)^{-1}
  =
  e_1'H'A_0^{-1}
  =
  e_1'A_0^{-1}
  =
  s'.
\]
The reduced-form coefficient matrices are unchanged, so the moving-average matrices satisfy $\Psi_h^*=\Psi_h$ for every $h\geq0$. Therefore,
\[
  \mathrm{IRF}_1^*(h)
  =
  s^{*\prime}\Psi_h^*
  =
  s'\Psi_h
  =
  \mathrm{IRF}_1(h)
  \qquad
  \text{for every }h\geq0.
\]
This is the specialization of the block invariance result in Theorem~4 of \citet{tZ99} to the first shock and the block of remaining equations. The external-instrument covariance restriction is preserved because the transformation fixes the first coordinate. The remaining equations can vary across observationally equivalent representations. Under the maintained relevance condition, Theorem~\ref{thm:mp_identification} identifies the first equation up to sign, and the transformation above leaves that equation, the first shock, and its impulse responses unchanged.
\end{proof}

\begin{proposition} 
\label{prop:mp_efficiency}
Suppose that the Gaussian joint proxy SVAR is correctly specified and admits a finite-dimensional, locally identified regular parameterization $\vartheta$ in a neighborhood of an interior true value $\vartheta_0$. Assume that the likelihood is locally asymptotically normal with nonsingular Fisher information $\mathcal I(\vartheta_0)$, that the posterior satisfies the Bernstein--von Mises conclusion \citep[Theorem~10.1]{vanderVaart1998} centered at the maximum-likelihood estimator $\widehat\vartheta_{\mathrm{ML}}$, and that the prior density is positive and continuously differentiable in a neighborhood of $\vartheta_0$. Assume further that there exists a sequence $\delta_T\downarrow0$ with $\sqrt{T}\delta_T\rightarrow\infty$ such that, with probability approaching one, an interior posterior mode $\widehat\vartheta_{\mathrm{PM}}$ and $\widehat\vartheta_{\mathrm{ML}}$ lie in $B(\vartheta_0,\delta_T)$, that $T^{-1}\nabla^2\ell_T(\vartheta)$ converges uniformly to $-\mathcal I(\vartheta_0)$ on this neighborhood, and that the log-prior score is bounded there. Let $\psi(\vartheta)$ be any continuously differentiable identified object common to the joint and two-stage methods, including $a_1$, $s$, or a finite collection of impulse responses to the first shock. Then the joint posterior for $\sqrt{T}\{\psi(\vartheta)-\psi(\widehat\vartheta_{\mathrm{ML}})\}$ converges to
\[
  N(0,V_{\mathrm{eff}}),
  \qquad
  V_{\mathrm{eff}}
  =
  D\psi(\vartheta_0)\mathcal I(\vartheta_0)^{-1}D\psi(\vartheta_0)'.
\]
Under the mode-localization and uniform-Hessian conditions,
\[
  \widehat\vartheta_{\mathrm{PM}}-\widehat\vartheta_{\mathrm{ML}}
  =
  O_p(T^{-1})
  =
  o_p(T^{-1/2}).
\]
If a two-stage estimator $\widehat\psi_{\mathrm{2S}}$ is regular and asymptotically linear for the same object under the same joint model and has asymptotic covariance $V_{\mathrm{2S}}$, then
\[
  V_{\mathrm{2S}}-V_{\mathrm{eff}}
\]
is positive semidefinite. Equality holds if and only if the influence function of $\widehat\psi_{\mathrm{2S}}$ equals the efficient influence function almost surely.
\end{proposition}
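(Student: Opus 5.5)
The plan has three parts: the posterior limit, the mode--MLE gap, and the efficiency comparison. Each is a standard large-sample step placed on the regular parameterization $\vartheta$ supplied by the hypotheses.

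\emph{Posterior limit.} I would start from the Bernstein--von Mises conclusion, which is taken as given. It says that the posterior of $\sqrt{T}(\vartheta-\widehat\vartheta_{\mathrm{ML}})$ converges in total variation to $N(0,\mathcal I(\vartheta_0)^{-1})$. Since $\psi$ is continuously differentiable, a posterior delta method applies. Write $\sqrt{T}\{\psi(\vartheta)-\psi(\widehat\vartheta_{\mathrm{ML}})\}=D\psi(\bar\vartheta)\sqrt{T}(\vartheta-\widehat\vartheta_{\mathrm{ML}})$ by the mean value theorem, row by row. Posterior consistency together with consistency of $\widehat\vartheta_{\mathrm{ML}}$ gives $D\psi(\bar\vartheta)\to D\psi(\vartheta_0)$ in posterior probability. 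Slutsky then delivers $N(0,V_{\mathrm{eff}})$. The objects $a_1$, $s$, and finitely many impulse responses are smooth in $\vartheta$ by the explicit formulas in Theorem~\ref{thm:mp_identification} and \eqref{eq:mp_irf}, so the result covers them.

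\emph{Mode versus MLE.} With probability approaching one both points are interior to $B(\vartheta_0,\delta_T)$, so both score equations hold there: $\nabla\ell_T(\widehat\vartheta_{\mathrm{ML}})=0$ and $\nabla\ell_T(\widehat\vartheta_{\mathrm{PM}})+\nabla\log\pi(\widehat\vartheta_{\mathrm{PM}})=0$. Expanding $\nabla\ell_T$ between the two points gives
\[
  T^{-1}\nabla^2\ell_T(\tilde\vartheta)\,(\widehat\vartheta_{\mathrm{PM}}-\widehat\vartheta_{\mathrm{ML}})=-T^{-1}\nabla\log\pi(\widehat\vartheta_{\mathrm{PM}}).
\]
This is an integral-form expansion, since the mean value theorem fails for vectors. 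By the uniform Hessian convergence, the left matrix converges to $-\mathcal I(\vartheta_0)$ and is invertible with probability approaching one. The bounded log-prior score makes the right side $O_p(T^{-1})$, which yields the stated rate. This step also justifies reporting mode-based quantities.

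\emph{Efficiency comparison.} This is the main obstacle, because it needs the convolution or semiparametric-efficiency argument under LAN. Since $\widehat\psi_{\mathrm{2S}}$ is regular and asymptotically linear under the same model, its influence function $\phi$ satisfies the pathwise-derivative identity $E[\phi\,\dot\ell']=D\psi(\vartheta_0)$, where $\dot\ell$ is the score; this is exactly what regularity delivers. The efficient influence function is $\phi^*=D\psi\,\mathcal I^{-1}\dot\ell$. Then $E[(\phi-\phi^*)\phi^{*\prime}]=D\psi\,\mathcal I^{-1}D\psi'-D\psi\,\mathcal I^{-1}D\psi'=0$. It follows that $V_{\mathrm{2S}}=E\phi\phi'=V_{\mathrm{eff}}+E[(\phi-\phi^*)(\phi-\phi^*)']$, which is a positive semidefinite excess. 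This difference vanishes if and only if $\phi=\phi^*$ almost surely. Alternatively, I could cite H\'ajek's convolution theorem (van der Vaart, Theorem~8.8) directly.
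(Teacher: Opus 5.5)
Your proposal is correct and follows the paper's proof: the same first-order-condition expansion (with uniform Hessian convergence and a bounded log-prior score) for the mode--MLE gap, Bernstein--von Mises plus the delta method for the posterior limit, and the parametric information bound for the efficiency comparison. Two of your steps are more careful than the paper's. You use an integral-form averaged Hessian, whereas the paper picks a single intermediate point, which is not valid for vector-valued maps. You also spell out the decomposition $V_{\mathrm{2S}}=V_{\mathrm{eff}}+E[(\phi-\phi^*)(\phi-\phi^*)']$, which the paper only invokes by citation.
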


\begin{proof}
Let $b(\vartheta)=\nabla\log\pi(\vartheta)$ denote the log-prior score. The first-order conditions for the posterior mode and the maximum-likelihood estimator imply
\[
  0
  =
  \nabla\ell_T(\widehat\vartheta_{\mathrm{PM}})
  +
  b(\widehat\vartheta_{\mathrm{PM}})
  =
  \nabla^2\ell_T(\widetilde\vartheta)
  (\widehat\vartheta_{\mathrm{PM}}-\widehat\vartheta_{\mathrm{ML}})
  +
  b(\widehat\vartheta_{\mathrm{PM}})
\]
for some $\widetilde\vartheta$ on the line segment joining $\widehat\vartheta_{\mathrm{PM}}$ and $\widehat\vartheta_{\mathrm{ML}}$. The uniform Hessian limit, nonsingularity of $\mathcal I(\vartheta_0)$, and bounded log-prior score therefore yield
\[
  \widehat\vartheta_{\mathrm{PM}}-\widehat\vartheta_{\mathrm{ML}}
  =
  O_p(T^{-1})
  =
  o_p(T^{-1/2}).
\]
The assumed Bernstein--von Mises limit and the delta method give the stated posterior limit for $\psi(\vartheta)$. The parametric information bound for regular estimators under the same correctly specified joint model gives
\[
  V_{\mathrm{2S}}
  -
  D\psi(\vartheta_0)\mathcal I(\vartheta_0)^{-1}D\psi(\vartheta_0)'
  \succeq
  0.
\]
For a regular asymptotically linear estimator, equality holds if and only if its influence function equals the efficient influence function almost surely.
\end{proof}

\section{Full Sets of Impulse Responses}\label{sec:full_irfs}
Figures~\ref{fig:selected_irfs1}-\ref{fig:selected_irfs4} display the full set of impulse responses for the selected 13-variable system for the boom and bust application. Figures~\ref{fig:bpss_orig_irfs1}-\ref{fig:bpss_orig_irfs4} display the full set of replicated impulse responses for the original 10-variable heteroskedastic system of BPSS. The replication uses updated monthly data and the monthly Sims-Zha prior, which is similar to the prior used by BPSS. To be compatible with BPSS's original article, the scale of impulse responses is not adjusted to represent percent or percentage points. 

\begin{figure}[htbp]
\centering
\includegraphics[width=\textwidth]{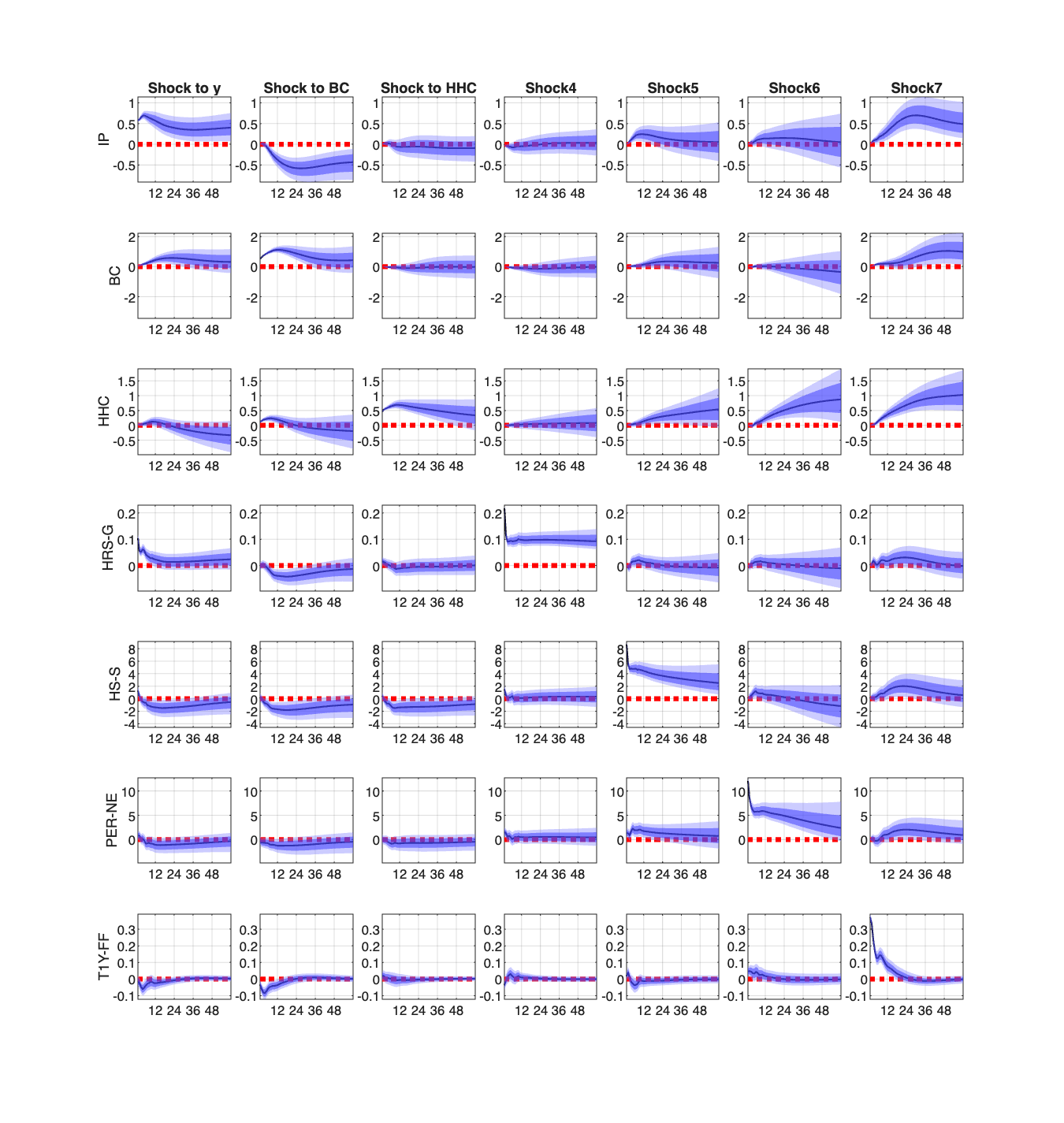}
\caption{Part 1 of the full set of impulse responses for the selected 13-variable system. All impulse responses are expressed in percent. The dark band shows the 68\% posterior credible band and the light band the 90\% posterior credible band. Table~\ref{tab:mp_model11_variables} provides complete descriptions of the variable labels shown on the y-axis.}
\label{fig:selected_irfs1}
\end{figure}

\begin{figure}[htbp]
\centering
\includegraphics[width=\textwidth]{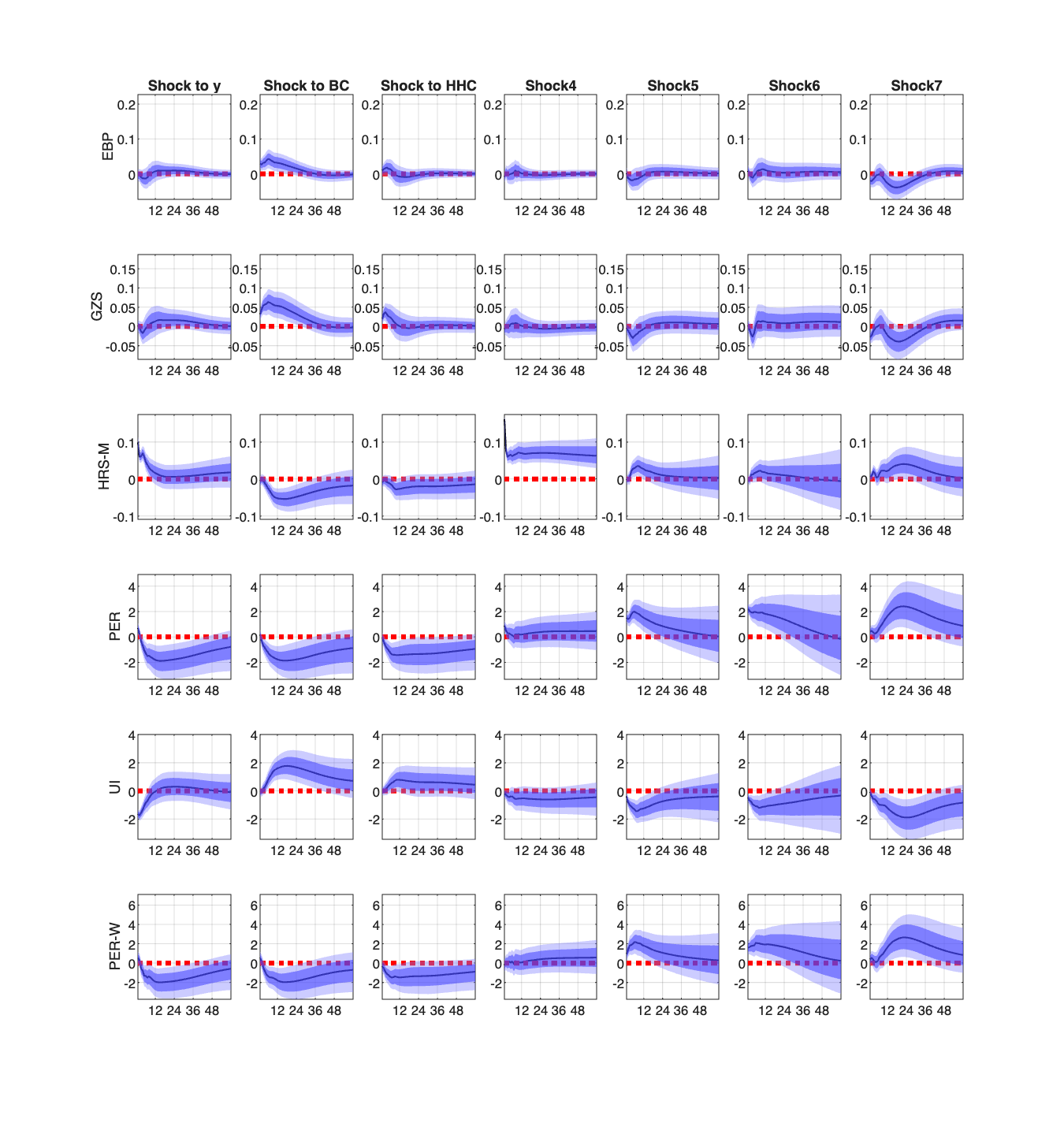}
\caption{Part 2 of the full set of impulse responses for the selected 13-variable system. All impulse responses are expressed in percent. The dark band shows the 68\% posterior credible band and the light band the 90\% posterior credible band. Table~\ref{tab:mp_model11_variables} provides complete descriptions of the variable labels shown on the y-axis.}
\label{fig:selected_irfs2}
\end{figure}

\begin{figure}[htbp]
\centering
\includegraphics[width=\textwidth]{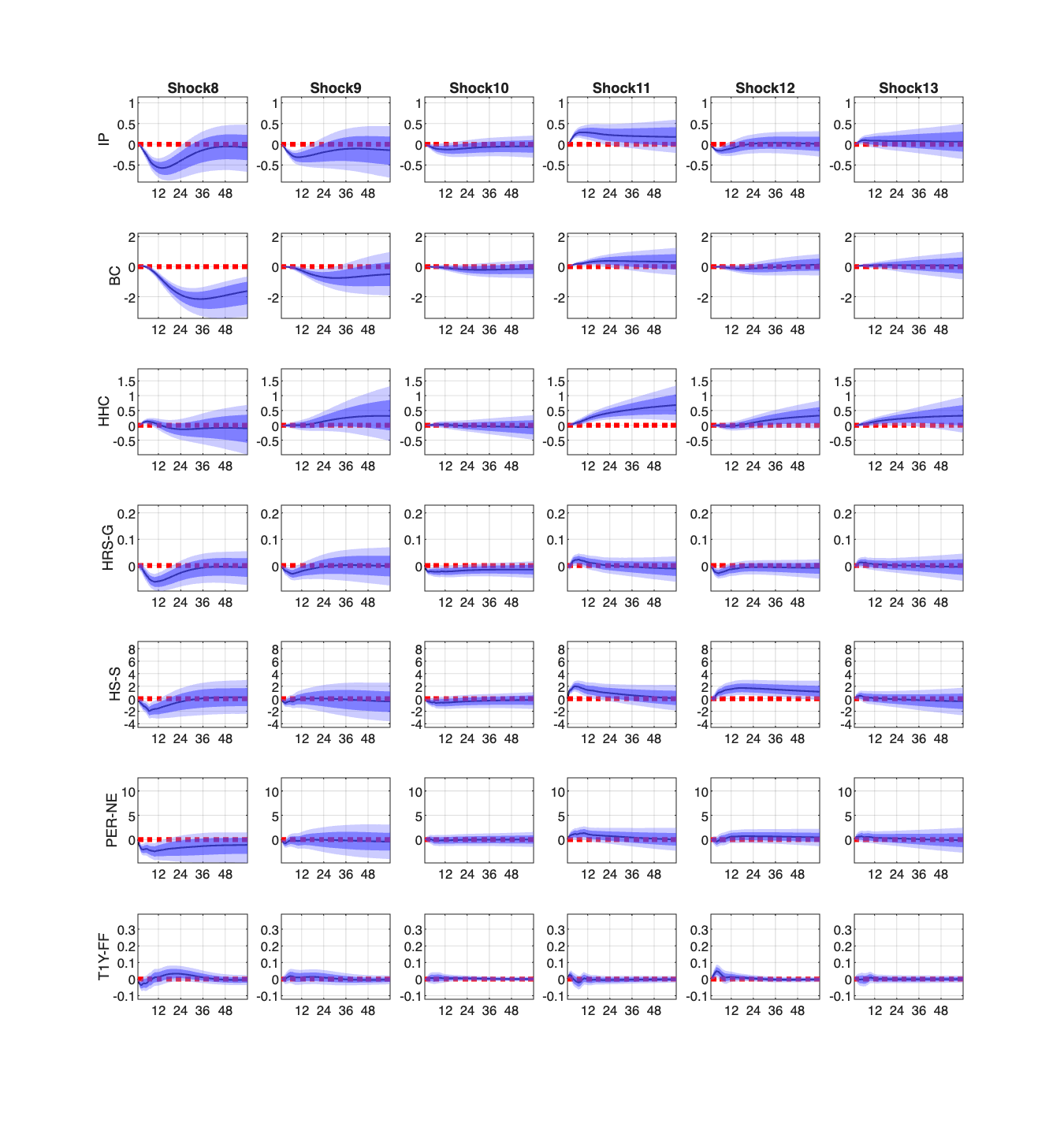}
\caption{Part 3 of the full set of impulse responses for the selected 13-variable system. All impulse responses are expressed in percent. The dark band shows the 68\% posterior credible band and the light band the 90\% posterior credible band. Table~\ref{tab:mp_model11_variables} provides complete descriptions of the variable labels shown on the y-axis.}
\label{fig:selected_irfs3}
\end{figure}

\begin{figure}[htbp]
\centering
\includegraphics[width=\textwidth]{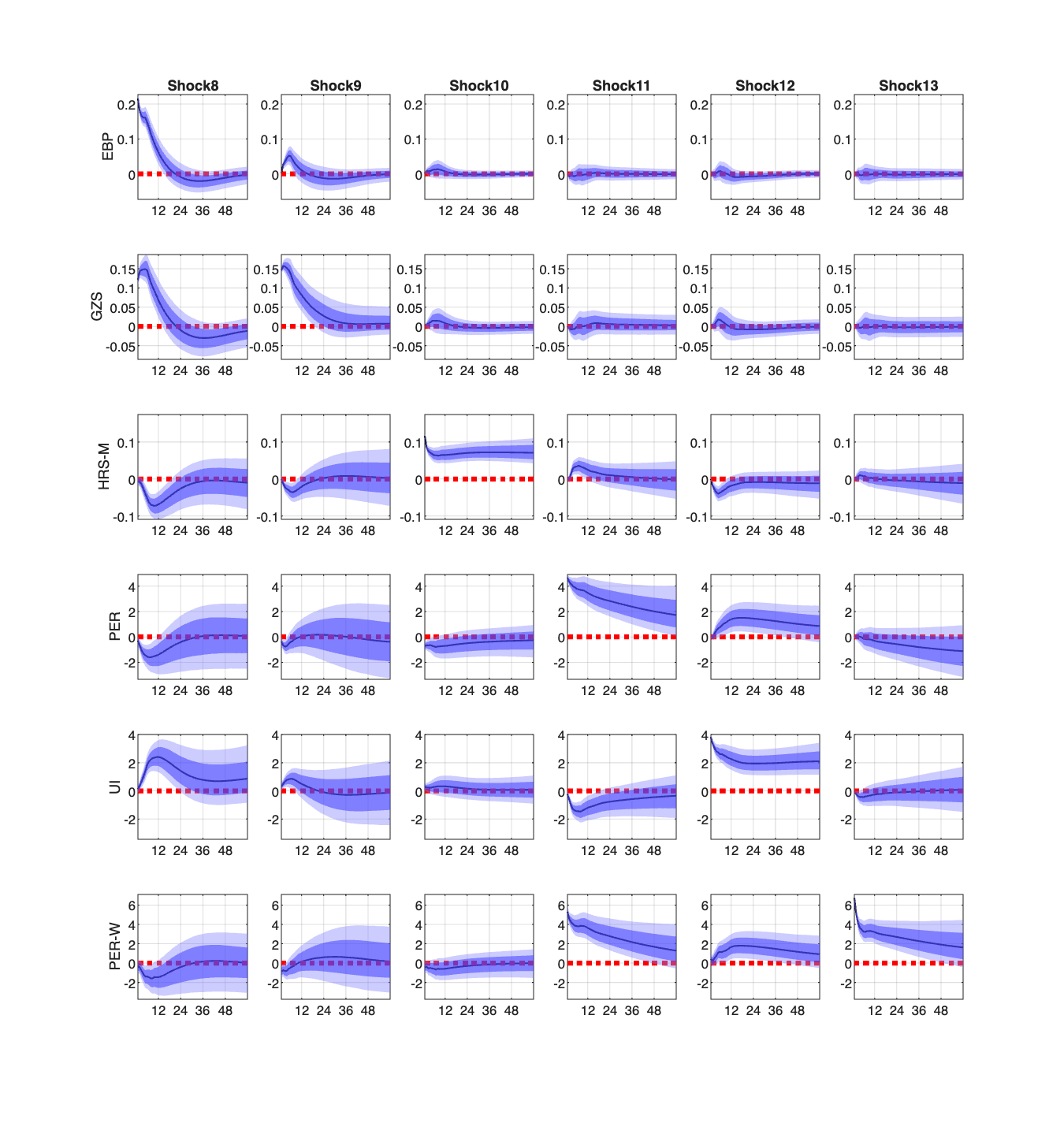}
\caption{Part 4 of the full set of impulse responses for the selected 13-variable system. All impulse responses are expressed in percent. The dark band shows the 68\% posterior credible band and the light band the 90\% posterior credible band. Table~\ref{tab:mp_model11_variables} provides complete descriptions of the variable labels shown on the y-axis.}
\label{fig:selected_irfs4}
\end{figure}

\begin{figure}[htbp]
\centering
\includegraphics[width=\textwidth]{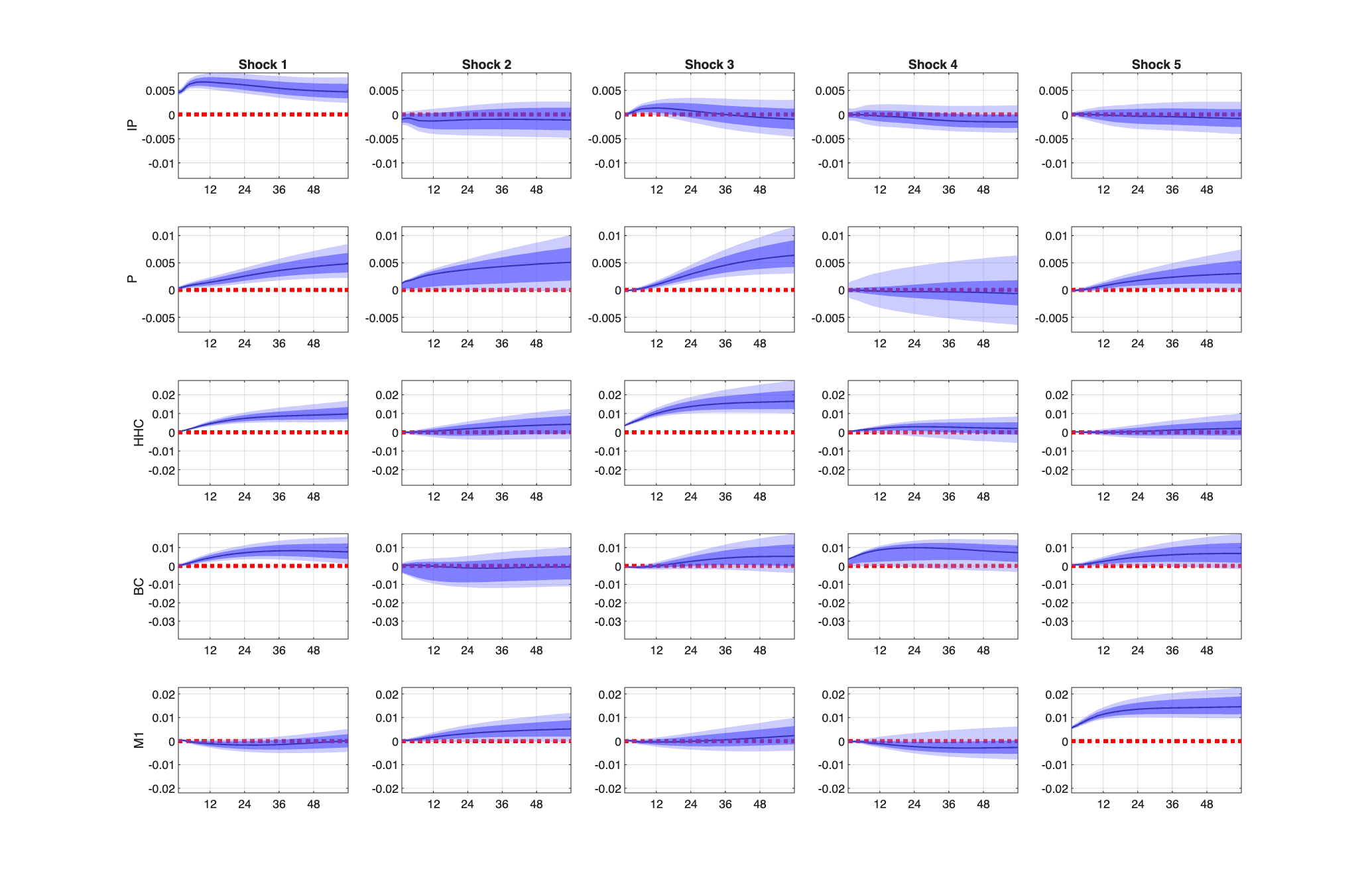}
\caption{Part 1 of the full set of impulse responses for the original 10-variable heteroskedastic system of BPSS. The dark band shows the 68\% posterior credible band and the light band the 90\% posterior credible band. The y-axis labels follow BPSS, who provide complete variable descriptions.}
\label{fig:bpss_orig_irfs1}
\end{figure}

\begin{figure}[htbp]
\centering
\includegraphics[width=\textwidth]{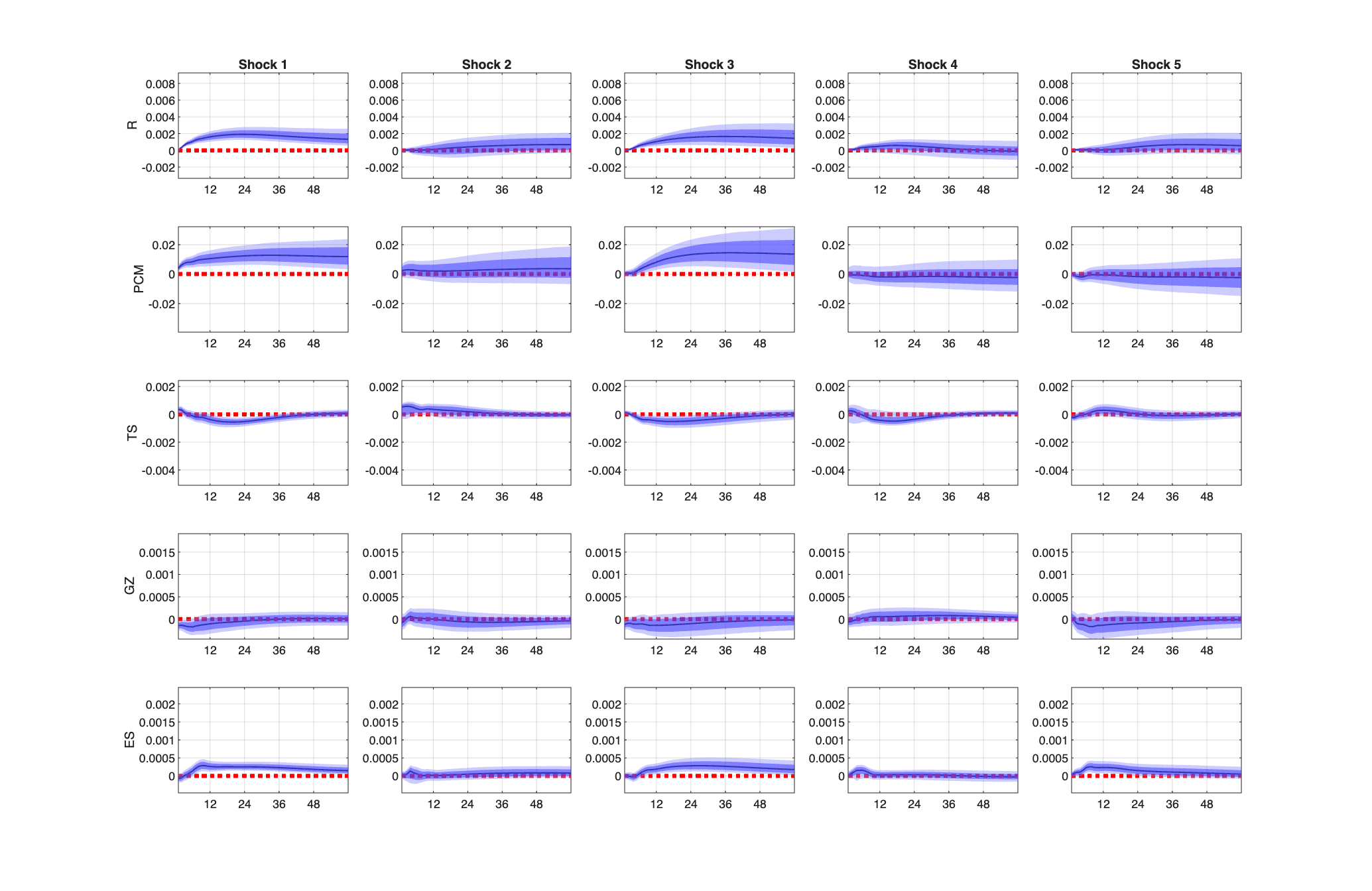}
\caption{Part 2 of the full set of impulse responses for the original 10-variable heteroskedastic system of BPSS. The dark band shows the 68\% posterior credible band and the light band the 90\% posterior credible band. The y-axis labels follow BPSS, who provide complete variable descriptions.}
\label{fig:bpss_orig_irfs2}
\end{figure}

\begin{figure}[htbp]
\centering
\includegraphics[width=\textwidth]{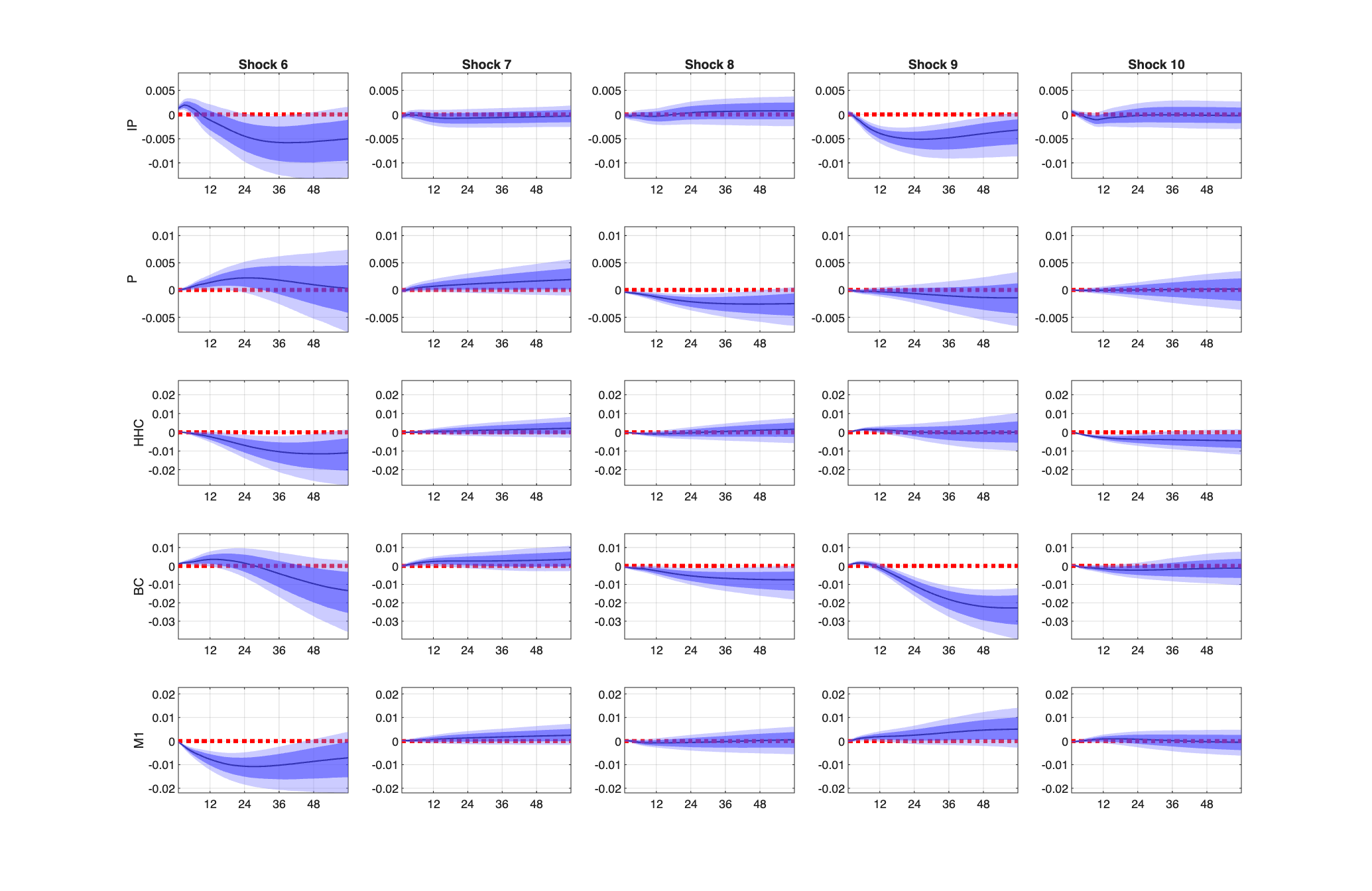}
\caption{Part 3 of the full set of impulse responses for the original 10-variable heteroskedastic system of BPSS. The dark band shows the 68\% posterior credible band and the light band the 90\% posterior credible band. The y-axis labels follow BPSS, who provide complete variable descriptions.}
\label{fig:bpss_orig_irfs3}
\end{figure}

\begin{figure}[htbp]
\centering
\includegraphics[width=\textwidth]{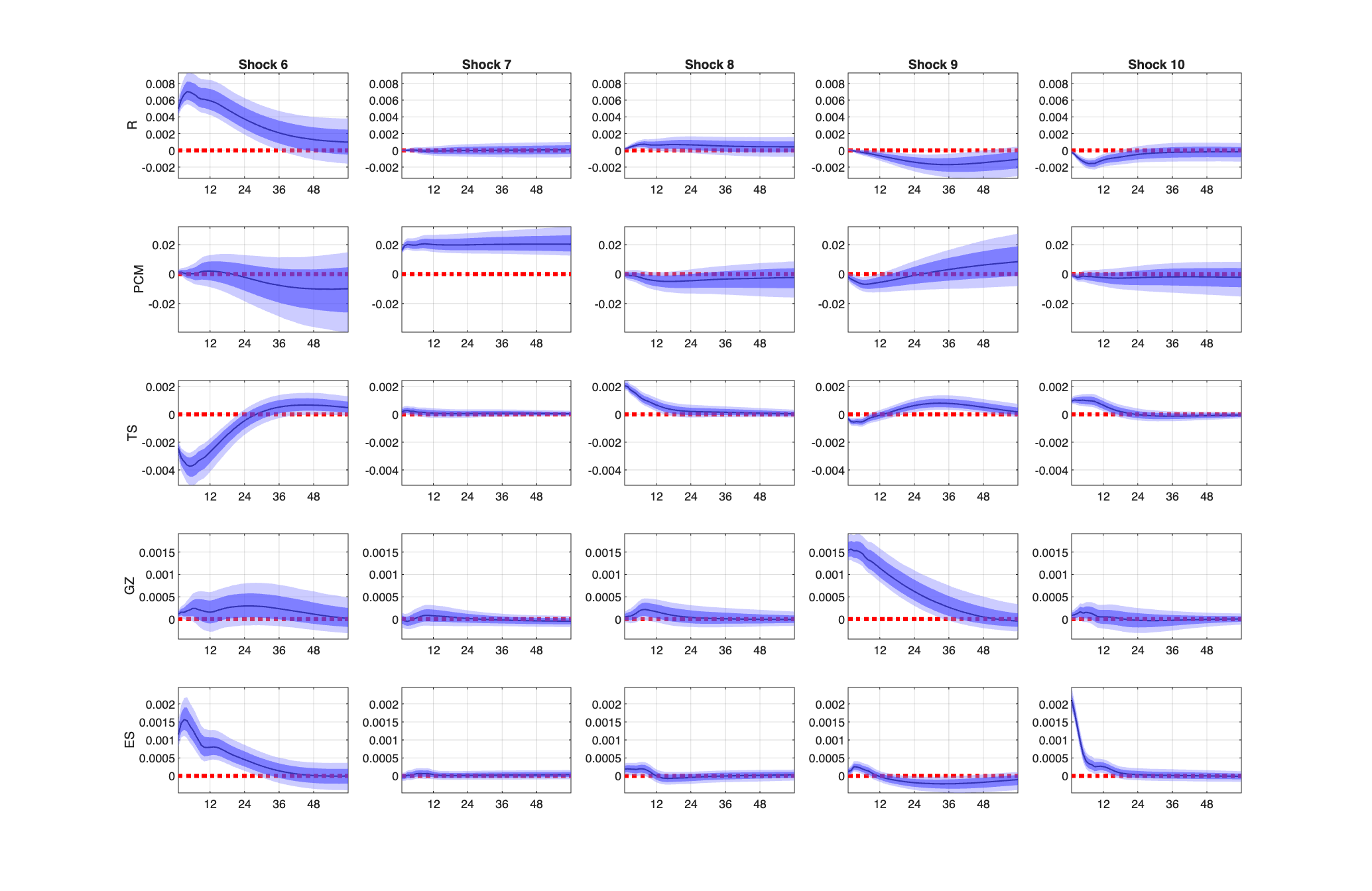}
\caption{Part 4 of the full set of impulse responses for the original 10-variable heteroskedastic system of BPSS. The dark band shows the 68\% posterior credible band and the light band the 90\% posterior credible band. The y-axis labels follow BPSS, who provide complete variable descriptions.}
\label{fig:bpss_orig_irfs4}
\end{figure}

\clearpage
\bibliographystyle{plainnat}
\bibliography{tzref3}
\end{document}